\numberwithin{equation}{section}
\newcommand{\D}{\displaystyle}
\newcommand{\car}{\mathbf{1}}
\newcommand{\R}{{\mathbb R}}
\newcommand{\Z}{{\mathbb Z}}
\newcommand{\N}{{\mathbb N}}
\newcommand{\C}{{\mathbb C}}
\newcommand\E{\mathbb E}
\newcommand{\esp}{{\mathbb E}}
\renewcommand\P{\mathbb P}
\newcommand{\dist}{{\rm dist}\,}
\newcommand{\vers}{\operatornamewithlimits{\to}}
\theoremstyle{plain}
\newtheorem{Th}{Theorem}[section]
\newtheorem{Le}{Lemma}[section]
\newtheorem{Pro}{Proposition}[section]
\theoremstyle{definition}
\newtheorem{Rem}{Remark}[section]
\title{Decorrelation estimates for the eigenlevels of the discrete
  Anderson model in the localized regime}
\author{Fr{\'e}d{\'e}ric Klopp}
\address[Fr{\'e}d{\'e}ric Klopp]{LAGA, U.M.R. 7539 C.N.R.S, Institut Galil{\'e}e,
  Universit{\'e} Paris-Nord, 99 Avenue J.-B. Cl{\'e}ment, F-93430
  Villetaneuse, France\ et \ Institut Universitaire de France}
\email{\href{mailto:klopp@math.univ-paris13.fr}{klopp@math.univ-paris13.fr}}
\keywords{random Schr{\"o}dinger operators, renormalized local
  eigenvalues, decorrelation estimates} 
\subjclass[2000]{81Q10,47B80,60H25,82D30}
\begin{document}
\thanks{The author is supported by the grant ANR-08-BLAN-0261-01}
\begin{abstract}
  The purpose of the present work is to establish decorrelation
  estimates for the eigenvalues of the discrete Anderson model
  localized near two distinct energies inside the localization
  region. In dimension one, we prove these estimates at all
  energies. In higher dimensions, the energies are required to be
  sufficiently far apart from each other. As a consequence of these
  decorrelation estimates, we obtain the independence of the limits of
  the local level statistics at two distinct energies.
  \vskip.5cm\noindent \textsc{R{\'e}sum{\'e}.} Dans ce travail, nous
  {\'e}tablissons des in{\'e}galit{\'e}s de d{\'e}corr{\'e}lation pour les valeurs propres
  proches de deux {\'e}nergies distinctes. En dimension 1, nous d{\'e}montrons
  que ces in{\'e}galit{\'e}s sont vraies quel que soit le choix de ces deux
  {\'e}nergies. En dimension sup{\'e}rieure, il nous faut supposer que les
  deux {\'e}nergies sont suffisamment {\'e}loign{\'e}es l'une de l'autre. Comme
  cons{\'e}quence de ces in{\'e}galit{\'e}s de d{\'e}corr{\'e}lation, nous d{\'e}montrons que
  les limites des statistiques locales des valeurs propres sont
  ind{\'e}pendantes pour deux {\'e}nergies distinctes.
\end{abstract}
\maketitle
\section{Introduction}
\label{intro}
On $\ell^2(\Z^d)$, consider the random Anderson model
\begin{equation*}
  H_{\omega}=-\Delta+V_\omega
\end{equation*}
where $-\Delta$ is the free discrete Laplace operator
\begin{equation}
  \label{eq:29}
  (-\Delta u)_n=\sum_{|m-n|=1}u_m\quad\text{ for }
  u=(u_n)_{n\in\Z^d}\in\ell^2(\Z^d)
\end{equation}
and $V_\omega$ is the random potential
\begin{equation}
  \label{eq:54}
  (V_\omega u)_n=\omega_n u_n\quad\text{ for }u=(u_n)_{n\in\Z^d}
  \in\ell^2(\Z^d).
\end{equation}
We assume that the random variables $(\omega_n)_{n\in\Z^d}$ are
independent identically distributed and that their common distribution
admits a compactly supported bounded density, say $g$.\\
It is then well known (see e.g.~\cite{MR2509110}) that
\begin{itemize}
\item let $\Sigma:=[-2d,2d]+$supp$\,g$ and $S_-$ and $S_+$ be the
  infimum and supremum of $\Sigma$; for almost every
  $\omega=(\omega_n)_{n\in\Z^d}$, the spectrum of $H_{\omega}$ is
  equal to $\Sigma$;
\item for some $S_-<s_-\leq s_+<S_+$, the intervals $I_-=[S_-,s_-)$
  and $I_+=(s_+,S_+]$ are contained in the region of localization for
  $H_\omega$ i.e. the region of $\Sigma$ where the finite volume
  fractional moment criteria of~\cite{MR2002h:82051} are verified for
  restrictions of $H_\omega$ to sufficiently large cubes (see also
  Proposition~\ref{pro:1}). In particular, $I:=I_-\cup I_+$ contains
  only pure point spectrum associated to exponentially decaying
  eigenfunctions; for the precise meaning of the region of
  localization, we refer to section~\ref{sec:localized-regime}; if the
  disorder is sufficiently large or if the dimension $d=1$ then, one
  can pick $I=\Sigma$;
\item there exists a bounded density of states, say
  $\lambda\mapsto\nu(E)$, such that, for any continuous function
  $\varphi:\ \R\to\R$, one has
  \begin{equation}
    \label{eq:10}
    \int_\R\varphi(E)\nu(E)dE=
    \mathbb{E}(\langle\delta_0,\varphi(H_\omega)\delta_0\rangle).
  \end{equation}
  Here, and in the sequel, $\mathbb{E}(\cdot)$ denotes the expectation
  with respect to the random parameters, and $\P(\cdot)$ the
  probability measure they induce.\\
  Let $N$ be the integrated density of states of $H_\omega$ i.e. $N$
  is the distribution function of the measure $\nu(E)dE$. The function
  $\nu$ is only defined $E$-almost everywhere. In the sequel, when we
  speak of $\nu(E)$ for some $E$, we mean that the non decreasing
  function $N$ is differentiable at $E$ and that $\nu(E)$ is its
  derivative at $E$.
\end{itemize}
\subsection{The results}
\label{sec:results}
For $L\in\N$, let $\Lambda=\Lambda_L=[-L,L]^d$ be a large box and
$N:=\#\Lambda_L=(2L+1)^d$ be its cardinality. Let
$H_{\omega}(\Lambda)$ be the operator $H_\omega$ restricted to
$\Lambda$ with periodic boundary conditions. The notation
$|\Lambda|\to+\infty$ is a shorthand for considering
$\Lambda=\Lambda_L$ in the limit $L\to+\infty$. Let us denote the
eigenvalues of $H_{\omega}(\Lambda)$ ordered increasingly and repeated
according to multiplicity by $E_1(\omega,\Lambda)\leq
E_2(\omega,\Lambda)\leq \cdots\leq E_N(\omega,\Lambda)$.
\par Let $E$ be an energy in $I$ such that $\nu(E)>0$.  The local
level statistics near $E$ is the point process defined by
\begin{equation}
  \label{eq:13}
  \Xi(\xi,E,\omega,\Lambda) = 
  \sum_{n=1}^N \delta_{\xi_n(E,\omega,\Lambda)}(\xi)
\end{equation}
where
\begin{equation}
  \label{eq:11}
  \xi_n(E,\omega,\Lambda)=|\Lambda|\,\nu(E)\,(E_n(\omega,\Lambda)-
  E),\quad 1\leq n\leq N.
\end{equation}
One of the most striking results describing the localization regime
for the Anderson model is
\begin{Th}[\cite{MR97d:82046}]
  \label{thr:3}
  Assume that $E\in I$ be such that $\nu(E)>0$.\\
  When $|\Lambda|\to+\infty$, the point process
  $\Xi(\cdot,E,\omega,\Lambda)$ converges weakly to a Poisson process
  on $\R$ with intensity the Lebesgue measure i.e. for $(U_j)_{1\leq
    j\leq J}$, $U_j\subset\R$ bounded measurable and $U_{j'}\cap
  U_j=\emptyset$ if $j\not=j'$ and $(k_j)_{1\leq j\leq J}\in\N^J$, one
  has
  \begin{equation*}
    \mathbb{P}\left(\left\{\omega;\
        \begin{cases}
          &\#\{j;\xi_n(E,\omega,\Lambda)\in U_1\}=k_1\\
          &\quad\vdots\hskip3cm\vdots\\
          &\#\{j;\xi_n(E,\omega,\Lambda)\in U_J\}=k_J
        \end{cases}
      \right\}
    \right)\vers_{\Lambda\to\Z^d}\prod_{j=1}^J
    e^{-|U_j|}\frac{|U_j|^{k_j}}{k_j!}.
  \end{equation*}
\end{Th}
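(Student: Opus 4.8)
The plan is to follow the classical route of Molchanov and Minami: reduce the problem to an intermediate scale, use the independence of the potential on disjoint regions to replace $H_\omega(\Lambda_L)$ by a direct sum of box Hamiltonians, and then invoke a law of rare events for point processes. Fix $E\in I$ with $\nu(E)>0$. After the rescaling~\eqref{eq:11}, for bounded $U\subset\R$ the quantity $\#\{n:\ \xi_n(E,\omega,\Lambda_L)\in U\}$ counts the eigenvalues of $H_\omega(\Lambda_L)$ lying in $E+(\nu(E)|\Lambda_L|)^{-1}U$, a set of Lebesgue measure $\asymp|\Lambda_L|^{-1}$. Choose an intermediate scale $\ell_L$ with $\ell_L\to\infty$ and $\ell_L/L\to0$, say $\ell_L=\lfloor L^{\alpha}\rfloor$ with $0<\alpha<1$; tile $\Lambda_L$ by $K_L\asymp(L/\ell_L)^d$ disjoint translates $\Lambda^{(1)},\dots,\Lambda^{(K_L)}$ of $\Lambda_{\ell_L}$; put $\widehat H_\omega=\bigoplus_{j}H_\omega(\Lambda^{(j)})$ and let $\widehat\Xi=\sum_j\Xi_j$, where $\Xi_j$ is obtained from the eigenvalues of $H_\omega(\Lambda^{(j)})$ rescaled exactly as in~\eqref{eq:11}.

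\textbf{Step 1: decoupling.} One shows $\Xi(\cdot,E,\omega,\Lambda_L)$ and $\widehat\Xi$ have the same weak limit; this is where localization on $I$ enters. By the finite-volume fractional moment bounds together with a Wegner estimate, with probability tending to $1$ every eigenfunction of $H_\omega(\Lambda_L)$ with eigenvalue in the window is exponentially concentrated about a localization center; discard those eigenvalues whose center lies within $\sqrt{\ell_L}$ of the walls of the tiling — by Wegner their expected number in the window is $O(\ell_L^{-1/2})\to0$. For the remaining ones, exponential decay shows the eigenfunction agrees, up to an error $O(e^{-\gamma\sqrt{\ell_L}})$, with an eigenfunction of exactly one $H_\omega(\Lambda^{(j)})$, and conversely; applying the Minami estimate on $\Lambda_L$ to exclude pairs of eigenvalues closer than $e^{-\gamma\sqrt{\ell_L}}$ — an event of probability $O(|\Lambda_L|e^{-\gamma\sqrt{\ell_L}})\to0$ — one obtains, off an event of vanishing probability, a matching between the eigenvalues of $H_\omega(\Lambda_L)$ and of $\widehat H_\omega$ in the window with matched pairs differing by $O(e^{-\gamma\sqrt{\ell_L}})$. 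After multiplication by $\nu(E)|\Lambda_L|$ this displacement is still $o(1)$, so the two rescaled processes share the same weak limit and it suffices to study $\widehat\Xi$.

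\textbf{Step 2: the Poisson limit.} The $\Xi_j$ are independent, the $(\omega_n)_{n\in\Lambda^{(j)}}$ being independent in $j$, so for $f\ge0$ continuous with compact support the Laplace functional factorizes, $\mathbb E\,e^{-\widehat\Xi(f)}=\prod_j\mathbb E\,e^{-\Xi_j(f)}$, and one is in the setting of the Poisson limit theorem for a null array of point processes. Two inputs suffice. (i) The array is null and each $\Xi_j$ is asymptotically a Bernoulli point mass: by Wegner, $\max_j\P(\Xi_j(\supp f)\ge1)=O((\ell_L/L)^d)\to0$, and by the \emph{Minami estimate}, $\mathbb E\big[\tr\car_{J}(H_\omega(\Lambda^{(j)}))\,(\tr\car_{J}(H_\omega(\Lambda^{(j)}))-1)\big]\le(C\,|J|\,\ell_L^d)^2$ with $J$ the energy window matching $\supp f$, so $\sum_j\P(\Xi_j\ \text{charges a point at least twice})=O(\ell_L^d|\Lambda_L|^{-1})\to0$. (ii) The intensity converges: since $N$ is differentiable at $E$, \eqref{eq:10} and Wegner give $\mathbb E(\widehat\Xi(U))=\sum_j\mathbb E\big(\tr\car_{E+(\nu(E)|\Lambda_L|)^{-1}U}(H_\omega(\Lambda^{(j)}))\big)\to|U|$ (the expected count being asymptotically $\nu(E)|\Lambda_L|$ times the window length $|U|(\nu(E)|\Lambda_L|)^{-1}$), the decoupling of Step~1 preserving this to leading order. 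Combining (i) and (ii), $\sum_j(1-\mathbb E\,e^{-\Xi_j(f)})=\mathbb E\big(\widehat\Xi(1-e^{-f})\big)+o(1)\to\int(1-e^{-f(x)})\,dx$, hence $\mathbb E\,e^{-\widehat\Xi(f)}\to\exp\big(-\int(1-e^{-f(x)})\,dx\big)$, the Laplace functional of the Poisson process of intensity $dx$ (equivalently: asymptotic simplicity plus limiting intensity equal to Lebesgue measure is Kallenberg's criterion). Finally, independence of the $\Xi_j$ makes the joint law over pairwise disjoint $U_1,\dots,U_J$ converge to the product of Poisson laws, which is exactly the assertion of the theorem.

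\textbf{The main obstacle.} It is Step 1, the decoupling: one must bound, uniformly in $L$, the probability that an eigenvalue of $H_\omega(\Lambda_L)$ near $E$ is not faithfully reproduced by a single small box. This forces one to combine the exponential off-diagonal decay of resolvents and eigenfunction correlators supplied by the fractional moment method with both the Wegner estimate — to discard the thin boundary layer — and the Minami estimate on $\Lambda_L$ — to exclude near-degenerate pairs — and then to verify that all the resulting errors, only stretched-exponentially small in $\ell_L$, remain negligible after rescaling by $|\Lambda_L|$, which is why $\ell_L$ must be chosen neither too small nor too large. The single deepest input is nevertheless the Minami estimate of Step~2(i): it is the quadratic, rather than merely linear (Wegner), bound that forbids two eigenvalues in one cube and is thus responsible for the simplicity of the limiting process; without it one would still obtain convergence to a point process with the right intensity, but possibly carrying multiple points.
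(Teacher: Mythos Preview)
The paper does not contain its own proof of this statement. Theorem~\ref{thr:3} is quoted from Minami's original article~\cite{MR97d:82046} and used as a black box; at the start of Section~\ref{sec:proofs-theorems} the author says explicitly: ``Here, we only prove Theorems~\ref{thr:4} and~\ref{thr:18} independently of the values of the limits in Theorem~\ref{thr:3}.'' So there is nothing in the paper to compare your argument against.

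That said, your outline is a faithful sketch of the Molchanov--Minami strategy and is correct in spirit. The decoupling machinery you describe in Step~1 is essentially what the paper \emph{does} set up in Lemmas~\ref{le:7}, \ref{le:16} and~\ref{le:15} for its own purposes (proving the two-energy results), so the ingredients you invoke---Proposition~\ref{pro:1} for exponential localization of eigenfunctions, the Wegner estimate~\eqref{eq:1} to discard boundary layers, and the Minami estimate~\eqref{eq:2} to rule out near-degeneracies---are exactly the ones the paper isolates. Your Step~2 is the standard null-array Poisson limit; the identification of the Minami estimate as the mechanism forcing simplicity of the limit is the right emphasis.

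One minor quibble: in Step~1 you quantify the buffer zone as width $\sqrt{\ell_L}$, which works, but the paper's version (Lemma~\ref{le:7}) uses a multiplicative buffer $\varepsilon\ell$; either choice is fine provided the error $|\Lambda_L|\,e^{-\gamma\cdot(\text{buffer width})}$ goes to zero, which your choice of $\ell_L=L^\alpha$ ensures.
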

\noindent An analogue of Theorem~\ref{thr:3} was first proved
in~\cite{MR673165} for a different one-dimensional random operator.\\
Once Theorem~\ref{thr:3} is known, a natural question arises:
\begin{itemize}
\item for $E\not=E'$, are the limits of $\Xi(\xi,E,\omega,\Lambda)$
  and $\Xi(\xi,E',\omega,\Lambda)$ stochastically independent?
\end{itemize}
This question has arisen and has been answered for other types of
random operators like random matrices (see e.g.~\cite{MR2129906}); in
this case, the local statistics are not Poissonian.\\
For the Anderson model, this question has been open (see
e.g.~\cite{Mi:08,MR2280381}) and to the best of our knowledge, the
present paper is the first to bring an answer. The conjecture is also
open for the continuous Anderson model and random CMV matrices where
the local statistics have also been proved to be Poissonian (see
e.g.~\cite{CGK:10,Ge-Kl:10,MR2280381,MR2220032}).\\
The main result of the present paper is
\begin{Th}
  \label{thr:4}
  Assume that the dimension $d=1$. Pick $E\in I$ and $E'\in I$
  such that $E\not=E'$, $\nu(E)>0$ and $\nu(E')>0$.\\
  When $|\Lambda|\to+\infty$, the point processes
  $\Xi(E,\omega,\Lambda)$ and $\Xi(E',\omega,\Lambda)$, defined
  in~\eqref{eq:13}, converge weakly respectively to two independent
  Poisson processes on $\R$ with intensity the Lebesgue measure. That
  is, for $(U_j)_{1\leq j\leq J}$, $U_j\subset\R$ bounded measurable
  and $U_{j'}\cap U_j=\emptyset$ if $j\not=j'$ and $(k_j)_{1\leq j\leq
    J}\in\N^J$ and $(U'_j)_{1\leq j\leq J'}$, $U'_j\subset\R$ bounded
  measurable and $U'_{j'}\cap U'_j=\emptyset$ if $j\not=j'$ and
  $(k'_j)_{1\leq j\leq J}\in\N^{J'}$ one has
  \begin{equation}
    \label{eq:18}
    \mathbb{P}\left(\left\{\omega;\
        \begin{split}
          &\#\{j;\xi_n(E,\omega,\Lambda)\in U_1\}=k_1\\
          &\quad\vdots\hskip3cm\vdots\\
          &\#\{j;\xi_n(E,\omega,\Lambda)\in U_J\}=k_J\\
          &\#\{j;\xi_n(E',\omega,\Lambda)\in U'_1\}=k'_1\\
          &\quad\vdots\hskip3cm\vdots\\
          &\#\{j;\xi_n(E',\omega,\Lambda)\in U_{J'}\}=k_{J'}
        \end{split}
      \right\} \right)\vers_{\Lambda\to\Z^d}\prod_{j=1}^J
    e^{-|U_j|}\frac{|U_j|^{k_j}}{k_j!}.\prod_{j=1}^{J'}
    e^{-|U_{j'}|}\frac{|U_{j'}|^{k_{j'}}}{k_{j'}!}.
  \end{equation}
\end{Th}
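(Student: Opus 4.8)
The plan is to reduce the joint convergence statement in Theorem~\ref{thr:4} to a combination of the single-energy result (Theorem~\ref{thr:3}) and a decorrelation estimate. Concretely, one wants to show that for $E\neq E'$ the processes $\Xi(\cdot,E,\omega,\Lambda)$ and $\Xi(\cdot,E',\omega,\Lambda)$ are asymptotically independent. By a standard criterion for convergence of point processes to Poisson processes (a Laplace-functional or Kac--Feller style argument), it suffices to control, for disjoint bounded intervals $U$ near $E$ and $U'$ near $E'$, the probability that $H_\omega(\Lambda)$ has eigenvalues simultaneously in the rescaled windows $E+|\Lambda|^{-1}\nu(E)^{-1}U$ and $E'+|\Lambda|^{-1}\nu(E')^{-1}U'$. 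The key analytic input is a decorrelation estimate of the form
\begin{equation*}
  \mathbb{P}\bigl(H_\omega(\Lambda)\text{ has an eigenvalue in both }
  E+|\Lambda|^{-1}I\text{ and }E'+|\Lambda|^{-1}I'\bigr)
  \leq C\,|\Lambda|^{-2}\,|I|\,|I'|\,(\log|\Lambda|)^{\kappa},
\end{equation*}
which is $o(|\Lambda|^{-2})$ up to logarithmic corrections, beating the naive bound $O(|\Lambda|^{-2})$ that would come from merely multiplying two single-energy Minami-type estimates; this gain by a logarithmic factor is what forces the cross-correlations to vanish in the limit while the individual processes keep their Poisson intensity.

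The core of the argument is the proof of this decorrelation estimate, and this is where the dimension-one hypothesis and the localization assumption enter. First I would pass from the box $\Lambda_L$ to disjoint sub-boxes of side length $\ell\sim (\log L)^{\alpha}$: by localization (the fractional-moment / finite-volume criteria quoted via Proposition~\ref{pro:1}) the eigenfunctions of $H_\omega(\Lambda)$ with eigenvalue near $E$ or $E'$ are, with overwhelming probability, exponentially localized in some such sub-box, and an eigenvalue of $H_\omega(\Lambda)$ near $E$ is close to an eigenvalue of the restriction to that sub-box. Thus the event in the decorrelation estimate is essentially covered by the union, over sub-boxes $C$, of the events ``$H_\omega(C)$ has an eigenvalue near $E$ and an eigenvalue near $E'$''. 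Since there are $\sim |\Lambda|/\ell^d$ sub-boxes and they involve independent randomness, it suffices to show that for a single box $C$ of side $\ell$,
\begin{equation*}
  \mathbb{P}\bigl(H_\omega(C)\text{ has eigenvalues near both }E\text{ and }E'\bigr)
  \leq C\,(\ell^d\,|I|)\,(\ell^d\,|I'|)\,\ell^{-d},
\end{equation*}
i.e. an extra factor $\ell^{-d}$ beyond the product of two Wegner estimates.

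To extract that extra factor in $d=1$, I would exploit the special structure of one-dimensional Schrödinger operators: on an interval, the map $\omega\mapsto(E_j(\omega,C))_j$ is, after conditioning on all but two well-chosen coordinates $\omega_a,\omega_b$, governed by a two-by-two analysis. Using that a rank-one perturbation ($\omega_a$) moves a single eigenvalue monotonically and that eigenvalues near distinct energies $E,E'$ have eigenvectors that are ``spread out'' differently (quantitatively, using transfer-matrix / Prüfer-variable estimates and the fact that $E\neq E'$ means the corresponding Prüfer rotation numbers differ), one shows that the two constraints ``eigenvalue $\approx E$'' and ``eigenvalue $\approx E'$'' can be made to depend on two genuinely different linear combinations of the $\omega_n$, so that after integrating out those two directions against the bounded density $g$ one gains $|I|\cdot|I'|$ instead of $|I|\cdot|I'|$ with a shared direction — together with an averaging that produces the $\ell^{-d}=\ell^{-1}$ gain. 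I expect the main obstacle to be precisely this step: establishing, uniformly in $L$ and in the sub-box, that the gradients (in $\omega$) of the two eigenvalue branches near $E$ and near $E'$ are quantitatively non-collinear with good probability. This requires care because eigenvalues can be nearly degenerate or eigenfunctions can concentrate, and controlling these bad events needs the Minami estimate together with Combes--Thomas / Wegner inputs; the condition $E\neq E'$ is used essentially here, and it is also the reason the higher-dimensional case in the abstract only works for $E,E'$ far apart.

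Finally, with the decorrelation estimate in hand, I would conclude as follows: approximate each $U_j$, $U'_j$ by finite unions of small intervals; use the single-energy Theorem~\ref{thr:3} to get the limiting Poisson law for the $E$-windows and, separately, for the $E'$-windows; and use the decorrelation estimate, summed over the finitely many pairs of small intervals, to show that the probability of any eigenvalue falling simultaneously in an $E$-window and an $E'$-window tends to $0$. A routine inclusion-exclusion / independence-on-sub-boxes bookkeeping argument (the sub-boxes carrying the $E$-eigenvalues being, with high probability, disjoint from those carrying the $E'$-eigenvalues, hence involving independent randomness) then yields that the joint distribution factorizes in the limit, giving exactly the product of Poisson probabilities in~\eqref{eq:18}.
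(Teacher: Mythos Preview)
Your outline has the right skeleton and coincides with the paper's strategy: reduce to a decorrelation estimate on sub-boxes, prove that estimate via the quantitative non-collinearity of the $\omega$-gradients of the two eigenvalue branches, and then use a Laplace-transform factorization over independent sub-boxes together with Theorem~\ref{thr:3}. However, several of your quantitative claims are mis-stated and would derail the argument as written.

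First, the displayed decorrelation bound on the \emph{full} box $\Lambda$ is false: the probability that $H_\omega(\Lambda)$ has eigenvalues in both rescaled windows tends to a positive limit (this is exactly what the theorem computes), so it is certainly not $O(|\Lambda|^{-2})$. The decorrelation estimate must be stated on a \emph{sub-box} $\Lambda_\ell$ with $\ell\ll L$; in the paper this is Lemma~\ref{le:1}, with $\ell\sim L^\alpha$, and the bound is $C(\ell/L)^{2d}e^{(\log L)^\beta}$, i.e.\ Wegner-squared up to a sub-polynomial loss, \emph{not} an improvement over Wegner-squared. Your ``extra factor $\ell^{-d}$'' is neither needed nor proved: the point is that the cross term in the per-box expansion of the Laplace transform is $O((\ell/L)^{d(1+\rho)})$ for some $\rho>0$, so after multiplying over $(L/\ell)^d$ boxes the error is $O((\ell/L)^{d\rho})\to 0$.

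Second, your closing mechanism (``the sub-boxes carrying the $E$-eigenvalues are with high probability disjoint from those carrying the $E'$-eigenvalues'') is not correct and not what is used. The sub-boxes are disjoint by construction; independence across $\gamma$ is automatic. The decorrelation estimate is applied \emph{within} each sub-box, to show that
\[
\E\Bigl(e^{-\sum_j t_j X(E,U_j,\Lambda_\ell(\gamma))-\sum_{j'} t_{j'} X(E',U'_{j'},\Lambda_\ell(\gamma))}\Bigr)
=\prod_j\E\bigl(e^{-t_j X(E,\cdot)}\bigr)\prod_{j'}\E\bigl(e^{-t_{j'} X(E',\cdot)}\bigr)
\bigl(1+O((\ell/L)^{d(1+\rho)})\bigr),
\]
and then one takes the product over $\gamma$.

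Third, the heart of the matter is the non-collinearity of $\nabla_\omega E_j$ and $\nabla_\omega E_k$ in $d=1$, and here your sketch (``Pr\"ufer rotation numbers differ'') is too vague to be a proof. The paper's argument (Lemma~\ref{le:5}(2)) is quite different and rather delicate: assuming near-collinearity, one builds a partition $\Lambda_L=\mathcal P\cup\mathcal Q$ on which $|\varphi^j|\approx|\varphi^k|$ with matching or opposite signs, rewrites the eigenvalue equations as a pair of equations involving $-(P\Delta Q+Q\Delta P)$ and $-(P\Delta P+Q\Delta Q)+V_\omega$, diagonalizes the first operator explicitly (it is a direct sum of free Dirichlet Laplacians on intervals), and deduces that $\gtrsim L^\beta$ of the random variables must each lie in an interval of length $e^{-cL^\beta}$; summing over partitions gives the super-polynomially small probability. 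If you intend a transfer-matrix route instead, you would need to supply a comparable quantitative statement with a proof.
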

\noindent When $d\geq2$, we also prove
\begin{Th}
  \label{thr:18}
  Assume that $d$ is arbitrary. Pick $E\in I$ and $E'\in I$
  such that $|E-E'|>2d$, $\nu(E)>0$ and $\nu(E')>0$.\\
  When $|\Lambda|\to+\infty$, the point processes
  $\Xi(E,\omega,\Lambda)$ and $\Xi(E',\omega,\Lambda)$, defined
  in~\eqref{eq:13}, converge weakly respectively to two independent
  Poisson processes on $\R$ with intensity the Lebesgue measure.
\end{Th}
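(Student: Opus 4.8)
The plan is to deduce Theorem~\ref{thr:18} from three ingredients: the single--energy convergence recalled in Theorem~\ref{thr:3}, the Minami estimate of~\cite{MR97d:82046}, and a new \emph{decorrelation estimate} for the pair $(E,E')$; the hypothesis $|E-E'|>2d$ enters \emph{only} in the proof of the last one, so the overall architecture is identical to that of Theorem~\ref{thr:4}, the dimensional restriction being invisible until the decorrelation estimate itself.

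\textbf{Reduction to a decorrelation estimate.} Fix $\alpha\in(0,1)$, put $\ell=\ell_L=\lfloor L^{\alpha}\rfloor$, and split $\Lambda_L$ into $M\asymp(L/\ell)^{d}$ disjoint translates $\Lambda_\ell^{(1)},\dots,\Lambda_\ell^{(M)}$ of $\Lambda_\ell$. Localization on $I$ (Proposition~\ref{pro:1}, i.e.\ the finite--volume fractional moment bounds of~\cite{MR2002h:82051}) permits, exactly as in the proof of Theorem~\ref{thr:3}, to replace $H_\omega(\Lambda_L)$ by the decoupled operator $\bigoplus_{j=1}^{M}H_\omega(\Lambda_\ell^{(j)})$ when computing the weak limits of $\Xi(\cdot,E,\omega,\Lambda_L)$ and $\Xi(\cdot,E',\omega,\Lambda_L)$ on a fixed bounded set: up to an error negligible in the limit, the eigenvalues of $H_\omega(\Lambda_L)$ within $O(L^{-d})$ of $E$ (resp.\ of $E'$) are the eigenvalues of the $H_\omega(\Lambda_\ell^{(j)})$ in the same window, and distinct sub--boxes carry independent random variables. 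Hence the two limiting processes are superpositions of the contributions of the sub--boxes, and they are independent as soon as the expected number of sub--boxes producing an eigenvalue in \emph{both} windows simultaneously tends to $0$. Writing $J_L=E+L^{-d}K$ and $J_L'=E'+L^{-d}K$ for a fixed bounded interval $K$ (the scaling window dictated by~\eqref{eq:11}), this reduces Theorem~\ref{thr:18} to
\begin{equation*}
  M\cdot\P\Big(\sigma\big(H_\omega(\Lambda_\ell)\big)\cap J_L\ne\emptyset
  \ \text{ and }\ \sigma\big(H_\omega(\Lambda_\ell)\big)\cap J_L'\ne\emptyset\Big)\ \vers_{L\to\infty}\ 0 .
\end{equation*}
Since $M\asymp(L/\ell)^{d}$ and $|J_L|=|J_L'|=|K|L^{-d}$, it suffices to prove the decorrelation bound
\begin{equation*}
  \P\Big(\sigma\big(H_\omega(\Lambda_\ell)\big)\cap J_L\ne\emptyset,\ \sigma\big(H_\omega(\Lambda_\ell)\big)\cap J_L'\ne\emptyset\Big)\ \le\ C\,(\ell/L)^{2d}\,(\log\ell)^{\kappa}
\end{equation*}
for some $\kappa\ge0$ (any rate strictly beating the ``independence heuristic'' $(\ell/L)^{2d}$, even with a logarithmic loss, does the job, since $M(\ell/L)^{2d}(\log\ell)^{\kappa}\asymp(\ell/L)^{d}(\log\ell)^{\kappa}\to0$). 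Note that the Minami estimate by itself does not yield this: it controls two eigenvalues near a \emph{single} energy, whereas here one must rule out a positive correlation between ``an eigenvalue near $E$'' and ``an eigenvalue near $E'$'' inside one box, and the trivial bound $\P(A\cap B)\le\min(\P(A),\P(B))$ is useless after multiplication by $M$.

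\textbf{Proof of the decorrelation estimate.} Use localization inside $\Lambda_\ell$: off an event of probability $O(\ell^{-N})$ for every $N$, every normalized eigenfunction of $H_\omega(\Lambda_\ell)$ with eigenvalue in $I$ is exponentially localized about a center in $\Lambda_\ell$. On the complementary event, given eigenvalues $E_j\in J_L$ centered at $x_j$ and $E_k\in J_L'$ centered at $x_k$, distinguish two cases. If $|x_j-x_k|\ge C_0\log\ell$, then $E_j$ (resp.\ $E_k$) depends, up to an exponentially small error, only on the $\omega_n$ in a box of side $\asymp\log\ell$ about $x_j$ (resp.\ $x_k$); these two blocks are disjoint, hence independent, so a Wegner estimate applied to each block gives probability $\le C(\log\ell)^{d}|J_L|$ per block, hence $\le C^{2}(\log\ell)^{2d}|J_L||J_L'|$ for the pair, and summing over the $O(\ell^{2d})$ choices of the two center--boxes produces a contribution $\le C(\log\ell)^{2d}(\ell/L)^{2d}|K|^2$, of the required order. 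If $|x_j-x_k|<C_0\log\ell$, localize the problem to a box $\widetilde\Lambda$ of side $\asymp\log\ell$ containing both centers, reducing to a finite--rank perturbation problem, and run a two--energy version of Minami's argument: with $\psi_j,\psi_k$ the (approximate) normalized eigenfunctions, $\partial_{\omega_n}E_j=|\psi_j(n)|^{2}$ and $\partial_{\omega_n}E_k=|\psi_k(n)|^{2}$, so the Jacobian of $(\omega_{n_1},\omega_{n_2})\mapsto(E_j,E_k)$ equals $|\psi_j(n_1)|^{2}|\psi_k(n_2)|^{2}-|\psi_j(n_2)|^{2}|\psi_k(n_1)|^{2}$; granting that one can choose $n_1,n_2\in\widetilde\Lambda$ making this quantity bounded below by a negative power of $\log\ell$, the change of variables $(\omega_{n_1},\omega_{n_2})\mapsto(E_j,E_k)$ together with $\|g\|_\infty<\infty$ gives $\P\big(E_j\in J_L,\ E_k\in J_L'\mid\text{the remaining }\omega\text{'s}\big)\le C\|g\|_\infty^{2}|K|^{2}L^{-2d}(\log\ell)^{O(1)}$, and summing over the $(\log\ell)^{O(1)}$ positions of $\widetilde\Lambda$ in $\Lambda_\ell$ and over the $(\log\ell)^{O(1)}$ pairs of eigenvalues (controlled by the Minami estimate in the small box) yields a contribution $\le C\ell^{d}L^{-2d}(\log\ell)^{O(1)}$, which is even smaller than $(\ell/L)^{2d}$ because close localization centers form a rare configuration. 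Combining the two cases proves the decorrelation estimate and hence the theorem.

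\textbf{The main obstacle} is the lower bound on that $2\times2$ Jacobian, equivalently the quantitative transversality of the gradients $\nabla_\omega E_j$ and $\nabla_\omega E_k$ inside the small box. Since both are nonnegative vectors of $\ell^{1}$--norm $1$, transversality amounts to $|\psi_j(\cdot)|^{2}$ and $|\psi_k(\cdot)|^{2}$ being quantitatively distinct, and this is precisely where $|E-E'|>2d$ is used: because $2d=\|\Delta\|$ bounds the kinetic part of $H_\omega(\widetilde\Lambda)$, two normalized eigenfunctions at energies this far apart cannot have almost proportional moduli, and turning this heuristic into an effective lower bound valid off a small--probability set is the crux — a requirement that cannot be dropped in dimension $d\ge2$, whereas in the one--dimensional Theorem~\ref{thr:4} the transfer--matrix (Sturm oscillation) structure secures transversality at \emph{all} distinct energies. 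A further, unavoidable, technical burden is the quantitative bookkeeping: $\alpha\in(0,1)$ and the auxiliary logarithmic length scales must be tuned so that the decoupling error of $\Lambda_L$ into sub--boxes, the localization tails, and the Wegner and Minami inputs are simultaneously summable and vanish as $L\to\infty$.
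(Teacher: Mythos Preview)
Your architecture matches the paper's almost exactly: the decoupling of $\Lambda_L$ into boxes $\Lambda_\ell$ with $\ell\asymp L^\alpha$, the split according to whether the two localization centers are far apart (independence plus two Wegner bounds) or close (reduction to a box of side $\asymp\log L$ followed by a Jacobian/change-of-variables argument) is precisely Lemmas~\ref{le:3}, \ref{le:2}, \ref{le:6} and~\ref{le:8}, and the passage from the decorrelation estimate to asymptotic independence of the two point processes is the content of Section~\ref{sec:proofs-theorems}.

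What you flag as ``the main obstacle'' is, however, much simpler than you expect, and it is \emph{not} a bound ``valid off a small-probability set'' but a deterministic inequality. The paper (Lemma~\ref{le:5}(1), Section~\ref{sec:proof-point-1}) computes the radial derivative
\[
\omega\cdot\nabla_\omega\bigl(E_j(\omega)-E_k(\omega)\bigr)
=\langle V_\omega\varphi_j,\varphi_j\rangle-\langle V_\omega\varphi_k,\varphi_k\rangle
=(E_j-E_k)+\langle(-\Delta)\varphi_k,\varphi_k\rangle-\langle(-\Delta)\varphi_j,\varphi_j\rangle,
\]
so that $|\omega\cdot\nabla_\omega(E_j-E_k)|\ge|E_j-E_k|-2d$, and Cauchy--Schwarz with $\|\omega\|_2\le K(2\tilde\ell+1)^{d/2}$ gives $\|\nabla_\omega(E_j-E_k)\|_2\ge(\Delta E-2d)K^{-1}(2\tilde\ell+1)^{-d/2}$ for \emph{every} $\omega$; combined with Lemma~\ref{le:8} this furnishes the Jacobian lower bound with no exceptional set at all ($\P_r=0$ in the paper's notation). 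The probabilistic exclusion you describe is only needed in the $d=1$, arbitrary-energy case (Lemma~\ref{le:5}(2)), and there the paper does not use transfer matrices or Sturm oscillation but a structural analysis of the operator $P\Delta Q+Q\Delta P$. A minor bookkeeping slip: the number of positions of the $\log\ell$-box inside $\Lambda_\ell$ is $\asymp(\ell/\log\ell)^d$, not $(\log\ell)^{O(1)}$; your final bound $\ell^dL^{-2d}(\log\ell)^{O(1)}$ is nonetheless of the right order.
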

\noindent In section~\ref{sec:proofs-theorems}, we show that
Theorems~\ref{thr:4} and~\ref{thr:18} follow from Theorem~\ref{thr:3}
and the decorrelation estimates that we present now. They are the main
technical results of the present paper.
\begin{Le}
  \label{le:1}
  Assume $d=1$ and pick $\beta\in(1/2,1)$. For $\alpha\in(0,1)$ and
  $\{E,E'\}\subset I$ s.t. $E\not=E'$, for any $c>0$, there exists
  $C>0$ such that, for $L\geq3$ and $c L^\alpha\leq \ell\leq
  L^\alpha/c$, one has
  \begin{equation}
    \label{eq:8}
    \P\left(\left\{
        \begin{matrix}
          \sigma(H_\omega(\Lambda_\ell))\cap
          (E+L^{-d}(-1,1))\not=\emptyset,\\
          \sigma(H_\omega(\Lambda_\ell))\cap
          (E'+L^{-d}(-1,1))\not=\emptyset
        \end{matrix}
      \right\}\right)\leq C(\ell/L)^{2d}e^{(\log L)^\beta}.
  \end{equation}
\end{Le}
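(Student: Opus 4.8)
The plan is to exploit the one-dimensional structure together with Minami-type estimates on each of the two energy windows separately, and to gain the extra decay in $\ell/L$ by a rank-argument on a single site. First I would recall that, in the localized regime on a box of size $\ell$ with $\ell\sim L^\alpha$, the eigenfunctions of $H_\omega(\Lambda_\ell)$ associated to eigenvalues near $E$ (resp.\ near $E'$) are exponentially localized with localization length $O((\log \ell)^{\text{something}})$, and the probability that two such eigenvalues lie in a window of size $L^{-d}$ is controlled by a Minami estimate of size $O((\ell/L)^d\cdot L^{-d}\cdot \ell^d)=O((\ell^d/L)^2)$ for each window. The subtle point is to turn the \emph{joint} event into a product: heuristically, since $E\neq E'$ the two eigenfunctions $\varphi$ (near $E$) and $\varphi'$ (near $E'$) must be ``far apart'' in energy and hence, by a Combes--Thomas / Wegner argument, their localization centers $x_\varphi$, $x_{\varphi'}$ can be taken in disjoint subcubes of $\Lambda_\ell$, at which point the two events become independent (they depend on disjoint blocks of the random variables $(\omega_n)$), and the probabilities multiply. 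Conditioning on the localization center of each, summing over the $O(\ell^d)$ possible positions of each center, and using the single-window Wegner/Minami bound $O(L^{-d}\ell^{\text{?}})$ then yields the claimed bound, with the factor $e^{(\log L)^\beta}$ absorbing the subexponential corrections coming from the fact that localization is only effective on scales $\gtrsim(\log \ell)^{1/\xi}$ and from the tails of the exponentially decaying eigenfunctions (this is where $\beta\in(1/2,1)$ enters).

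More precisely, here are the steps I would carry out. \textbf{Step 1:} Use the Wegner estimate to reduce to the event that $H_\omega(\Lambda_\ell)$ has \emph{exactly one} eigenvalue in $E+L^{-d}(-1,1)$ and \emph{exactly one} in $E'+L^{-d}(-1,1)$, the Minami estimate killing the contribution of two-or-more eigenvalues in either window at cost $O((\ell^d/L)^2)$ each; already this handles part of the bound but not the crucial product structure. \textbf{Step 2:} On this event, let $\varphi_E$ and $\varphi_{E'}$ be the corresponding normalized eigenvectors; by localization (Proposition~\ref{pro:1} / the finite-volume fractional-moment bounds), with probability $1-O(\ell^{-\infty})$ each has a localization center, and $\|\varphi_E\|$ and $\|\varphi_{E'}\|$ are each concentrated in a ball of radius $R=C(\log L)^{1/\text{?}}$ around their centers. \textbf{Step 3 (the heart of the matter):} Show that the two centers must be at distance $\geq 2R$; this is where $E\neq E'$ is used --- if the two eigenvectors were supported on overlapping regions, one could use the resolvent identity and the separation $|E-E'|>0$ (in dimension one, any strictly positive separation suffices because of the explicit transfer-matrix/Prüfer structure) to derive a contradiction, or at least an event of probability $O(\ell^{-\infty})$. \textbf{Step 4:} Decompose $\Lambda_\ell$ into the subcube near $x_{\varphi_E}$ and the subcube near $x_{\varphi_{E'}}$; on each subcube the presence of an eigenvalue in the respective window is, up to an error $O(\ell^{-\infty})$, an event measurable with respect to $(\omega_n)$ in that subcube only, hence the two events are independent, and each has probability $O(L^{-d}\ell^{d})\cdot$(number of center positions factor). \textbf{Step 5:} Sum over the $O(\ell^d)\times O(\ell^d)$ choices of the pair of centers; combined with the per-window bound this produces $O((\ell/L)^{2d})$ times a polynomial-in-$\ell$ factor, which is $\leq O((\ell/L)^{2d}e^{(\log L)^\beta})$ since $\ell\leq L^\alpha/c$ and $\ell^{O(1)}=e^{O(\log L)}$... wait, that is too big; the resolution is that the polynomial factors from the center-counting are exactly compensated by the $\ell^{-d}$ from the normalization in the Minami estimate, so only the genuinely subexponential corrections (from the non-sharp localization length and the eigenfunction tails) survive, and those are $e^{o((\log L))}\leq e^{(\log L)^\beta}$.

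\textbf{Main obstacle.} The delicate step is Step 3 together with making Step 4 rigorous: one must show that a joint eigenvalue configuration at $E$ \emph{and} at $E'$ forces the eigenfunctions to live on disjoint blocks, and then convert ``disjoint blocks'' into genuine stochastic independence. In dimension one this should follow from the transfer-matrix description: the eigenvalue equation $H_\omega(\Lambda_\ell)\varphi=\lambda\varphi$ with $\lambda\approx E$ and with $\lambda'\approx E'$ corresponds to two different monodromy conditions, and the Prüfer phases evolve at different (non-commensurate, because $E\neq E'$ and $E,E'\notin\{-2,2\}$ roughly) rates, so the ``spots'' where the finite-volume Green's function is large for energy $E$ are disjoint from those for energy $E'$ with overwhelming probability. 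Turning this into the clean combinatorial sum of Step 5, while tracking all the subexponential error terms so that they fit under $e^{(\log L)^\beta}$ with $\beta>1/2$, is where the real work lies; the constant $C$ will depend on $\alpha$, $c$, on $\dist(\{E,E'\},\{S_\pm, \pm 2d\})$ and on $|E-E'|$.
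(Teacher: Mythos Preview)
Your Steps~1--2 and the first half of Step~4 match the paper's reduction (Lemmas~\ref{le:3} and~\ref{le:2}): Minami kills multiple eigenvalues in a single window, and localization lets one pass from $\Lambda_\ell$ to a cube of side $\tilde\ell = C\log L$. The paper, like you, distinguishes whether the two localization centers are far apart (in which case independence plus Wegner gives the bound directly) or close (in which case both eigenvalues sit in a single cube of size $O(\tilde\ell)$).

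The gap is your Step~3. You assert that the two localization centers must be at distance $\geq 2R$, or that the complementary event has probability $O(\ell^{-\infty})$. This is precisely what is \emph{not} true a priori, and handling it is the whole content of the lemma. There is no resolvent-identity or Combes--Thomas argument that prevents two eigenvalues at distinct energies from being localized on the same block; orthogonal eigenvectors at different energies can perfectly well share support. The case ``centers close'' is not a small-probability nuisance but the main term to control, and your sketch offers no mechanism for it beyond a heuristic about Pr\"ufer phases that is never made precise.

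What the paper does instead is entirely different. On the small cube $\Lambda_{\tilde\ell}$ one studies the map $\omega\mapsto(E(\omega),E'(\omega))$. The gradients are $\nabla_\omega E = (\varphi_\gamma^2)_\gamma$ and $\nabla_\omega E' = ({\varphi'_\gamma}^2)_\gamma$, both probability vectors. The key step (Lemma~\ref{le:5}, point~(2)) shows that in dimension one the event $\|\nabla_\omega E - \nabla_\omega E'\|_1 \leq e^{-\tilde\ell^\beta}$ has probability at most $e^{-c\tilde\ell^{2\beta}}$; this uses the explicit tridiagonal structure of $-\Delta$ to show that if $|\varphi_\gamma|\approx|\varphi'_\gamma|$ for all $\gamma$, then at least $c\tilde\ell^\beta$ of the random variables $\omega_\gamma$ are pinned to exponentially small intervals. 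Off this bad event, an elementary linear-algebra lemma (Lemma~\ref{le:8}) extracts two sites $\gamma,\gamma'$ with Jacobian $|J_{\gamma,\gamma'}(E,E')|\geq\lambda:=e^{-\tilde\ell^\beta}$, and a local-diffeomorphism argument (Lemma~\ref{le:6}) bounds the $(\omega_\gamma,\omega_{\gamma'})$-area where both eigenvalues fall in the $L^{-d}$-windows by $O(L^{-2d}\lambda^{-4})$. The factor $e^{(\log L)^\beta}$ in the statement is exactly $\lambda^{-4}$ (up to constants), and the condition $\beta>1/2$ is what makes the bad-event probability $e^{-c\tilde\ell^{2\beta}}\leq L^{-2d}$, since $\tilde\ell\asymp\log L$. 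None of this ``two eigenvalues cannot move synchronously in $\omega$'' machinery appears in your proposal, and without it Step~3 has no content.
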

\noindent This lemma shows that, up to sub-polynomial errors, the
probability to obtain simultaneously an eigenvalue near $E$ and
another one near $E'$ is bounded by the product of the estimates given
for each of these events by Wegner's estimate (see
section~\ref{sec:wegn-minami-estim}). In this sense,~\eqref{eq:8} is
similar to Minami's estimate for two distinct energies.\\
Lemma~\ref{le:1} proves a result conjectured in~\cite{Mi:08,MR2280381}
in dimension 1.
\par In arbitrary dimension, we prove~\eqref{eq:8}, actually a
somewhat stronger estimate, only when the two energies $E$ and $E'$
are sufficiently far apart.
\begin{Le}
  \label{le:9}
  Assume $d$ is arbitrary. Pick $\beta\in(1/2,1)$. For
  $\alpha\in(0,1)$ and $\{E,E'\}\subset I$ s.t. $|E-E'|>2d$, for any
  $c>0$, there exists $C>0$ such that, for $L\geq3$ and $c
  L^\alpha\leq \ell\leq L^\alpha/c$, one has
  \begin{equation}
    \label{eq:16}
    \P\left(\left\{
        \begin{matrix}
          \sigma(H_\omega(\Lambda_\ell))\cap
          (E+L^{-d}(-1,1))\not=\emptyset,\\
          \sigma(H_\omega(\Lambda_\ell))\cap
          (E'+L^{-d}(-1,1))\not=\emptyset
        \end{matrix}
      \right\}\right)\leq C(\ell/L)^{2d}(\log L)^C.
  \end{equation}
\end{Le}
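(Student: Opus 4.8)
The plan is to reduce \eqref{eq:16} to the product of two Wegner estimates on disjoint boxes plus a genuinely two-energy estimate on one small box, the hypothesis $|E-E'|>2d=\|{-\Delta}\|$ entering only in the latter. Assume WLOG $E<E'$, and set $\ell_0=C_0\log L$ with $C_0$ chosen (using the localization estimates of Section~\ref{sec:localized-regime}) so that, outside an event of probability $O(L^{-q})$ for every $q>0$, every eigenvalue of $H_\omega(\Lambda_\ell)$ in $E+L^{-d}(-1,1)$ (resp.\ $E'+L^{-d}(-1,1)$) is, up to $o(L^{-d})$, an eigenvalue of $H_\omega$ restricted to the box of side $\ell_0$ centered at the localization center of the associated eigenfunction. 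On this good event, \eqref{eq:16} forces one of: (a) the two localization centers are at distance $>2\ell_0$, so the two eigenvalues are, up to $o(L^{-d})$, eigenvalues of $H_\omega$ on two \emph{disjoint} boxes of side $\ell_0$; or (b) the two centers are within $2\ell_0$, so both eigenvalues are approximated by eigenvalues of $H_\omega(\Lambda')$ for one box $\Lambda'$ of side $O(\ell_0)$.

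In case (a), the two events ``$H_\omega(\Lambda_1)$ has an eigenvalue in $E+2L^{-d}(-1,1)$'' and ``$H_\omega(\Lambda_2)$ has one in $E'+2L^{-d}(-1,1)$'', with $\Lambda_1\cap\Lambda_2=\emptyset$ of side $\ell_0$, depend on disjoint families of the $\omega_n$ and hence are independent; Wegner's estimate (Section~\ref{sec:wegn-minami-estim}) bounds each by $C\ell_0^dL^{-d}$, so summing over the $O((\ell/\ell_0)^{2d})$ positions of $(\Lambda_1,\Lambda_2)$ gives a contribution $\le C\ell^{2d}L^{-2d}=C(\ell/L)^{2d}$. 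In case (b), summing over the $O((\ell/\ell_0)^d)$ positions of the first center and the $O(1)$ positions of the second, it suffices to show that for a fixed box $\Lambda'$ of side $O(\ell_0)$,
\begin{equation*}
  p_2:=\P\bigl(\sigma(H_\omega(\Lambda'))\cap(E+2L^{-d}(-1,1))\neq\emptyset\ \text{and}\ \sigma(H_\omega(\Lambda'))\cap(E'+2L^{-d}(-1,1))\neq\emptyset\bigr)\le C(\log L)^{C}L^{-2d},
\end{equation*}
since then case (b) contributes $\le C(\ell/\ell_0)^d(\log L)^CL^{-2d}\le C(\ell/L)^{2d}(\log L)^C$ for $L$ large (as $(\ell/\ell_0)^d\le \ell^{2d}$).

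To bound $p_2$, first use Minami's estimate (Section~\ref{sec:wegn-minami-estim}) to discard, at cost $O((\ell_0^dL^{-d})^2)$, the event that either window contains two eigenvalues of $H_\omega(\Lambda')$; on the complement there are unique eigenvalues $\lambda\approx E$, $\mu\approx E'$ with normalized eigenfunctions $\varphi$, $\psi$. The hypothesis $|E-E'|>2d$ enters through the ``peak site'' bound: if $n_0$ maximizes $|\varphi|$ on $\Lambda'$, then evaluating $H_\omega(\Lambda')\varphi=\lambda\varphi$ at $n_0$ gives $|\lambda-\omega_{n_0}|\,|\varphi(n_0)|=\bigl|(-\Delta_{\Lambda'}\varphi)(n_0)\bigr|\le 2d\,\|\varphi\|_\infty=2d\,|\varphi(n_0)|$, hence $|\omega_{n_0}-E|\le 2d+o(1)$; since $|E-E'|>2d$, the eigenvalue equation for $\psi$ at $n_0$ then forces $|\psi(n_0)|\le \kappa\,\|\psi\|_\infty$ with $\kappa=2d/(|E-E'|-2d)+o(1)$. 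Exploiting such comparisons, the localization profiles $a=(|\varphi(n)|^2)_n$ and $b=(|\psi(n)|^2)_n$ — both probability vectors on $\Lambda'$ — are quantitatively non-proportional; one then finds two sites $n_1,n_2\in\Lambda'$ (chosen depending on $\omega$, with a harmless union bound over $O(\ell_0^{2d})$ choices) such that the Jacobian of $(\omega_{n_1},\omega_{n_2})\mapsto(\lambda,\mu)$, which by Hellmann--Feynman ($\partial_{\omega_n}\lambda=|\varphi(n)|^2$, $\partial_{\omega_n}\mu=|\psi(n)|^2$) equals $a_{n_1}b_{n_2}-a_{n_2}b_{n_1}$, is $\ge c\,\ell_0^{-2d}$. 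Freezing the other variables and changing variables in $(\omega_{n_1},\omega_{n_2})$ — with the usual care since the eigenfunctions, hence the Jacobian, move over the relevant $\omega$-rectangle — and using boundedness of $g$, bounds the conditional probability of hitting both windows by $C\ell_0^{2d}L^{-2d}$; integrating over the frozen variables gives $p_2\le C\ell_0^{2d}L^{-2d}=C(\log L)^{2d}L^{-2d}$, which suffices.

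The hard part is this last step: extracting from $|E-E'|>2d$ a quantitative lower bound — with only a power-of-$\log L$ loss — on the Jacobian $|a_{n_1}b_{n_2}-a_{n_2}b_{n_1}|$, i.e.\ on the ``distance from proportionality'' of the two localization profiles, and then making the two-variable change of variables rigorous despite the $\omega$-dependence of the eigenfunctions. The remaining ingredients — the localization reduction, the independence in case (a), and the Wegner and Minami inputs — are standard, and the accumulated sub-polynomial losses (from localization, from the union bounds over boxes and site pairs, and from the Jacobian bound) produce the factor $(\log L)^C$ in \eqref{eq:16}.
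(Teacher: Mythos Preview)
Your overall architecture coincides with the paper's: Minami to reduce to simple eigenvalues in each window; localization to pass from $\Lambda_\ell$ to boxes of side $\tilde\ell=O(\log L)$ (the paper's Lemmas~\ref{le:3} and~\ref{le:2}); independence plus Wegner for the ``far centers'' case; and, for the ``one small box'' case, a Jacobian lower bound for $(\omega_{n_1},\omega_{n_2})\mapsto(\lambda,\mu)$ followed by a two-variable change of variables. The accounting of polylogarithmic losses is also the same.

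The one substantive difference is the mechanism for the Jacobian lower bound. The paper (Lemma~\ref{le:5}, point~(1)) uses the identity
\[
\omega\cdot\nabla_\omega\bigl(E_j(\omega)-E_k(\omega)\bigr)
=E_j-E_k+\langle-\Delta\varphi_k,\varphi_k\rangle-\langle-\Delta\varphi_j,\varphi_j\rangle,
\]
together with $\|{-\Delta}\|$ and Cauchy--Schwarz, to get $\|\nabla_\omega(E_j-E_k)\|_2\ge c\,\tilde\ell^{-d/2}$, and then invokes the linear-algebra Lemma~\ref{le:8} to extract a good $2\times2$ minor. Your peak-site route is different and in fact more direct: if $n_0$, $m_0$ are the maxima of $|\varphi|$, $|\psi|$, your inequality $|\psi(n_0)|\le\kappa\|\psi\|_\infty$ (and its symmetric counterpart $|\varphi(m_0)|\le\kappa\|\varphi\|_\infty$, $\kappa<1$) immediately gives $n_0\ne m_0$ and
\[
a_{n_0}b_{m_0}-a_{m_0}b_{n_0}\ge(1-\kappa^4)\,|\varphi(n_0)|^2|\psi(m_0)|^2\ge(1-\kappa^4)\,|\Lambda'|^{-2},
\]
i.e.\ a Jacobian $\ge c\,\ell_0^{-2d}$ at the explicit pair $(n_0,m_0)$, bypassing Lemma~\ref{le:8} entirely. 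This is a genuine simplification. (It is worth writing this two-line computation out rather than leaving it as ``exploiting such comparisons''.)

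What you flag as ``the usual care'' in the change of variables is, in the paper, the content of Lemmas~\ref{le:4} and~\ref{le:6}: one first uses Minami on a \emph{wider} window of size $\varepsilon\asymp L^{-d}\lambda^{-3}$ to isolate each eigenvalue there (at cost $C\varepsilon\tilde\ell^{2d}L^{-d}$, still acceptable), which bounds the Hessians of $E,E'$ by $C\varepsilon^{-1}$ uniformly on the relevant $(\omega_{n_1},\omega_{n_2})$-square; this freezes the Jacobian up to $o(\lambda)$ and makes $(\omega_{n_1},\omega_{n_2})\mapsto(E,E')$ a diffeomorphism on that square, after which the change of variables is legitimate. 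Your sketch is correct but this step deserves the two extra lines, since without the $\varepsilon$-isolation the Hessian bound is not available over the whole square.
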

\noindent This e.g. proves the independence of the processes for
energies in opposite edges of the almost sure spectrum.\\
The estimate~\eqref{eq:16} in Lemma~\ref{le:9} is somewhat stronger
than~\eqref{eq:8}; one can obtain an analogous estimate in dimension 1
if one restricts oneself to energies $E$ and $E'$ such that $E-E'$
does not belong to some set of measure $0$ (see Lemma~\ref{le:14} in
Remark~\ref{rem:3} at the end of section~\ref{sec:proof-lemma}).
\begin{Rem}
  \label{rem:2}
  As the proof of Theorems~\ref{thr:4} and~\ref{thr:18} shows, the
  estimates~\eqref{eq:8} are~\eqref{eq:16} are stronger than what it
  needed. It suffices to show that the probabilities in~\eqref{eq:8}
  are~\eqref{eq:16} are $o((\ell/L)^d)$.\\
  In~\cite{Ge-Kl:10} (see also~\cite{Ge-Kl:10b}), the authors provide
  another proof of Theorems~\ref{thr:3} and of Theorems~\ref{thr:4}
  and~\ref{thr:18} under the assumption that the probabilities
  in~\eqref{eq:8} are~\eqref{eq:16} are $o((\ell/L)^d)$. The analysis
  done in~\cite{Ge-Kl:10b} deals with both discrete and continuous
  models. It yields a stronger version of Theorem~\ref{thr:3} and
  Theorems~\ref{thr:4} and~\ref{thr:18} in essentially the same step.\\
  Whereas in the proof of Lemma~\ref{le:1}, we explicitly use the fact
  that $H_\omega=H_0+V_\omega$ where $H_0$ is the free Laplace
  operator~\eqref{eq:29}, the proof we give of Lemma~\ref{le:9} still
  works if $H_0$ is any convolution matrix with exponentially decaying
  off diagonal coefficients if one replaces the condition $|E-E'|>2d$
  with the condition $|E-E'|>\sup\sigma(H_0)-\inf\sigma(H_0)$.
\end{Rem}
\section{Proof of the decorrelation estimates}
\label{sec:proof}
Before starting with the proofs of Lemma~\ref{le:1} and~\ref{le:9},
let us recall additional properties for the discrete Anderson model
known to be true under the assumptions we made on the distribution of
the random potential.
\subsection{Some facts on the discrete Anderson model}
\label{sec:some-facts-discrete}
Basic estimates on the distribution of the eigenvalues of the Anderson
model are the Wegner and Minami estimates.
\subsubsection{The Wegner and Minami estimates}
\label{sec:wegn-minami-estim}
One has
\begin{Th}[\cite{MR639135}]
  \label{thr:1}
  There exists $C>0$ such that, for $J\subset\R$, and $\Lambda$, a
  cube in $\Z^d$, one has
  \begin{equation}
    \label{eq:1}
    \E\left[\text{tr}(\car_J(H_\omega(\Lambda)))
    \right]\leq C |J|\,|\Lambda|
  \end{equation}
  where
  \begin{itemize}
  \item $H_\omega(\Lambda))$ is the operator $H_\omega$ restricted to
    $\Lambda$ with periodic boundary conditions,
  \item $\car_J(H)$ is the spectral projector of the operator $H$ on
    the energy interval $J$.
  \end{itemize}
\end{Th}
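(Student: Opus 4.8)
The plan is to prove this by the classical spectral-averaging argument underlying Wegner's estimate, using that at each site $n$ the operator $H_\omega(\Lambda)$ depends on the coupling constant $\omega_n$ only through the rank-one perturbation $\omega_n\,|\delta_n\rangle\langle\delta_n|$. First I would reduce to the case where $J=[a,b]$ is a bounded interval: since $\car_J$ is an orthogonal projection, $\tr(\car_J(H_\omega(\Lambda)))=\sum_{n\in\Lambda}\langle\delta_n,\car_J(H_\omega(\Lambda))\delta_n\rangle\ge 0$, and writing a general Borel set $J$ as a countable disjoint union of intervals of length $\le1$ and invoking monotone convergence for traces of positive operators shows it suffices to bound $\E\left[\tr(\car_{[a,b]}(H_\omega(\Lambda)))\right]$ by $C\,(b-a)\,|\Lambda|$.

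Next, I would fix a site $n\in\Lambda$ and freeze the configuration $(\omega_m)_{m\ne n}$, so that $H_\omega(\Lambda)=A_n+\omega_n\,|\delta_n\rangle\langle\delta_n|$ with $A_n$ self-adjoint (periodic boundary conditions preserve self-adjointness, and only $V_\omega$ carries the randomness) and independent of $\omega_n$. The core ingredient is then the rank-one spectral-averaging bound
\begin{equation*}
  \int_\R\langle\delta_n,\car_{[a,b]}\bigl(A_n+s\,|\delta_n\rangle\langle\delta_n|\bigr)\delta_n\rangle\,ds\le b-a .
\end{equation*}
To obtain it I would introduce the Herglotz function $F(s,z):=\langle\delta_n,(A_n+s\,|\delta_n\rangle\langle\delta_n|-z)^{-1}\delta_n\rangle$ for $\im z>0$; differentiating the resolvent in $s$ gives the Riccati identity $\partial_sF=-F^2$, hence $F(s,z)=\bigl(s+1/F(0,z)\bigr)^{-1}$, which for each fixed $s$ is again Herglotz in $z$, and a one-line Poisson-kernel computation yields $\int_\R\im F(s,\lambda+i\varepsilon)\,ds=\pi$ for every $\varepsilon>0$ and $\lambda\in\R$. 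Stone's formula — equivalently, the weak convergence of $\tfrac1\pi\,\im F(s,\cdot+i\varepsilon)\,d\lambda$ to the spectral measure of the pair $(A_n+s\,|\delta_n\rangle\langle\delta_n|,\delta_n)$ — together with Fatou's lemma and Fubini's theorem then give the displayed inequality, after enlarging $[a,b]$ slightly to absorb any eigenvalue sitting at an endpoint.

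With this in hand the remainder is bookkeeping. Integrating against the density $g$ of $\omega_n$ and using $g\in L^\infty$,
\begin{equation*}
  \E_{\omega_n}\!\Bigl[\langle\delta_n,\car_{[a,b]}(H_\omega(\Lambda))\delta_n\rangle\Bigr]
  \le\|g\|_\infty\int_\R\langle\delta_n,\car_{[a,b]}\bigl(A_n+s\,|\delta_n\rangle\langle\delta_n|\bigr)\delta_n\rangle\,ds
  \le\|g\|_\infty\,(b-a) ;
\end{equation*}
taking the expectation over $(\omega_m)_{m\ne n}$ leaves this unchanged, and summing over the $|\Lambda|$ sites $n\in\Lambda$ gives $\E\left[\tr(\car_{[a,b]}(H_\omega(\Lambda)))\right]\le\|g\|_\infty\,(b-a)\,|\Lambda|$, i.e.\ the claim with $C=\|g\|_\infty$.

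The only genuinely delicate step is the rank-one averaging estimate, and within it the passage from the explicit Herglotz identity to a bound on $\langle\delta_n,\car_{[a,b]}(\cdot)\delta_n\rangle$ through boundary values of the resolvent; everything else is elementary. I would also note that nothing here uses $d=1$ or any localization input, so Wegner's estimate holds throughout $\Sigma$; and if one prefers, one may simply quote the rank-one spectral-averaging theorem in the form ``$\int_\R\mu_s\,ds$ equals Lebesgue measure'' and skip its proof entirely.
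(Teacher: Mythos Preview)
The paper does not actually prove Theorem~\ref{thr:1}: it is stated as a known result attributed to Wegner, and the paper simply refers the reader to~\cite{MR2509108,MR2307751,MR2378428} for proofs. Your argument is the standard spectral-averaging proof one finds in those references, and it is correct as written; the only minor comment is that your reduction to intervals is unnecessary, since the rank-one averaging identity $\int_\R\mu_s(J)\,ds=|J|$ holds for arbitrary Borel sets $J$.
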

\noindent We refer to~\cite{MR2509108,MR2307751,MR2378428} for simple
proofs and more details on the Wegner estimate.\\
Another crucial estimate is the Minami estimate.
\begin{Th}[\cite{MR97d:82046,MR2360226,MR2290333,MR2505733}]
  \label{thr:2}
  There exists $C>0$ such that, for $J\subset K$, and $\Lambda$, a
  cube in $\Z^d$, one has
  \begin{equation}
    \label{eq:2}
    \E\left[\text{tr}(\car_J(H_\omega(\Lambda)))
      \cdot\text{tr}(\car_K(H_\omega(\Lambda))-1)\right]\leq
    C |J|\,|K|\,|\Lambda|^2.
  \end{equation}
\end{Th}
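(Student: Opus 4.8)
\medskip\noindent The plan is to establish~\eqref{eq:2} by reducing it to a resolvent estimate and then to Minami's rank-two spectral-averaging lemma. Fix $E\in\R$ and $\eta>0$, put $z=E+i\eta$, $G(z)=(H_\omega(\Lambda)-z)^{-1}$, $A:=\im G(z)\ge 0$ and $J:=(E-\eta,E+\eta)$. Diagonalising $H_\omega(\Lambda)$, the eigenvalue of $A$ at an eigenvector for $E_n$ is $\eta\big((E_n-E)^2+\eta^2\big)^{-1}$, which exceeds $1/(2\eta)$ as soon as $E_n\in J$; hence $\car_J(H_\omega(\Lambda))\le 2\eta\,A$ as nonnegative operators, and, writing $w_k:=\eta\big((E_k-E)^2+\eta^2\big)^{-1}$,
\begin{align*}
  \tr\car_J(\tr\car_J-1)
  &=\sum_{m\neq n}\car_J(E_m)\car_J(E_n)\le(2\eta)^2\sum_{m\neq n}w_m w_n\\
  &=(2\eta)^2\big[(\tr A)^2-\tr(A^2)\big]=(2\eta)^2\sum_{x\neq y}\det\big(\Pi_{xy}A\,\Pi_{xy}\big),
\end{align*}
where $\Pi_{xy}$ is the orthogonal projection onto $\mathrm{span}(e_x,e_y)$ and the $2\times 2$ determinant is taken on its range (so the summand is $A_{xx}A_{yy}-|A_{xy}|^2$); the diagonal terms $x=y$ cancel identically --- which is precisely why~\eqref{eq:2} involves $\tr\car_K-1$ and not $\tr\car_K$.

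The heart of the matter is then \emph{Minami's lemma}: for all $x\neq y$ in $\Lambda$ and all $z\in\C^+$, $\E\big[\det\big(\Pi_{xy}\,\im G(z)\,\Pi_{xy}\big)\big]\le\pi^2\|g\|_\infty^2$. I would prove it by freezing all $\omega_k$ with $k\notin\{x,y\}$ and applying the Schur-complement (Krein) formula on the two-dimensional space $\mathrm{ran}\,\Pi_{xy}$: this writes $\Pi_{xy}G(z)\Pi_{xy}=\big(M(z)+\diag(\omega_x,\omega_y)\big)^{-1}$, where $M(z)$ does not depend on $(\omega_x,\omega_y)$ and $\im M(z)<0$ (strictly dissipative, since $\im z>0$). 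One then gets the identity
\begin{equation*}
  \det\big(\Pi_{xy}\,\im G(z)\,\Pi_{xy}\big)=\det\big(-\im M(z)\big)\,\big|\det\big(M(z)+\diag(\omega_x,\omega_y)\big)\big|^{-2},
\end{equation*}
so that, bounding $g$ by $\|g\|_\infty$ and then averaging also over the remaining $\omega_k$, the lemma reduces to the purely two-dimensional inequality
\begin{equation*}
  \int_{\R^2}\frac{d\omega_x\,d\omega_y}{\big|\det\big(M+\diag(\omega_x,\omega_y)\big)\big|^{2}}\le\frac{\pi^2}{\det(-\im M)}\qquad\text{whenever }\im M<0 .
\end{equation*}
This last step --- carrying out the $\omega_y$-integral by residues and then estimating the resulting quadratic in $\omega_x$ --- is the one genuinely delicate point, and I expect it to be the main obstacle; everything around it is soft.

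Summing the lemma over the at most $|\Lambda|^2$ ordered pairs $x\neq y$ gives $\E\big[(\tr A)^2-\tr(A^2)\big]\le\pi^2\|g\|_\infty^2\,|\Lambda|^2$; combined with the first display and the choice $\eta=|J|/2$ this yields $\E\big[\tr\car_J(\tr\car_J-1)\big]\le\pi^2\|g\|_\infty^2\,|J|^2\,|\Lambda|^2$, that is,~\eqref{eq:2} in the case $K=J$.

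It remains to pass from $K=J$ to general $J\subseteq K$. When $|K|$ is comparable to $|J|$ this is immediate, since $\tr\car_J(\tr\car_K-1)\le\tr\car_K(\tr\car_K-1)$. In general one writes $\tr\car_J(\tr\car_K-1)=\tr\car_J(\tr\car_J-1)+\tr\car_J\cdot\tr\car_{K\setminus J}$, controls the first term as above, cuts $K\setminus J$ (a union of at most two intervals) into $O(|K|/|J|)$ consecutive subintervals of length $\le|J|$, and estimates each disjoint cross-term through the single-interval estimate together with the Wegner estimate~\eqref{eq:1}; summing over the $O(|K|/|J|)$ pieces produces the factor $|J|\,|K|$. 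This last passage is the Combes--Germinet--Klein refinement; the bookkeeping of the resulting lower-order contributions is the only further point requiring care, and I would not detail it here.
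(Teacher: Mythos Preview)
The paper does not prove Theorem~\ref{thr:2} at all: it is quoted from the literature, with the sentence ``For $J=K$, the estimate~\eqref{eq:2} was proved in~\cite{MR97d:82046,MR2360226,MR2290333,MR2505733}; for $J\not=K$, it can be found in~\cite{MR2505733}'' immediately following its statement. So there is no ``paper's own proof'' to compare against.

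That said, your sketch is a faithful outline of the argument in those references. The reduction $\car_J\le 2\eta\,\im G$ and the algebraic identity $(\tr A)^2-\tr(A^2)=\sum_{x\ne y}\det(\Pi_{xy}A\Pi_{xy})$ are exactly Minami's original setup; the Schur--Krein rewriting of the $2\times2$ block and the two-dimensional integral inequality are the Graf--Vaghi / Bellissard--Hislop--Stolz simplification of Minami's determinant computation; and the passage from $J=K$ to $J\subset K$ by slicing $K\setminus J$ into pieces of length~$|J|$ is precisely the Combes--Germinet--Klein extension. One small point: for the cross-terms $\E[\tr\car_J\cdot\tr\car_{J'}]$ with $J,J'$ disjoint and $|J'|\le|J|$, you don't actually need Wegner separately --- the elementary inequality $mn\le(m+n)(m+n-1)$ for nonnegative integers lets you bound each such term by the already-proved single-interval Minami estimate on $J\cup J'$, which is cleaner than what you wrote. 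Otherwise the plan is sound.
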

\noindent For $J=K$, the estimate~\eqref{eq:2} was proved
in~\cite{MR97d:82046,MR2360226,MR2290333,MR2505733}; for $J\not=K$, it
can be found in~\cite{MR2505733}. In their nature,~\eqref{eq:8}
or~\eqref{eq:16} and~\eqref{eq:2} are quite similar: the Minami
estimate can be interpreted as a decorrelation estimate for close
together eigenvalues. It can be used to obtain the counterparts of
Theorems~\ref{thr:4} and~\ref{thr:18} when $E$ and $E'$ tend to each
other as $|\Lambda|\to+\infty$ (see~\cite{Ge-Kl:10}).
\par Note that the Minami estimate~\eqref{eq:2} has been proved for
the discrete Anderson model on intervals $I$ irrelevant of the
spectral type of $H_\omega$ in $I$. Our proof of the decorrelation
estimates~\eqref{eq:8} and~\eqref{eq:16} makes use of the fact that
$I$ lies in the localized region.
\subsubsection{The localized regime}
\label{sec:localized-regime}
Let us now give a precise description of what we mean with the region
of localization or the localized regime. We prove
\begin{Pro}
  \label{pro:1}
  Recall that $I=I_+\cup I_-$ is the region of $\Sigma$ where the
  finite volume fractional moment criteria of~\cite{MR2002h:82051} for
  $H_\omega(\Lambda)$ are verified for $\Lambda$ sufficiently large.\\
  Then,
  \begin{description}
  \item[(Loc)] there exists $\nu>0$ such that, for any $p>0$, there
    exists $q>0$ and $L_0>0$ such that, for $L\geq L_0$, with
    probability larger than $1-L^{-p}$, if
    \begin{enumerate}
    \item $\varphi_{n,\omega}$ is a normalized eigenvector of
      $H_{\omega}(\Lambda_L)$ associated to an energy $E_{n,\omega}\in
      I$,
    \item $x_{n,\omega}\in \Lambda_L$ is a maximum of
      $x\mapsto|\varphi_{n,\omega}(x)|$ in $\Lambda_L$,
    \end{enumerate}
    then, for $x\in\Lambda_L$, one has
    \begin{equation}
      \label{eq:19}
      |\varphi_{n,\omega}(x)|\leq L^q e^{-\nu |x-x_{n,\omega}|}.
    \end{equation}
    The point $x_{n,\omega}$ is called a localization center for
    $\varphi_{n,\omega}$ or $E_{n,\omega}$.
  \end{description}
\end{Pro}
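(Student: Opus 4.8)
The plan is to derive Proposition~\ref{pro:1} from the finite-volume fractional moment bounds of~\cite{MR2002h:82051} in two stages: first upgrade the fractional moment decay to a statement about the decay of eigenfunction correlators, then convert the latter into the polynomially-corrected exponential decay~\eqref{eq:19} for individual eigenfunctions with high probability.

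First I would recall that in the region $I$ where the finite-volume fractional moment criterion holds, there are constants $s\in(0,1)$, $C>0$ and $\nu_0>0$ such that for $\Lambda_L$ large enough one has $\esp\bigl[\|\car_I(H_\omega(\Lambda_L))\,\delta_x\,;\,\delta_y\|^s\bigr]\leq C e^{-\nu_0|x-y|}$ uniformly in $L$ and $x,y\in\Lambda_L$ — this is precisely the eigenfunction correlator bound that the fractional moment method produces from resolvent estimates (Aizenman--Molchanov style, made finite-volume in the cited work). In particular, summing a suitable matrix element, for each $x,y\in\Lambda_L$,
\begin{equation*}
  \esp\Bigl[\sum_{n:\,E_{n,\omega}\in I}|\varphi_{n,\omega}(x)|\,|\varphi_{n,\omega}(y)|\Bigr]\leq C e^{-\nu_0|x-y|}.
\end{equation*}

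Next, fix $p>0$. I would choose $\nu\in(0,\nu_0)$ and introduce the "bad'' event that for some pair $(x,y)\in\Lambda_L^2$ the correlator $\sum_{n:\,E_{n,\omega}\in I}|\varphi_{n,\omega}(x)||\varphi_{n,\omega}(y)|$ exceeds $e^{-\nu|x-y|}$. By the Chebyshev/Markov inequality applied to the expectation bound above, the probability of a fixed such pair failing is at most $C e^{-(\nu_0-\nu)|x-y|}$; but this is not summable over all pairs, so instead I would restrict attention to pairs with $|x-y|\geq (\eta\log L)$ for a large constant $\eta$ (depending on $p$, $\nu_0$, $\nu$, $d$), for which the single-pair failure probability is $\leq C L^{-(\nu_0-\nu)\eta}$, and a union bound over the $O(L^{2d})$ pairs then gives total bad probability $\leq L^{-p}$ once $\eta$ is large enough. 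On the complementary good event (probability $\geq 1-L^{-p}$): take any normalized eigenvector $\varphi_{n,\omega}$ with $E_{n,\omega}\in I$ and let $x_{n,\omega}$ be a maximum point of $|\varphi_{n,\omega}|$; for any $x$ with $|x-x_{n,\omega}|\geq \eta\log L$ we get $|\varphi_{n,\omega}(x)|\cdot|\varphi_{n,\omega}(x_{n,\omega})|\leq e^{-\nu|x-x_{n,\omega}|}$, hence, since $|\varphi_{n,\omega}(x_{n,\omega})|\geq N^{-1/2}\geq c L^{-d/2}$ by normalization and $\#\Lambda_L=N$, we obtain $|\varphi_{n,\omega}(x)|\leq C L^{d/2}e^{-\nu|x-x_{n,\omega}|}$; for $|x-x_{n,\omega}|<\eta\log L$ we simply use $|\varphi_{n,\omega}(x)|\leq 1\leq C L^{\nu\eta}e^{-\nu|x-x_{n,\omega}|}$. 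Taking $q:=\max(d/2,\nu\eta)+1$ (and enlarging the constant into $L^q$) yields~\eqref{eq:19} on the good event, which proves (Loc) with the $q$ produced from the given $p$.

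The main obstacle I anticipate is not the union-bound bookkeeping but establishing the eigenfunction correlator bound itself in the correct finite-volume, periodic-boundary-condition setting: one must verify that the finite-volume fractional moment criterion of~\cite{MR2002h:82051}, which is phrased in terms of resolvent fractional moments on cubes, indeed implies a uniform-in-$L$ exponential decay bound for $\esp\|\car_I(H_\omega(\Lambda_L))\delta_x;\delta_y\|^s$; this requires the standard (but slightly technical) passage from resolvent decay to spectral-projector/correlator decay via contour integration or the Combes--Thomas argument combined with the a priori Wegner bound to control the density of states in $I$. Once that input is in hand, the rest is the routine probabilistic argument sketched above, and the only care needed is to make the exponents ($\nu<\nu_0$, $\eta$ large, $q$ depending on $p$) consistent.
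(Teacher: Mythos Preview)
Your proposal is correct and follows essentially the same strategy as the paper: pull the exponential decay of eigenfunction correlators from the finite-volume fractional moment criterion (the paper invokes this directly as (A.6) of~\cite{MR2002h:82051}, namely $\esp(|\mu_{\omega,\Lambda}^{x,y}|(F))\leq Ce^{-\alpha|x-y|}$, so the ``main obstacle'' you worry about is already taken care of by the cited reference), then use Markov's inequality and the trivial lower bound $|\varphi_{n,\omega}(x_{n,\omega})|\geq|\Lambda_L|^{-1/2}$ at a maximum. The two arguments differ in bookkeeping: the paper first partitions $I$ into tiny intervals and uses Minami's estimate to guarantee each contains at most one eigenvalue, then applies a \emph{single} Markov inequality to the weighted sum $\sum_n\sum_{x,y}e^{s|x-y|}|\mu_{\omega,\Lambda}^{x,y}|(I_n)$; you instead bound a single eigenfunction term by the full correlator $\sum_{n}|\varphi_n(x)||\varphi_n(y)|$, apply Markov pair-by-pair, and union-bound over pairs with $|x-y|\geq\eta\log L$, handling close pairs trivially. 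Your route avoids Minami altogether (a small gain), at the cost of the case split on $|x-y|$; the paper's weighted-sum Markov is tidier and gives the bound for all $x$ in one stroke. Either way the output is~\eqref{eq:19} with $q$ depending polynomially on $p$.
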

\noindent Note that, by Minami's estimate, the eigenvalues of
$H_\omega(\Lambda)$ are almost surely simple. Thus, we can associate a
localization center to an eigenvalue as it is done in
Proposition~\ref{pro:1}.\\
In its spirit, this result is not new (see
e.g.~\cite{MR2002h:82051,MR2203782,Ge-Kl:10}). We state it in a form
convenient for our purpose. We prove Proposition~\ref{pro:1} in
section~\ref{sec:proof-proposition}
\subsection{The proof of Lemmas~\ref{le:1} and~\ref{le:9}}
\label{sec:proof-theorems}
The basic idea of the proof is to show that, when $\omega$ varies, two
eigenvalues of $H_\omega(\Lambda)$ cannot vary in a synchronous
manner, or, put in another way, locally in $\omega$, if $E(\omega)$
and $E'(\omega)$ denote the two eigenvalues under consideration, for
some $\gamma$ and $\gamma'$, the mapping
$(\omega_\gamma,\omega_{\gamma'})\mapsto (E(\omega),E'(\omega))$ is a
local diffeomorphism when all the other random variables, that is
$(\omega_\alpha)_{\alpha\not\in\{\gamma,\gamma'\}}$, are fixed.\\
As we are in the localized regime, we will exploit this by noting that
eigenvalues of $H_\omega(\Lambda)$ can only depend significantly of
$(\log L)^{d}$ random variables i.e. we can study what happens in
cubes that are of side-length $\log L$ while the energy interval where
we want to control things are of size $L^{-d}$. This is the essence of
Lemma~\ref{le:3} below. This lemma is proved under the general
assumptions~\eqref{eq:1},~\eqref{eq:2} and (Loc). In particular, it is
valid for if one replaces the discrete Laplacian with any convolution
matrix with exponentially decaying off diagonal coefficients.\\
The second step consists in analyzing the mapping
$(\omega_\gamma,\omega_{\gamma'})\mapsto (E(\omega),E'(\omega))$ on
these smaller cubes. The main technical result is Lemma~\ref{le:5}
that shows that, under the conditions of Lemmas~\ref{le:1}
and~\ref{le:9}, with a large probability, eigenvalues away from each
other cannot move synchronously as functions of the random
variables. Of course, this will not be correct for all random models:
constructing artificial degeneracies, one can easily coin up random
models where this is not the case.\\
Lemmas~\ref{le:1} and~\ref{le:9} will be proved in essentially the
same way; the only difference will be in Lemma~\ref{le:5} that
controls the joint dependence of two distinct eigenvalues on the
random variables.
\par Let $J_L=E+L^{-d}[-1,1]$ and $J'_L=E'+L^{-d}[-1,1]$. Pick $L$
sufficiently large so that $J_L\subset I$ and $J'_L\subset I$ are
contained in $I$ where (Loc) holds true.\\
Pick $c L^\alpha\leq \ell\leq L^\alpha/c$ where $c>0$ is fixed. By
\eqref{eq:2}, we know that
\begin{equation*}
  \P\left(\#[\sigma(H_\omega(\Lambda_\ell))\cap
    J_L]\geq2 \text{ or }
    \#[\sigma(H_\omega(\Lambda_\ell))\cap
    J'_L ]\geq2\right)\leq C(\ell/L)^{2d}
\end{equation*}
where $\#[\cdot]$ denotes the cardinality of $\cdot$.\\
So if we define
\begin{equation*}
  \P_0=\P\left(\#[\sigma(H_\omega(\Lambda_\ell))\cap J_L]=1,
    \#[\sigma(H_\omega(\Lambda_\ell))\cap J'_L]=1\right),
\end{equation*}
it suffices to show that
\begin{equation}
  \label{eq:9}
  \P_0\leq C(\ell/L)^{2d}\cdot
  \begin{cases}
    e^{(\log L)^\beta}&\text{ if the dimension }d=1,\\
    (\log L)^C&\text{ if the dimension }d>1.
  \end{cases}
\end{equation}
First, using the assumption (Loc), we are going to reduce the proof
of~\eqref{eq:9} to the proof of a similar estimate where the cube
$\Lambda_\ell$ will be replaced by a much smaller cube, a cube of side
length of order $\log L$. We prove
\begin{Le}
  \label{le:3}
  There exists $C>0$ such that, for $L$ sufficiently large,
  \begin{equation*}
    \P_0 \leq C(\ell/L)^{2d}+C(\ell/\tilde\ell)^d\,\P_1
  \end{equation*}
  where $\tilde\ell=C\log L$ and
  \begin{equation*}
    \P_1:=\P(\#[\sigma(H_\omega(\Lambda_{\tilde\ell}))\cap
    \tilde J_L]\geq1)\text{ and
    }\#[\sigma(H_\omega(\Lambda_{\tilde\ell}))\cap
    \tilde J'_L]\geq1)
  \end{equation*}
  where $\tilde J_L=E+L^{-d}(-2,2)$ and $\tilde J'_L=E'+L^{-d}(-2,2)$.
\end{Le}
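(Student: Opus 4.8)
The plan is to use the localization hypothesis (Loc) to replace the eigenvalues of $H_\omega(\Lambda_\ell)$ in the small intervals $J_L$, $J_L'$ by eigenvalues of $H_\omega$ restricted to much smaller cubes centered at the respective localization centers. First I would work on the good event $\mathcal{E}_L$ of probability at least $1-L^{-p}$ (with $p$ chosen large, say $p>2d$, so that $L^{-p}$ is absorbed into the $C(\ell/L)^{2d}$ error term) on which every eigenfunction of $H_\omega(\Lambda_\ell)$ with energy in $I$ decays like $\ell^q e^{-\nu|x-x_{n,\omega}|}$ away from its localization center — note here $L$ and $\ell$ are polynomially comparable, so $\ell^q$ is itself only a power of $L$. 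On $\mathcal{E}_L\cap\{\#[\sigma(H_\omega(\Lambda_\ell))\cap J_L]=1,\ \#[\sigma(H_\omega(\Lambda_\ell))\cap J_L']=1\}$, let $E(\omega)$ and $E'(\omega)$ be the two eigenvalues in question, with normalized eigenfunctions $\varphi,\varphi'$ and localization centers $x_\omega,x'_\omega$.

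Next I would run the standard argument turning exponential decay into an approximate eigenfunction for the restricted operator. Fix $\tilde\ell=C_0\log L$ with $C_0$ large enough that $\ell^q e^{-\nu\tilde\ell/2}\ll L^{-d}$ (possible since $\ell\le L^\alpha/c$, so this only requires $C_0\nu/2 > d + \alpha q/\alpha$ up to constants — a fixed choice of $C_0$). Let $\chi$ be a cutoff equal to $1$ on $\Lambda_{\tilde\ell/2}(x_\omega)$ and supported in $\Lambda_{\tilde\ell}(x_\omega)$; because $H_0=-\Delta$ is a finite-range (nearest-neighbor) operator, the commutator $[H_\omega(\Lambda_\ell),\chi]$ is supported on the boundary annulus $\{\tilde\ell/2\le |x-x_\omega|\le\tilde\ell\}$, where $|\varphi|\le \ell^q e^{-\nu\tilde\ell/2}$. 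Hence $\|(H_\omega(\Lambda_{\tilde\ell}(x_\omega)) - E(\omega))(\chi\varphi)\| \le \ell^q e^{-\nu\tilde\ell/2} \le L^{-d}$ for $L$ large, and $\|\chi\varphi\|\ge 1/2$ again by decay. By the spectral theorem this forces $\operatorname{dist}(E(\omega),\sigma(H_\omega(\Lambda_{\tilde\ell}(x_\omega))))\le 2L^{-d}$, i.e.\ $\sigma(H_\omega(\Lambda_{\tilde\ell}(x_\omega)))\cap \tilde J_L\ne\emptyset$ with $\tilde J_L=E+L^{-d}(-2,2)$; likewise for the primed quantities at $x'_\omega$. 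So the event whose probability is $\P_0$ is contained (up to the bad set of probability $\le L^{-p}$) in the event that there exist two points $a,a'\in\Lambda_\ell$ with $\sigma(H_\omega(\Lambda_{\tilde\ell}(a)))\cap\tilde J_L\ne\emptyset$ and $\sigma(H_\omega(\Lambda_{\tilde\ell}(a')))\cap\tilde J'_L\ne\emptyset$.

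Finally I would remove the dependence on the unknown centers $a,a'$ by a union bound, and then use translation invariance of the distribution of $\omega$ together with independence to factor the two cube events. There are at most $|\Lambda_\ell|/|\Lambda_{\tilde\ell}| \sim (\ell/\tilde\ell)^d$ essentially disjoint cube locations of side $\tilde\ell$ inside $\Lambda_\ell$ (one may tile $\Lambda_\ell$ by such subcubes; any center $a$ lies in one of them, and enlarging each subcube by a factor governed by $\tilde\ell$ still gives $O((\ell/\tilde\ell)^d)$ cubes with bounded overlap). For a fixed pair of such cubes: if they are disjoint the two spectral events depend on disjoint families of random variables, so their joint probability is the product $\P(\sigma(H_\omega(\Lambda_{\tilde\ell}))\cap\tilde J_L\ne\emptyset)\cdot\P(\sigma(H_\omega(\Lambda_{\tilde\ell}))\cap\tilde J'_L\ne\emptyset)$, and by stationarity this equals the square-type quantity whose value is exactly $\P_1$ after relabelling — more precisely it is $\le \P_1$ once we note $\P_1$ is itself defined as the joint probability on a single cube $\Lambda_{\tilde\ell}$, which by independence and stationarity dominates the product over two disjoint translates; if the two cubes overlap, the pair is one of only $O((\ell/\tilde\ell)^d)$ pairs (not $(\ell/\tilde\ell)^{2d}$) and its probability is at most $\P(\sigma(H_\omega(\Lambda_{2\tilde\ell}))\cap\tilde J_L\ne\emptyset)$, which by Wegner \eqref{eq:1} is $O(\tilde\ell^d L^{-d})=O((\log L)^d L^{-d})$, and summing over the $O((\ell/\tilde\ell)^d)$ overlapping pairs gives $O((\ell/L)^d (\log L)^{d} /\tilde\ell^{\,0})$, which is $o((\ell/L)^{2d})$ since $\ell\ge cL^\alpha$ — hence absorbed. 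Collecting, the disjoint pairs contribute $O((\ell/\tilde\ell)^d)\cdot\P_1$ and the rest is lower order, giving $\P_0\le C(\ell/L)^{2d}+C(\ell/\tilde\ell)^d\P_1$ as claimed.

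The main obstacle is the bookkeeping in this last combinatorial/independence step: one must be careful that the factor gained is $(\ell/\tilde\ell)^d$ and not $(\ell/\tilde\ell)^{2d}$, which is why the two centers must be treated asymmetrically — after placing the first cube (costing $(\ell/\tilde\ell)^d$ choices) the second cube's event must be shown, via (Loc), to force a spectral condition on a cube that can be taken \emph{independent} of the first, so that independence and stationarity collapse its contribution into the single quantity $\P_1$ rather than producing a second summation. Making the cutoff/commutator estimate produce a genuinely smaller cube whose location is controlled well enough for this, while keeping all error terms below $(\ell/L)^{2d}$, is the delicate point; everything else (the Combes–Thomas-type decay-to-quasimode passage, the Wegner bound on overlaps) is routine given (Loc), \eqref{eq:1} and \eqref{eq:2}.
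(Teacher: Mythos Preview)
Your overall strategy---use (Loc) to produce quasimodes on cubes of side $\tilde\ell\asymp\log L$ centered at the two localization centers, then do a union bound over the positions of these cubes and split into ``close'' versus ``far'' pairs---is exactly what the paper does. The quasimode construction in your second paragraph is fine. The problem is in the final combinatorial step, where you have the roles of the two cases reversed and one piece of arithmetic is wrong.

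First, the arithmetic: for the overlapping case you obtain a total contribution of order $(\ell/\tilde\ell)^d\cdot\tilde\ell^d L^{-d}=(\ell/L)^d$ and then claim this is $o((\ell/L)^{2d})$ because $\ell\ge cL^\alpha$. This is backwards: since $\ell\le L^\alpha/c$ with $\alpha<1$, one has $\ell/L\to0$, so $(\ell/L)^d\gg(\ell/L)^{2d}$, not the other way around. That term cannot be absorbed.

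Second, the correlation claim: for disjoint cubes you write that the product $\P(\sigma(H_\omega(\Lambda_{\tilde\ell}))\cap\tilde J_L\ne\emptyset)\cdot\P(\sigma(H_\omega(\Lambda_{\tilde\ell}))\cap\tilde J'_L\ne\emptyset)$ is $\le\P_1$. But $\P_1$ is the probability that a \emph{single} cube sees both intervals, i.e.\ $\P(A\cap B)$, and there is no general inequality $\P(A)\P(B)\le\P(A\cap B)$; indeed eigenvalue repulsion would push the other way. So this step is unjustified, and your attempted ``asymmetric'' rescue in the last paragraph does not repair it: once the two cubes are disjoint, independence gives you only the product, and there is no mechanism to collapse a sum over the second cube's position into $\P_1$.

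The fix is simply to swap the two cases. When the two localization centers are \emph{far apart} (disjoint cubes), use independence together with Wegner on each factor: the number of pairs is $O((\ell/\tilde\ell)^{2d})$ and each pair has probability $\le C(\tilde\ell/L)^{2d}$, giving $O((\ell/L)^{2d})$, which is the first term in the stated bound. When the two centers are \emph{close} (within $O(\tilde\ell)$ of each other), both quasimodes live in a single cube of side $O(\tilde\ell)$; there are only $O((\ell/\tilde\ell)^d)$ positions for such a cube, and the probability of the corresponding event is, by stationarity, exactly $\P_1$. This yields the term $C(\ell/\tilde\ell)^d\P_1$. This is precisely how the paper organizes the argument (its Lemma~\ref{le:2}).
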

\begin{proof}[Proof of Lemma~\ref{le:3}]
  Fix $C>0$ large so that $e^{-C\gamma\log L/2}\leq L^{-2d-q}$ where
  $q$ and $\gamma$ are given by assumption (Loc) where we choose
  $p=d$. Let $\Omega_0$ be the set of probability $1-L^{-p}$ where (1)
  and (2) in assumption (Loc) are satisfied.  Define $\tilde\ell=C\log
  L$. We prove
  \begin{Le}
    \label{le:2}
    There exists a covering of $\Lambda_\ell$ by cubes, say
    $\Lambda_\ell=\cup_{\gamma\in\Gamma}[\gamma+\Lambda_{\tilde\ell}]$,
    such that $\#\Gamma\asymp(\ell/\tilde\ell)^d$, and, if
    $\omega\in\Omega_0$ is such that $H_\omega(\Lambda_\ell)$ has
    exactly one eigenvalue in $J_L$ and exactly one eigenvalue in
    $J'_L$, then
    \begin{enumerate}
    \item either, there exists $\gamma$ and $\gamma'$ such that
      $\gamma+\Lambda_{\tilde\ell}\cap\gamma'+\Lambda_{\tilde\ell}
      =\emptyset$ and
      \begin{itemize}
      \item $H_\omega(\gamma+\Lambda_{\tilde\ell})$ has exactly one
        e.v. in $\tilde J_L$
      \item $H_\omega(\gamma'+\Lambda_{\tilde\ell})$ has exactly one
        e.v. in $\tilde J'_L$.
      \end{itemize}
    \item or $H_\omega(\Lambda_{5\tilde\ell}(\gamma))$ has exactly one
      e.v. in $\tilde J_L$ and exactly one e.v. in $\tilde J'_L$.
    \end{enumerate}
  \end{Le}
  \noindent We postpone the proof of Lemma~\ref{le:2} to complete that
  of Lemma~\ref{le:3}. Using the estimate on $\P(\Omega_0)$, the
  independence of $H_\omega(\gamma+\Lambda_{\tilde\ell})$ and
  $H_\omega(\gamma'+\Lambda_{\tilde\ell})$ when alternative (1) is the
  case in Lemma~\ref{le:2}, Wegner's estimate~\eqref{eq:1} and the
  fact the random variables are identically distributed, we compute
  \begin{equation*}
    \begin{split}
      \P_0&\leq L^{-2d}+C(\ell/\tilde\ell)^d\P\left(
        \left\{ \begin{matrix}
            \sigma(H_\omega(\Lambda_{3\tilde\ell}(0)))\cap \tilde
            J_L\not=\emptyset\\
            \sigma(H_\omega(\Lambda_{3\tilde\ell}(0)))\cap \tilde
            J'_L\not=\emptyset
          \end{matrix}\right\}\right)\\
      &\hskip2cm+C(\ell/\tilde\ell)^{2d}
      \P(\#[\sigma(H_\omega(\Lambda_{\tilde\ell}(0)))\cap \tilde
      J_L]\geq1)\P(\#[\sigma(H_\omega(\Lambda_{\tilde\ell}(0)))\cap
      \tilde J'_L]\geq1)\\
      &\leq CL^{-2d}+C(\ell/\tilde\ell)^{2d}(\tilde\ell/L)^{2d}+
      C(\ell/\tilde\ell)^d\,\P_1\leq
      C(\ell/L)^{2d}+C(\ell/\tilde\ell)^d\,\P_1
    \end{split}
  \end{equation*}
  where $\P_1$ is defined in Lemma~\ref{le:3} for $5\tilde\ell$
  replaced with $\tilde\ell$. This completes the proof of
  Lemma~\ref{le:3}.
\end{proof}
\begin{proof}[Proof of Lemma~\ref{le:2}]
  For $\gamma\in\tilde\ell\Z^d\cap\Lambda_\ell$, consider the cubes
  $(\gamma+\Lambda_{\tilde\ell})_{\gamma\in\tilde\ell\Z^d
    \cap\Lambda_\ell}$. They cover $\Lambda_\ell$. Recall that we are
  taking periodic boundary conditions. If the localization centers
  associated to the two eigenvalues of $H_\omega(\Lambda_\ell)$
  assumed to be respectively in $\tilde J_L$ and $\tilde J'_L$ are at
  a distance less than $3\tilde\ell$ from one another, then we can
  find $\gamma\in\tilde\ell\Z^d$ such that both localization centers
  belong $\gamma+\Lambda_{4\tilde\ell}$ (for $\tilde\ell=C\log L$ and
  $C>0$ sufficiently large). Thus,
  by the localization property (Loc), we are in case (2). \\
  If the distance is larger than $3\tilde\ell$, we can find
  $\gamma\in\tilde\ell\Z^d$ and $\gamma'\in\tilde\ell\Z^d$ such that
  each of the cubes $\gamma+\Lambda_{\tilde\ell/2}$ and
  $\gamma'+\Lambda_{\tilde\ell/2}$ contains exactly one of the
  localization centers and
  $(\gamma+\Lambda_{\tilde\ell/2})\cap(\gamma'+
  \Lambda_{\tilde\ell/2})=\emptyset$. So for $\tilde\ell=C\,\log L$
  and $C>0$ sufficiently large, by
  the localization property (Loc), we are in case (1).\\
  This completes the proof of Lemma~\ref{le:2}.
\end{proof}
\noindent We now proceed with the proof of~\eqref{eq:9}. Therefore, by
Lemma~\ref{le:3}, it suffices to prove that $\P_1$, defined in
Lemma~\ref{le:3}, satisfies, for some $C>0$,
\begin{equation}
  \label{eq:30}
  \P_1\leq C(\tilde \ell/L)^{2d}\cdot
  \begin{cases}
    e^{\tilde\ell^\beta}&\text{ if the dimension }d=1,\\
    \tilde\ell^C&\text{ if the dimension }d>1.
  \end{cases}
\end{equation}
Let $(E_j(\omega,\tilde\ell))_{1\leq j\leq (2\tilde\ell+1)^d}$ be the
eigenvalues of $H_\omega(\Lambda_{\tilde\ell})$ ordered in an
increasing way and repeated according to multiplicity.\\
Assume that $\omega\mapsto E(\omega)$ is the only eigenvalue of
$H_\omega(\Lambda_{\tilde\ell})$ in $J_L$. In this case, by standard
perturbation theory arguments (see e.g.~\cite{MR1335452,MR58:12429c}),
we know that
\begin{enumerate}
\item $E(\omega)$ being simple, $\omega\mapsto E(\omega)$ is real
  analytic, and if
  $\omega\mapsto\varphi(\omega)=(\varphi(\omega;\gamma))
  _{\gamma\in\Lambda_{\tilde\ell}}$ denotes the associated normalized
  real eigenvector, it is also real analytic in $\omega$;
\item one has $\partial_{\omega_\gamma}E(\omega)=
  \varphi^2(\omega;\gamma)\geq0$ which, in particular, implies that
  \begin{equation}
    \label{eq:65}
    \|\nabla_\omega E(\omega)\|_{\ell^1}=1;
  \end{equation}
\item the Hessian of $E$ is given by Hess$_\omega
  E(\omega)=((h_{\gamma\beta}))_{\gamma,\beta}$ where
  \begin{itemize}
  \item $\displaystyle h_{\gamma,\beta}=-2$Re$\displaystyle\langle
    (H_\omega(\Lambda_{\tilde\ell})-E(\omega))^{-1}\psi_\gamma(\omega),
    \psi_\beta(\omega)\rangle$,
  \item $\psi_\gamma= \varphi(\omega;\gamma)\Pi(\omega)\delta_\gamma$
  \item $\Pi(\omega)$ is the orthogonal projector on the orthogonal to
    $\varphi(\omega)$.
  \end{itemize}
\end{enumerate}
We prove
\begin{Le}
  \label{le:4}
  There exists $C>0$ such that
  \begin{equation*}
    \|\text{Hess}_\omega(E(\omega))\|_{\ell^\infty\to\ell^1}\leq
    \frac{C}{\dist(E(\omega),\sigma(H_\omega(\Lambda_{\tilde\ell}))
      \setminus\{E(\omega)\})}.
  \end{equation*}
\end{Le}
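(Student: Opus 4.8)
The plan is to start from the explicit formula for $\mathrm{Hess}_\omega E(\omega) = ((h_{\gamma,\beta}))$ given just above the statement, namely $h_{\gamma,\beta} = -2\,\mathrm{Re}\,\langle (H_\omega(\Lambda_{\tilde\ell})-E(\omega))^{-1}\psi_\gamma(\omega), \psi_\beta(\omega)\rangle$ with $\psi_\gamma = \varphi(\omega;\gamma)\,\Pi(\omega)\delta_\gamma$ and $\Pi(\omega)$ the orthogonal projection onto the orthogonal complement of the normalized eigenvector $\varphi(\omega)$. The key observation is that the resolvent $(H_\omega(\Lambda_{\tilde\ell})-E(\omega))^{-1}$ is being applied only to vectors lying in the range of $\Pi(\omega)$, i.e. in the spectral subspace of $H_\omega(\Lambda_{\tilde\ell})$ orthogonal to $\varphi(\omega)$. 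On that subspace the operator $H_\omega(\Lambda_{\tilde\ell}) - E(\omega)$ is invertible with norm of the inverse bounded by $1/\mathrm{dist}(E(\omega),\sigma(H_\omega(\Lambda_{\tilde\ell}))\setminus\{E(\omega)\})$, since the eigenvalues of the restriction are exactly the eigenvalues of $H_\omega(\Lambda_{\tilde\ell})$ other than $E(\omega)$.

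Concretely, I would write, for $u=(u_\gamma)_\gamma\in\ell^\infty$ and $v=(v_\beta)_\beta$ the vector computing the $\ell^1$ norm of $\mathrm{Hess}_\omega E(\omega)\,u$ by duality, the bilinear pairing
\begin{equation*}
  \langle \mathrm{Hess}_\omega E(\omega)\,u, v\rangle
  = -2\,\mathrm{Re}\,\Big\langle (H_\omega(\Lambda_{\tilde\ell})-E(\omega))^{-1}\,\Pi(\omega)\Big(\sum_\gamma u_\gamma\,\varphi(\omega;\gamma)\,\delta_\gamma\Big),\ \Pi(\omega)\Big(\sum_\beta v_\beta\,\varphi(\omega;\beta)\,\delta_\beta\Big)\Big\rangle.
\end{equation*}
Setting $\Phi_u=\sum_\gamma u_\gamma\,\varphi(\omega;\gamma)\,\delta_\gamma$ and similarly $\Phi_v$, one has $\|\Phi_u\|_{\ell^2}\le \|u\|_{\ell^\infty}\,\|\varphi(\omega)\|_{\ell^2}=\|u\|_{\ell^\infty}$ because $\varphi(\omega)$ is normalized; likewise $\|\Phi_v\|_{\ell^2}\le\|v\|_{\ell^\infty}\le\|v\|_{\ell^1}$, but it is cleaner to bound $\|\Phi_v\|_{\ell^2}$ using the Cauchy--Schwarz inequality $\sum_\beta |v_\beta|\,|\varphi(\omega;\beta)|\le \|v\|_{\ell^1}^{1/2}\big(\sum_\beta |v_\beta|\,\varphi(\omega;\beta)^2\big)^{1/2}\le\|v\|_{\ell^1}$ — in any case a harmless constant. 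Since $\Pi(\omega)$ is a contraction and $(H_\omega(\Lambda_{\tilde\ell})-E(\omega))^{-1}$ restricted to $\mathrm{Ran}\,\Pi(\omega)$ has norm $\mathrm{dist}(E(\omega),\sigma(H_\omega(\Lambda_{\tilde\ell}))\setminus\{E(\omega)\})^{-1}$, Cauchy--Schwarz on the inner product gives $|\langle \mathrm{Hess}_\omega E(\omega)\,u, v\rangle|\le C\,\|u\|_{\ell^\infty}\,\|v\|_{\ell^1}\,/\,\mathrm{dist}(E(\omega),\sigma(H_\omega(\Lambda_{\tilde\ell}))\setminus\{E(\omega)\})$. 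Taking the supremum over $\|v\|_{\ell^1}\le1$ identifies the left side with $\|\mathrm{Hess}_\omega E(\omega)\,u\|_{\ell^1}$, and then the supremum over $\|u\|_{\ell^\infty}\le1$ yields the claimed operator-norm bound.

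I do not expect a serious obstacle here; this is essentially bookkeeping with the resolvent restricted to a spectral subspace. The one point that deserves a line of care is making sure the resolvent in the formula for $h_{\gamma,\beta}$ really acts only through $\mathrm{Ran}\,\Pi(\omega)$ — i.e. that $\psi_\gamma=\varphi(\omega;\gamma)\Pi(\omega)\delta_\gamma$ already lies in that range and that $(H_\omega(\Lambda_{\tilde\ell})-E(\omega))^{-1}$ is to be understood as the reduced resolvent, which is the standard second-order perturbation formula for a simple eigenvalue. Given that, the norm of the reduced resolvent is exactly the reciprocal of the spectral gap, and the constant $C$ (here one may even take $C=2$ after the normalization bookkeeping) is explicit.
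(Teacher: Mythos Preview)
Your approach is essentially identical to the paper's: both write the Hessian as a bilinear form, set $\psi_a=\Pi(\omega)\sum_\gamma a_\gamma\,\varphi(\omega;\gamma)\,\delta_\gamma$, bound $\|\psi_a\|_2\le\|a\|_\infty$ via the normalization of $\varphi$, and invoke the reduced-resolvent bound on $\mathrm{Ran}\,\Pi(\omega)$. One slip to correct: the duality pairing for the $\ell^1$ norm is $\|w\|_{\ell^1}=\sup_{\|v\|_{\ell^\infty}\le 1}|\langle w,v\rangle|$, not $\sup_{\|v\|_{\ell^1}\le 1}$ (the latter would give $\|w\|_{\ell^\infty}$); with $v$ ranging over the $\ell^\infty$ unit ball, the same bound $\|\Phi_v\|_{\ell^2}\le\|v\|_{\ell^\infty}$ that you already wrote applies symmetrically, and the argument goes through with $C=2$.
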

\begin{proof}[Proof of Lemma~\ref{le:4}]
  First, note that, by definition, $H_\omega(\Lambda_{\tilde\ell})$
  depends on $(2\tilde\ell+1)^d$ random variables so that Hess$_\omega
  E(\omega)$ is a $(2\tilde\ell+1)^d\times (2\tilde\ell+1)^d$ matrix. Hence,
  for
  $a=(a_\gamma)_{\gamma\in\Lambda_{\tilde\ell}}\in\C^{\Lambda_{\tilde\ell}}$
  and
  $b=(b_\gamma)_{\gamma\in\Lambda_{\tilde\ell}}\in\C^{\Lambda_{\tilde\ell}}$,
  we compute
  \begin{equation*}
    \langle \text{Hess}_\omega E\,
    a,b\rangle=-2\langle(H_\omega(\Lambda_{\tilde\ell})-E(\omega))^{-1}
    \psi_a,\psi_b\rangle
  \end{equation*}
  where
  \begin{equation*}
    \psi_a= \Pi(\omega)
    \left(\sum_{\gamma\in\Lambda_{\tilde\ell}} a_\gamma
      |\delta_\gamma\rangle
      \langle\delta_\gamma|\right)\varphi(\omega)=
    \sum_{\gamma\in\Lambda_{\tilde\ell}} a_\gamma\varphi(\omega;\gamma)
    \Pi(\omega)\delta_\gamma.
  \end{equation*}
  Hence, $\|\psi_a\|_2\leq C\|a\|_\infty$ and, for some $C>0$,
  \begin{equation*}
    \|\text{Hess}_\omega(E(\omega))\|_{\ell^\infty\to\ell^1}\leq
    \frac{C}{\dist(E(\omega),\sigma(H_\omega(\Lambda_{\tilde\ell}))
      \setminus\{E(\omega)\})}.
  \end{equation*}
  This completes the proof of Lemma~\ref{le:4}.
\end{proof}
\noindent Note that, using~\eqref{eq:2}, Lemma~\ref{le:4} yields, for
$\varepsilon\in(4L^{-d},1)$,
\begin{equation*}
  \P\left(\left\{\omega;\
      \begin{matrix}
        \sigma(H_\omega(\Lambda_{\tilde\ell}))\cap
        \tilde J_L=\{E(\omega)\}\\
        \|\text{Hess}_\omega(E(\omega))\|_{\ell^\infty\to\ell^1}
        \geq\varepsilon^{-1}
      \end{matrix}
    \right\}\right)\leq C\varepsilon\tilde\ell^{2d} L^{-d}.
\end{equation*}
Hence, for $\varepsilon\in(4L^{-d},1)$, one has
\begin{equation}
  \label{eq:64}
  \P_1\leq C\varepsilon\tilde\ell^{2d} L^{-d}+\P_\varepsilon
\end{equation}
where
\begin{equation}
  \label{eq:31}
  \P_\varepsilon=\P(\Omega_0(\varepsilon))    
\end{equation}
and
\begin{equation}
  \label{eq:33}
  \Omega_0(\varepsilon)=\left\{\omega;\ 
    \begin{matrix}
      \sigma(H_\omega(\Lambda_{\tilde\ell}))\cap \tilde
      J_L=\{E(\omega)\}\\
      \{E(\omega)\}=\sigma(H_\omega(\Lambda_{\tilde\ell}))\cap
      (E-C\varepsilon,E+C\varepsilon),\\
      \sigma(H_\omega(\Lambda_{\tilde\ell}))\cap \tilde
      J'_L=\{E'(\omega)\}\\\{E'(\omega)\}
      =\sigma(H_\omega(\Lambda_{\tilde\ell}))\cap
      (E'-C\varepsilon,E'+C\varepsilon)
    \end{matrix}
  \right\}
\end{equation}
We will now estimate $\P_\varepsilon$. The basic idea is to prove that
the eigenvalues $E(\omega)$ and $E'(\omega)$ depend effectively on at
least two independent random variables. A simple way to guarantee this
is to ensure that their gradients with respect to $\omega$ are not
co-linear. In the present case, the gradients have non negative
components and their $\ell^1$-norm is $1$; hence, it suffices to prove
that they are different to ensure that they are not co-linear.\\
We prove
\begin{Le}
  \label{le:5}
  Let $L\geq1$. For the discrete Anderson model, one has
  \begin{enumerate}
  \item in any dimension $d$: for $\Delta E>2d$, if the random
    variables $(\omega_\gamma)_{\gamma\in\Lambda_L}$ are bounded by
    $K$, for $E_j(\omega)$ and $E_k(\omega)$, simple eigenvalues of
    $H_\omega(\Lambda_L)$ such that $|E_k(\omega)-E_j(\omega)|\geq
    \Delta E$, one has
    \begin{equation}
      \label{eq:37}
      \|\nabla_\omega(E_j(\omega)-E_k(\omega))\|_2\geq \frac{\Delta E-2d}{K}
      (2L+1)^{-d/2};
    \end{equation}
  \item in dimension 1: fix $E<E'$ and $\beta>1/2$; let $\P$ denote
    the probability that there exists $E_j(\omega)$ and $E_k(\omega)$,
    simple eigenvalues of $H_\omega(\Lambda_L)$ such that
    $|E_k(\omega)-E|+|E_j(\omega)-E'|\leq e^{-L^{\beta}}$ and such
    that
    \begin{equation}
      \label{eq:38}
      \|\nabla_\omega(E_j(\omega)-E_k(\omega))\|_1\leq e^{-L^{\beta}};
    \end{equation}
    then, there exists $c>0$ such that
    \begin{equation}
      \label{eq:39}
      \P\leq e^{-c L^{2\beta}}.
    \end{equation}
  \end{enumerate}
\end{Le}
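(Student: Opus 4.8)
The two parts share the first-order perturbation formula recalled above: for a simple eigenvalue $E_n(\omega)$ with normalized eigenvector $\varphi_n(\omega)$ one has $\partial_{\omega_\gamma}E_n(\omega)=\varphi_n^2(\omega;\gamma)$, so $\nabla_\omega E_n(\omega)$ is the \emph{profile} $(\varphi_n^2(\gamma))_\gamma$ of $\varphi_n$: a nonnegative vector of $\ell^1$-norm $1$. Both statements are thus about how close the profiles of two eigenvectors can be.

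For part (1), test the eigenvalue equation against the eigenvector: $\langle H_\omega\varphi_j,\varphi_j\rangle=E_j(\omega)$ gives $\sum_\gamma\omega_\gamma\varphi_j^2(\gamma)=E_j(\omega)-\langle H_0\varphi_j,\varphi_j\rangle$, and likewise for $k$. Subtracting and bounding the free parts, both of which lie in $\spec H_0\subset[-2d,2d]$,
\[
  \Big|\sum_\gamma\omega_\gamma\big(\varphi_j^2(\gamma)-\varphi_k^2(\gamma)\big)\Big|\geq|E_j(\omega)-E_k(\omega)|-2d\geq\Delta E-2d .
\]
Since the left-hand side is at most $\|\omega\|_\infty\,\|\nabla_\omega(E_j-E_k)\|_1\leq K\,\|\nabla_\omega(E_j-E_k)\|_1$, this is a lower bound on the $\ell^1$-norm of the gradient; Cauchy--Schwarz over the $(2L+1)^d$ variables, $\|v\|_1\leq(2L+1)^{d/2}\|v\|_2$, then gives \eqref{eq:37}. (Using $\sum_\gamma(\varphi_j^2(\gamma)-\varphi_k^2(\gamma))=0$ one may recenter the $\omega_\gamma$'s first, to phrase the bound through $\mathrm{diam}\,\supp g$.)

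Part (2) uses the one-dimensional transfer-matrix structure, i.e.\ the three-term recursion $\varphi(n+1)=(E-\omega_n)\varphi(n)-\varphi(n-1)$. The hypothesis $\|\nabla_\omega(E_j-E_k)\|_1\leq e^{-L^\beta}$ says $\sum_n|\varphi_j^2(n)-\varphi_k^2(n)|\leq e^{-L^\beta}$, hence $|\varphi_j(n)|=|\varphi_k(n)|+O(e^{-L^\beta/2})$ at every site, i.e.\ $\varphi_j(n)=\varepsilon(n)\varphi_k(n)+O(e^{-L^\beta/2})$ for suitable signs $\varepsilon(n)\in\{\pm1\}$. Inserting this into the recursion for $\varphi_j$ (energy $E_j$), subtracting $\varepsilon(n)$ times the recursion for $\varphi_k$ (energy $E_k$), and multiplying through by $\varepsilon(n)$, one gets the \emph{local} identity
\[
  \big(\varepsilon(n)\varepsilon(n+1)-1\big)\varphi_k(n+1)+\big(\varepsilon(n)\varepsilon(n-1)-1\big)\varphi_k(n-1)=(E_j-E_k)\,\varphi_k(n)+O(e^{-L^\beta/2}) .
\]
As $|E_j-E_k|$ is bounded below by a fixed constant (since $E_j\to E'\neq E\leftarrow E_k$) while each bracket is $0$ or $-2$, at every site where $|\varphi_k(n)|\gg e^{-L^\beta/2}$ one of the following holds, up to an error exponentially small in $L^\beta$: $\varepsilon$ cannot be locally constant at $n$ (else the left side vanishes), and moreover either a Pr\"ufer ratio $\varphi_k(n\pm1)/\varphi_k(n)$ is pinned to $-(E_j-E_k)/2$, or (both brackets $=-2$, using the $\varphi_k$-recursion to replace $\varphi_k(n+1)+\varphi_k(n-1)$) the variable $\omega_n$ is pinned to $(E_j+E_k)/2$. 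Chaining the Riccati recursion $\varphi_k(n+1)/\varphi_k(n)=(E_k-\omega_n)-\varphi_k(n-1)/\varphi_k(n)$, a ratio pinned at two consecutive sites pins the intermediate $\omega_n$; so on the bad event at least $\asymp L^\beta$ of the variables $\omega_n$ are confined, conditionally on all the others, to intervals of length $\leq e^{-cL^\beta}$.

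Lemma~\ref{le:5}(2) then follows by a union bound: there are polynomially many pairs $(j,k)$ and at most $2^{2L+1}=e^{O(L)}$ sign patterns $\varepsilon$ --- and $e^{O(L)}=e^{o(L^{2\beta})}$ precisely because $\beta>1/2$ --- while for each fixed choice, conditioning the $\asymp L^\beta$ confined variables one at a time and using the boundedness of $g$ contributes $(C\,e^{-cL^\beta})^{\asymp L^\beta}=e^{-c'L^{2\beta}}$; multiplying, $\P\leq e^{O(L)-c'L^{2\beta}}\leq e^{-c''L^{2\beta}}$ for $L$ large. The crux --- and the main obstacle --- is the book-keeping in the elimination step: one must control the propagation of the $O(e^{-L^\beta/2})$ profile errors (their competition with the a priori exponential growth of the transfer matrices is what caps the usable range of sites at $\asymp L^\beta$), handle the sites where $\varphi_k$ is small and the various shapes of $\varepsilon$ (isolated single-site spikes pin a potential value, longer runs of agreement pin a ratio) while showing $\asymp L^\beta$ sites remain usable via transfer-matrix lower bounds on $\varphi_k$, and extract $\asymp L^\beta$ confinements that are genuinely \emph{conditionally independent}. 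This local analysis --- no two consecutive zeros of $\varphi_k$, explicit $2\times2$ transfer matrices --- is exactly where the one-dimensionality and the assumption that $H_0$ is the free Laplacian are used.
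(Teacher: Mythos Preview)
For part~(1) your argument is essentially the paper's: compute $\omega\cdot\nabla_\omega(E_j-E_k)$ via the eigenvalue equations, bound the free part, and apply Cauchy--Schwarz.

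For part~(2) your route is genuinely different from the paper's, though built on the same starting observation (the partition into sites where $\varphi_j\approx\varphi_k$ versus $\varphi_j\approx-\varphi_k$). The paper separates the problem into two equations, the pair~(2.32): a \emph{deterministic} equation $[-(P\Delta Q+Q\Delta P)-\Delta E]u=h_1$ and a \emph{random} equation $[-(P\Delta P+Q\Delta Q)+V_\omega-\overline E]u=h_2$. The first is solved by a block decomposition: $P\Delta Q+Q\Delta P$ is an orthogonal sum of Dirichlet Laplacians on intervals $\mathcal C_l$, whose spectra and eigenvectors are explicit (Lemma~2.10), so $u$ is forced to be approximately a specific sine on each block (Lemma~2.11). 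A block-to-block propagation estimate (Lemma~2.12), obtained by reading the full eigenvalue equation at the junctions between blocks, then guarantees that on at least $cL^\beta$ blocks one has $|u_\gamma|\geq e^{-L^\beta/6}$. Feeding this into the random equation yields the constraints $|\omega_\gamma-\overline E|\leq e^{-L^\beta/12}$ at interior sites and a quadratic constraint at two-site blocks, and the union bound over the $2^{2L+1}$ partitions finishes as you do.

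Your approach bypasses the spectral decomposition and works locally with the three-term recursion: your identity $(\varepsilon(n)\varepsilon(n\pm1)-1)\varphi_k(n\pm1)=(E_j-E_k)\varphi_k(n)+O(e^{-L^\beta/2})$ is exactly the combination of the two equations in~(2.32) read at a single site, and your three cases (both brackets $0$ / one $-2$ / both $-2$) correspond to being outside $\mathcal C$, on the boundary of a block, or in its interior. The constraint $\omega_n\approx(E_j+E_k)/2$ in your ``both $-2$'' case is the paper's~(2.57). What the paper's spectral organization buys is precisely the step you flag as the main obstacle: Lemmas~2.11 and~2.12 give a clean, global proof that at least $cL^\beta$ sites carry $|u_\gamma|\geq e^{-L^\beta/6}$ and hence usable constraints, without having to chase the Riccati chaining and error propagation site by site. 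Your transfer-matrix heuristic (bounded norm, hence at most exponential decay of the pair norm over $\asymp L^\beta$ steps from a maximizing site) is the right intuition, but turning it into $\asymp L^\beta$ genuinely \emph{independent} constraints on the $\omega_n$---handling isolated flips, two-site blocks, and the interaction between ratio constraints---is exactly the book-keeping the block decomposition absorbs. Your sketch is correct in outline and could be completed along these lines; the paper's version is more structured and complete as written.
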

\noindent We postpone the proof of Lemma~\ref{le:5} for a while to
estimate $\P_\varepsilon$. Set
\begin{equation}
  \label{eq:7}
  \lambda=\lambda_L=
  \begin{cases}
    e^{-\tilde\ell^{\beta}}&\text{ if the dimension }d=1,\\
    \frac{\Delta E-2d}{K}\tilde\ell^{-d/2}&\text{ if the dimension
    }d>1.
  \end{cases}
\end{equation}
For $\gamma$ and $\gamma'$ in $\Lambda_{\tilde\ell}$, define
\begin{equation}
  \label{eq:35}
  \Omega_{0,\beta}^{\gamma,\gamma'}(\varepsilon)=\Omega_0(\varepsilon)\cap
  \left\{\omega;\
    |J_{\gamma,\gamma'}(E(\omega),E'(\omega))
    |\geq \lambda\right\}
\end{equation}
where $J_{\gamma,\gamma'}(E(\omega),E'(\omega))$ is the Jacobian of
the mapping $(\omega_\gamma,\omega_{\gamma'})\mapsto
(E(\omega),E'(\omega))$ i.e.
\begin{equation*}
  J_{\gamma,\gamma'}(E(\omega),E'(\omega))=
  \left|\begin{matrix}
      \partial_{\omega_\gamma} E(\omega) & \partial_{\omega_{\gamma'}}
      E(\omega) \\ \partial_{\omega_\gamma} E'(\omega)
      &\partial_{\omega_{\gamma'}} E'(\omega)
    \end{matrix}\right|.
\end{equation*}
In section~\ref{sec:proof-lemma-1}, we prove
\begin{Le}
  \label{le:8}
  Pick $(u,v)\in(\R^+)^{2n}$ such that $\|u\|_1=\|v\|_1=1$. Then
  \begin{equation*}
    \max_{j\not=k}\left|\begin{matrix}u_j& u_k\\v_j&v_k\end{matrix}
    \right|^2\geq \frac1{4n^5}\|u-v\|^2_1.
  \end{equation*}
\end{Le}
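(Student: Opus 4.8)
The plan is to reduce the assertion to an elementary inequality about the two-dimensional vectors $(u_j,v_j)\in\R^2$. Write $w_j=(u_j,v_j)$, so that $\|u\|_1=\|v\|_1=1$ means that the $w_j$ have nonnegative entries summing to the vector $(1,1)$. The quantity $\left|\begin{matrix}u_j&u_k\\v_j&v_k\end{matrix}\right|$ is exactly the (signed) cross product $w_j\wedge w_k=u_jv_k-u_kv_j$, and the whole point is that this cross product measures how far apart the directions of $w_j$ and $w_k$ are. The key quantitative step I would isolate is: if $\sum_j w_j\wedge w_k$ is small for every $k$ (i.e. every $w_j$ is nearly parallel to the "average direction"), then $u$ and $v$ must be close in $\ell^1$. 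Concretely, I would try to prove a pointwise estimate of the form $|u_j-v_j|\le C\sum_{k}|u_jv_k-u_kv_j|$ valid for each $j$, possibly after separating indices where $u_j+v_j$ is large from the rest; summing over $j$ and using $\sum_k(u_k+v_k)=2$ then yields $\|u-v\|_1\le C'\max_{j\ne k}|u_jv_k-u_kv_j|$ with an explicit $C'$ polynomial in $n$.

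First I would handle the heart of the matter. Fix $j$ and consider $u_j-v_j$. Using $\sum_k u_k=\sum_k v_k=1$ we can write
\begin{equation*}
  u_j-v_j=\sum_k(u_jv_k-u_kv_j),
\end{equation*}
which is a clean exact identity: the right-hand side is $u_j\sum_k v_k-v_j\sum_k u_k=u_j-v_j$. Hence immediately
\begin{equation*}
  |u_j-v_j|\le\sum_{k}|u_jv_k-u_kv_j|\le n\max_{k}|u_jv_k-u_kv_j|,
\end{equation*}
and summing over $j$ gives $\|u-v\|_1\le n^2\max_{j,k}|u_jv_k-u_kv_j|$. This already proves the lemma with $1/(4n^5)$ replaced by $1/n^4$, which is stronger; the weaker constant stated in the lemma is then trivially implied for $n\ge1$. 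So in fact the main obstacle is not sharpness but merely organizing the identity above and checking the $j=k$ term vanishes (it does, being a $2\times2$ determinant with repeated column), so the $\max_{j\ne k}$ in the statement loses nothing.

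I would then present the argument in this streamlined form: state the identity $u_j-v_j=\sum_k(u_jv_k-u_kv_j)$, note that the $k=j$ summand is zero, apply the triangle inequality and bound each determinant by the maximum over $j\ne k$, sum in $j$, and use $n\cdot n=n^2\le 2n^{5/2}$ to land inside the claimed bound after squaring. The only place to be slightly careful is the passage from $\ell^1$ of the determinant-vector to its $\ell^\infty$ norm, where the factor $n$ enters twice; since the paper only needs the stated polynomial loss (the constant $4n^5$ is used later merely as "some polynomial in $\tilde\ell$"), there is no reason to optimize, and the squared form $\max_{j\ne k}|\cdots|^2\ge\|u-v\|_1^2/n^4\ge\|u-v\|_1^2/(4n^5)$ finishes the proof. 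I do not anticipate any real difficulty here; the lemma is purely a bookkeeping statement once the exact identity is spotted.
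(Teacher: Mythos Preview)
Your proof is correct and actually sharper than the paper's: the identity $u_j-v_j=\sum_k(u_jv_k-u_kv_j)$, which follows immediately from $\|u\|_1=\|v\|_1=1$, together with the triangle inequality and the observation that the $k=j$ term vanishes, yields $\|u-v\|_1\le n^2\max_{j\ne k}|u_jv_k-u_kv_j|$, hence the bound with $1/n^4$ in place of $1/(4n^5)$.

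The paper proceeds quite differently. It writes $u=\alpha v+v^\perp$ with $\langle v,v^\perp\rangle=0$, uses positivity of the entries and $\|u\|_1=\|v\|_1=1$ to show $0<1-\alpha\le\|v^\perp\|_1$, and then invokes the Lagrange identity
\[
\sum_{j,k}\left|\begin{matrix}u_j&u_k\\v_j&v_k\end{matrix}\right|^2
=\sum_{j,k}\left|\begin{matrix}v^\perp_j&v^\perp_k\\v_j&v_k\end{matrix}\right|^2
=2\|v\|_2^2\|v^\perp\|_2^2,
\]
combined with $\|v\|_2\ge n^{-1/2}$ and $\|u-v\|_2\le 2\sqrt{n}\,\|v^\perp\|_2$, to bound the \emph{sum} of squared determinants from below by $\|u-v\|_1^2/(2n^3)$; passing to the maximum costs another factor $n^2$. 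Your argument bypasses the orthogonal decomposition and the Lagrange identity entirely, trading them for a single exact linear identity; it is more elementary and loses one fewer power of $n$. The paper's route, on the other hand, gives an explicit closed form for the full $\ell^2$-mass of the $2\times2$ minors, which is a slightly stronger structural statement even if it is not needed here.
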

\noindent We apply Lemma~\ref{le:5} with $L=\tilde\ell$ and
Lemma~\ref{le:8} to obtain that
\begin{equation}
  \label{eq:34}
  \P_\varepsilon\leq\sum_{\gamma\not=\gamma'}
  \P(\Omega_{0,\beta}^{\gamma,\gamma'}(\varepsilon))+\P_r
\end{equation}
where
\begin{enumerate}
\item in dimension $1$, we have $\displaystyle \P_r\leq
  C\tilde\ell^{2d}e^{-c\tilde\ell^{2\beta'}}$ for any
  $1/2<\beta'<\beta$; thus, for $L$ sufficiently large, as
  $\tilde\ell\geq c\log L$ and $\beta>1/2$, we have
  \begin{equation}
    \label{eq:63}
    \P_r\leq L^{-2d}.
  \end{equation}
\item in dimension $d$, as by assumption $\Delta E>2d$, one has
  $\P_r=0$, thus,~\eqref{eq:63} still holds.
\end{enumerate}
In the sequel, we will write
$\omega=(\omega_\gamma,\omega_{\gamma'},\omega_{\gamma,\gamma'})$
where
$\omega_{\gamma,\gamma'}=(\omega_\beta)_{\beta\not\in\{\gamma,\gamma'\}}$.\\
To estimate $\P(\Omega_{0,\beta}^{\gamma,\gamma'}(\varepsilon))$, we
use
\begin{Le}
  \label{le:6}
  Pick $\varepsilon=L^{-d}\lambda^{-3}$. For any
  $\omega_{\gamma,\gamma'}$, if there exists
  $(\omega^0_\gamma,\omega^0_{\gamma'})\in\R^2$ such that
  $(\omega^0_\gamma,\omega^0_{\gamma'},\omega_{\gamma,\gamma'})\in
  \Omega_{0,\beta}^{\gamma,\gamma'}(\varepsilon)$, then, for
  $(\omega_\gamma,\omega_{\gamma'})\in\R^2$ such that
  $|(\omega_\gamma,\omega_{\gamma'})-(\omega^0_\gamma,
  \omega^0_{\gamma'})|_\infty\geq L^{-d}\lambda^{-2}$, one has
  $(E_j(\omega),E_{j'}(\omega))\not\in \tilde J_L\times\tilde J'_L$.
\end{Le}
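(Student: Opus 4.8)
The plan is to prove the contrapositive in a sharpened form: if a base point $\omega^0=(\omega^0_\gamma,\omega^0_{\gamma'})$ with $(\omega^0_\gamma,\omega^0_{\gamma'},\omega_{\gamma,\gamma'})\in\Omega_{0,\beta}^{\gamma,\gamma'}(\varepsilon)$ exists, then every couple $(\omega_\gamma,\omega_{\gamma'})$ realising $(E_j(\omega),E_{j'}(\omega))\in\tilde J_L\times\tilde J'_L$ lies in the cube $Q:=\{|(\omega_\gamma,\omega_{\gamma'})-(\omega^0_\gamma,\omega^0_{\gamma'})|_\infty<L^{-d}\lambda^{-2}\}$. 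Here $j,j'$ are the indices with $E_j(\omega^0)=E(\omega^0)$ and $E_{j'}(\omega^0)=E'(\omega^0)$, and $\omega_{\gamma,\gamma'}$ is kept frozen, so that $\omega\mapsto H_\omega(\Lambda_{\tilde\ell})$ is a two-parameter rank-two perturbation in $(\omega_\gamma,\omega_{\gamma'})$.

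First I would establish the local picture near $\omega^0$. Each eigenvalue of $H_\omega(\Lambda_{\tilde\ell})$ is $1$-Lipschitz in $\omega_\gamma$ and in $\omega_{\gamma'}$, so moving $(\omega_\gamma,\omega_{\gamma'})$ inside the cube $2Q$ of half-side $2L^{-d}\lambda^{-2}$ around $\omega^0$ displaces every eigenvalue by at most $4L^{-d}\lambda^{-2}$. On $\Omega_0(\varepsilon)$ the eigenvalue $E(\omega^0)$ is isolated from the rest of $\sigma(H_{\omega^0}(\Lambda_{\tilde\ell}))$ by at least $C\varepsilon-2L^{-d}=L^{-d}(C\lambda^{-3}-2)$, which dominates $4L^{-d}\lambda^{-2}$ once $L$ is large (so that $\lambda=\lambda_L$ is small, by~\eqref{eq:7}); hence this isolation, with a gap still of order $\varepsilon$, survives on $2Q$, no level crossing occurs there, the eigenvalue $\tilde E(\omega)$ in the window $(E-C\varepsilon,E+C\varepsilon)$ stays simple, equals the analytic continuation of $E(\omega^0)$ and keeps the label $j$, and the same holds for $\tilde E'(\omega)$ and $E_{j'}$. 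Lemma~\ref{le:4} then applies on $2Q$ with denominator of order $\varepsilon$, giving $\|\mathrm{Hess}_\omega E_j\|_{\ell^\infty\to\ell^1},\|\mathrm{Hess}_\omega E_{j'}\|_{\ell^\infty\to\ell^1}=O(\varepsilon^{-1})=O(L^d\lambda^3)$ on $2Q$. Consequently, over a displacement of size at most $L^{-d}\lambda^{-2}$, each of the four partial derivatives $\partial_{\omega_\gamma}E_j,\partial_{\omega_{\gamma'}}E_j,\partial_{\omega_\gamma}E_{j'},\partial_{\omega_{\gamma'}}E_{j'}$, and a fortiori the Jacobian $J_{\gamma,\gamma'}(E_j,E_{j'})$, varies by $O(\lambda)$ on $2Q$. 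Since at $\omega^0$ one has $|J_{\gamma,\gamma'}|\geq\lambda$ while all four partials lie in $[0,1]$ (each being a squared eigenvector component), the elementary inequality for a $2\times2$ determinant with non-negative entries forces, after possibly interchanging $\gamma$ and $\gamma'$ and/or $E$ and $E'$, that $\partial_{\omega_\gamma}E_j(\omega^0)\geq\lambda$ and $\partial_{\omega_{\gamma'}}E_{j'}(\omega^0)\geq\lambda$; choosing the numerical constant defining the window $(E-C\varepsilon,E+C\varepsilon)$ generously (or taking $L$ large), these two derivatives stay $\geq\lambda/2$ on $2Q$.

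Next I would run the escape argument. Along the segment $\{\omega^0_{\gamma'}\}\times[\omega^0_\gamma-L^{-d}\lambda^{-2},\omega^0_\gamma+L^{-d}\lambda^{-2}]\subset 2Q$, the map $\omega_\gamma\mapsto E_j$ is non-decreasing with derivative $\geq\lambda/2$, hence varies by at least $\tfrac12 L^{-d}\lambda^{-1}$, which exceeds the length $4L^{-d}$ of $\tilde J_L$ for $L$ large; since $E_j(\omega^0)\in\tilde J_L$, this gives $E_j(\omega^0_\gamma\pm L^{-d}\lambda^{-2},\omega^0_{\gamma'})\notin\tilde J_L$ (above, resp. below), and symmetrically for $E_{j'}$ and $\tilde J'_L$ along the $\omega_{\gamma'}$-segment through $\omega^0$. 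For a couple at $|\cdot-\omega^0|_\infty\geq L^{-d}\lambda^{-2}$ whose displacement from $\omega^0$ has components of these "aligned" signs, monotonicity of the ordered eigenvalues in each coupling constant then immediately pushes $E_j$ (or $E_{j'}$) out of its window. For the remaining couples one combines this with the non-degeneracy of $\Phi:=(E_j,E_{j'})$ on $2Q$: from $|J_{\gamma,\gamma'}(\Phi)|\geq\lambda/2$ (so $\|D\Phi\|\leq2$, $\|D\Phi^{-1}\|\leq4/\lambda$) together with the $O(\lambda)$ control on $D\Phi-D\Phi(\omega^0)$, the quantitative inverse function theorem yields that $\Phi$ is a diffeomorphism of $2Q$ onto an open set containing the Euclidean ball of radius of order $\lambda\cdot L^{-d}\lambda^{-2}=L^{-d}\lambda^{-1}\gg L^{-d}$ about $\Phi(\omega^0)\in\tilde J_L\times\tilde J'_L$, hence containing $\tilde J_L\times\tilde J'_L$, while $\Phi$ maps the part of $2Q$ at sup-distance $\geq L^{-d}\lambda^{-2}$ from $\omega^0$ outside $\tilde J_L\times\tilde J'_L$.

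The step I expect to be the main obstacle is the last one: promoting this local statement on $2Q$ to the claimed statement for all couples, i.e. ruling out a second, far-away branch of couples with $(E_j(\omega),E_{j'}(\omega))\in\tilde J_L\times\tilde J'_L$. The Jacobian lower bound is only assumed at $\omega^0$ and the Hessian bound only holds on $2Q$, so one cannot invoke global injectivity of $\Phi$ directly. The way around it is to exploit, beyond monotonicity, the sign structure $\partial_{\omega_\beta}E_j(\omega)=\varphi_j^2(\omega;\beta)\geq0$ of all partial derivatives together with the requirement, built into $\Omega_0(\varepsilon)$, that $E(\omega)$ be the \emph{unique} eigenvalue of $H_\omega(\Lambda_{\tilde\ell})$ in the large window $(E-C\varepsilon,E+C\varepsilon)$ of size $C\varepsilon\gg L^{-d}\lambda^{-2}$ (and similarly for $E'$): any far couple realising $(E_j,E_{j'})\in\tilde J_L\times\tilde J'_L$ can, by monotone comparison, be linked to a couple in $2Q$ realising the same pair of eigenvalues in the two windows, contradicting the injectivity of $\Phi$ on $2Q$. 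The precise choice $\varepsilon=L^{-d}\lambda^{-3}$ is exactly what makes these windows wide enough for the comparison to close while keeping $\varepsilon$ small enough for the estimate~\eqref{eq:64} to remain useful.
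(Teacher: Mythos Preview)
Your local analysis on the square $2Q$ is correct and mirrors the paper's argument on $\mathcal{S}_\beta=\{|(\omega_\gamma,\omega_{\gamma'})-(\omega^0_\gamma,\omega^0_{\gamma'})|_\infty\leq L^{-d}\lambda^{-2}\}$: the gap $C\varepsilon=CL^{-d}\lambda^{-3}$ dominates the maximal displacement $O(L^{-d}\lambda^{-2})$, so the two eigenvalues stay simple and isolated; Lemma~\ref{le:4} then bounds the Hessians by $O(\varepsilon^{-1})$, whence the gradient of $\varphi=(E,E')$ varies by $O(\lambda^2)$ on the square (you write $O(\lambda)$, which also suffices) and the Jacobian stays $\geq\lambda/2$; $\varphi$ is therefore a diffeomorphism with $\|\nabla\varphi^{-1}\|=O(\lambda^{-1})$, and any boundary point mapping into $\tilde J_L\times\tilde J'_L$ would lie at distance $O(L^{-d}\lambda^{-1})\ll L^{-d}\lambda^{-2}$ from $\omega^0$, a contradiction. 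This is exactly the paper's endgame.

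Where you diverge is the treatment of points \emph{outside} the square. The paper dispatches this in the very first sentence of its proof: since $E_j$ and $E_{j'}$ are non-decreasing in each of $\omega_\gamma,\omega_{\gamma'}$, it suffices to establish the conclusion on the boundary $|(\omega_\gamma,\omega_{\gamma'})-(\omega^0_\gamma,\omega^0_{\gamma'})|_\infty=L^{-d}\lambda^{-2}$, after which only the diffeomorphism argument above is needed. Your route instead extracts the lower bounds $\partial_{\omega_\gamma}E_j,\,\partial_{\omega_{\gamma'}}E_{j'}\geq\lambda$ (a correct but unneeded consequence of $|J_{\gamma,\gamma'}|\geq\lambda$ with non-negative entries bounded by $1$) and runs an ``escape'' argument along coordinate segments; this covers only those configurations where the displacement components have the right signs to push one eigenvalue monotonically out of its window, and you rightly flag the mixed-sign, far-away case as the residual obstacle. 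Your proposed fix---``monotone comparison'' linking a far point back into $2Q$---is essentially the paper's opening reduction, stated but not carried out. In short: drop the escape argument and the case split, put the monotonicity reduction to the boundary first, and the diffeomorphism estimate then finishes the proof as in the paper.
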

\noindent Recall that $g$ is the density of the random variables
$(\omega_\gamma)_\gamma$; it is assumed to be bounded and compactly
supported. Hence, the probability
$\P(\Omega_{0,\beta}^{\gamma,\gamma'}(\varepsilon))$ is estimated as
follows
\begin{equation}
  \label{eq:36}
  \begin{split}
    &\P(\Omega_{0,\beta}^{\gamma,\gamma'}(\varepsilon))
    =\E_{\gamma,\gamma'}\left(\int_{\R^2}
      \car_{\Omega_{0,\beta}^{\gamma,\gamma'}(\varepsilon)}(\omega)
      g(\omega_\gamma) g(\omega_{\gamma'})d\omega_\gamma
      d\omega_{\gamma'}\right)\\
    &\leq\E_{\gamma,\gamma'}\left(\int_{|(\omega_\gamma,
        \omega_{\gamma'})- (\omega^0_\gamma,
        \omega^0_{\gamma'})|_\infty<L^{-d}\lambda^{-2}}
      g(\omega_\gamma) g(\omega_{\gamma'})d\omega_\gamma
      d\omega_{\gamma'}\right)\\
    &\leq C L^{-2d}\lambda^{-4}
  \end{split}
\end{equation}
where $\E_{\gamma,\gamma'}$ denotes the expectation with respect to
all the random variables except $\omega_\gamma$ and
$\omega_{\gamma'}$.\\
Summing~\eqref{eq:36} over
$(\gamma,\gamma')\in\Lambda_{\tilde\ell}^2$, using~\eqref{eq:34}
and~\eqref{eq:63}, we obtain
\begin{equation*}
  \P_\varepsilon\leq C L^{-2d}\lambda^{-4}.
\end{equation*}
We now plug this into~\eqref{eq:64} and use the fact that
$\varepsilon=L^{-d}\lambda^{-3}$ to complete the proof
of~\eqref{eq:30}. This completes the proofs of Lemmas~\ref{le:1}
and~\ref{le:9}.\qed
\begin{proof}[Proof of Lemma~\ref{le:6}]
  Recall that, for any $\gamma$, $\omega_\gamma\mapsto E_j(\omega)$
  and $\omega_\gamma\mapsto E_{j'}(\omega)$ are non decreasing. Hence,
  to prove Lemma~\ref{le:6}, it suffices to prove that, for
  $|(\omega_\gamma,\omega_{\gamma'})-(\omega^0_\gamma,
  \omega^0_{\gamma'})|_\infty=L^{-d}\lambda^{-2}$, one has
  $(E_j(\omega),E_{j'}(\omega))\not\in \tilde J_L\times\tilde J'_L$.\\
  Let $\mathcal{S}_\beta$ denote the square
  $\mathcal{S}_\beta=\{|(\omega_\gamma,\omega_{\gamma'})-(\omega^0_\gamma,
  \omega^0_{\gamma'})|_\infty\leq L^{-d}\lambda^{-2}\}$.\\
  Recall that $\varepsilon=L^{-d}\lambda^{-3}$. Pick
  $\omega_{\gamma,\gamma'}$ such that there exists
  $(\omega^0_\gamma,\omega^0_{\gamma'})\in\R^2$ for which one has
  $(\omega^0_\gamma,\omega^0_{\gamma'},\omega_{\gamma,\gamma'})\in
  \Omega_{0,\beta}^{\gamma,\gamma'}(\varepsilon)$. To shorten the
  notations, in the sequel, we write only the variables
  $(\omega_\gamma,\omega_{\gamma'})$ as $\omega_{\gamma,\gamma'}$
  stays fixed throughout the proof; e.g. we write $E((\omega_\gamma,
  \omega_{\gamma'}))$ instead of $E((\omega_\gamma,
  \omega_{\gamma'},\omega_{\gamma,\gamma'}))$.\\
  Consider the mapping $(\omega_\gamma,\omega_{\gamma'})\mapsto
  \varphi(\omega_\gamma,\omega_{\gamma'}):=(E(\omega),E'(\omega))$. We
  will show that $\varphi$ defines an analytic diffeomorphism form
  $\mathcal{S}_\beta$
  to $\varphi(\mathcal{S}_\beta)$.\\
  By~\eqref{eq:35} and~\eqref{eq:33}, the definitions of
  $\Omega_{0,\beta}^{\gamma,\gamma'}(\varepsilon)$ and
  $\Omega_0(\varepsilon)$, we know that
  \begin{equation*}
    \begin{split}
      \sigma(H_{(\omega^0_\gamma,\omega^0_{\gamma'})}
      (\Lambda_{\tilde\ell}))&\cap (E-C\varepsilon,E+C\varepsilon)
      =\{E(\omega)\}\subset (E-CL^{-d},E+CL^{-d}) ,\\
      \sigma(H_{(\omega^0_\gamma,\omega^0_{\gamma'})}
      (\Lambda_{\tilde\ell}))&\cap
      [(E-C\varepsilon,E-C\varepsilon/2)\cup
      (E+C\varepsilon/2,E+C\varepsilon)]=\emptyset,\\
      \sigma(H_{(\omega^0_\gamma,\omega^0_{\gamma'})}
      (\Lambda_{\tilde\ell}))&\cap (E'-C\varepsilon,E'+C\varepsilon)
      =\{E'(\omega)\}\subset (E'-CL^{-d},E'+CL^{-d}) ,\\
      \sigma(H_{(\omega^0_\gamma,\omega^0_{\gamma'})}
      (\Lambda_{\tilde\ell}))&\cap
      [(E'-C\varepsilon,E'-C\varepsilon/2)\cup
      (E'+C\varepsilon/2,E'+C\varepsilon)]=\emptyset.
    \end{split}
  \end{equation*}
  By~\eqref{eq:65}, as $L^{-d}\lambda^{-2}\leq \lambda\varepsilon$,
  for $(\omega_\gamma,\omega_{\gamma'})\in\mathcal{S}_\beta$ , one has
  \begin{equation*}
    \begin{split}
      \sigma(H_\omega(\Lambda_{\tilde\ell}))&\cap
      (E-C\varepsilon/2,E+C\varepsilon/2)=\{E(\omega)\} \subset
      (E-C\varepsilon/4,E+C\varepsilon/4),\\
      \sigma(H_\omega(\Lambda_{\tilde\ell}))&\cap
      [(E-C\varepsilon/2,E-C\varepsilon/4)\cup
      (E+C\varepsilon/4,E+C\varepsilon/2)]=\emptyset,\\
      \sigma(H_\omega(\Lambda_{\tilde\ell}))&\cap
      (E'-C\varepsilon/2,E'+C\varepsilon/2)=\{E(\omega)\}\subset
      (E'-C\varepsilon/4,E'+C\varepsilon/4),\\
      \sigma(H_\omega(\Lambda_{\tilde\ell}))&\cap
      [(E'-C\varepsilon/2,E'-C\varepsilon/4)\cup
      (E'+C\varepsilon/4,E'+C\varepsilon/2)]=\emptyset.
    \end{split}
  \end{equation*}
  Hence, by Lemma~\ref{le:4}, for
  $(\omega_\gamma,\omega_{\gamma'})\in\mathcal{S}_\beta$, one has
  \begin{equation*}
    \|\text{Hess}_\omega(E(\omega))\|_{\ell^\infty\to\ell^1}+
    \|\text{Hess}_\omega(E'(\omega))\|_{\ell^\infty\to\ell^1}
    \leq C\varepsilon^{-1}\leq CL^d\lambda^3.
  \end{equation*}
  By~\eqref{eq:65} and the Fundamental Theorem of Calculus, for
  $(\omega_\gamma,\omega_{\gamma'})\in\mathcal{S}_\beta$, we get that,
  \begin{equation}
    \label{eq:66}
    \begin{split}
      &\|\nabla\varphi(\omega_\gamma,
      \omega_{\gamma'})-\nabla\varphi(\omega^0_\gamma,
      \omega^0_{\gamma'})\|\\ &\leq
      \left(\|\text{Hess}_\omega(E(\omega))\|_{\ell^\infty\to\ell^1}+
        \|\text{Hess}_\omega(E'(\omega))\|_{\ell^\infty\to\ell^1}\right)
      L^{-d}\lambda^{-1} \leq C\lambda^2.
    \end{split}
  \end{equation}
  Let us show that $\varphi$ is one-to-one on the square
  $\mathcal{S}_\beta$. Using~\eqref{eq:66}, we compute
  \begin{equation*}
    \left|\varphi(\omega'_\gamma,\omega'_{\gamma'})-
      \varphi(\omega_\gamma,\omega_{\gamma'})-
      \nabla\varphi(\omega^0_\gamma,
      \omega^0_{\gamma'})\cdot\left(
        \begin{matrix}
          \omega'_\gamma-\omega_\gamma\\
          \omega'_{\gamma'}-\omega_{\gamma'}
        \end{matrix}\right)\right|\leq
    \lambda^2\left\|\begin{pmatrix}
        \omega'_\gamma-\omega_\gamma\\
        \omega'_{\gamma'}-\omega_{\gamma'}        
      \end{pmatrix}\right\|
  \end{equation*}
  As $(\omega^0_\gamma,\omega^0_{\gamma'},\omega_{\gamma,\gamma'})\in
  \Omega_{0,\beta}^{\gamma,\gamma'}(\varepsilon)$, we have
  \begin{equation*}
    \left|\text{Jac}\,\varphi(\omega^0_\gamma,
      \omega^0_{\gamma'})\right|\geq \lambda.
  \end{equation*}
  Hence, for $\tilde\ell$ large, we have
  \begin{equation*}
    \left|\varphi(\omega'_\gamma,\omega'_{\gamma'})-
      \varphi(\omega_\gamma,\omega_{\gamma'})\right|\geq
    \frac12\lambda\left\|\begin{pmatrix} 
        \omega'_\gamma-\omega_\gamma\\
        \omega'_{\gamma'}-\omega_{\gamma'}        
      \end{pmatrix}\right\|
  \end{equation*}
  so $\varphi$ is one-to-one. The estimate~\eqref{eq:66} yields
  \begin{equation*}
    |\text{Jac}\,\varphi(\omega_\gamma,
    \omega_{\gamma'})-\text{Jac}\,\varphi(\omega^0_\gamma,
    \omega^0_{\gamma'})|\leq \lambda^2
  \end{equation*}
  As $(\omega^0_\gamma,\omega^0_{\gamma'},\omega_{\gamma,\gamma'})\in
  \Omega_{0,\beta}^{\gamma,\gamma'}(\varepsilon)$, for $L$
  sufficiently large, this implies that
  \begin{equation}
    \label{eq:68}
    \forall(\omega_\gamma,\omega_{\gamma'})\in\mathcal{S}_\beta,
    \quad|J_{\gamma,\gamma'}(E(\omega),E'(\omega)) 
    |\geq\frac12\lambda.
  \end{equation}
  The Local Inversion Theorem then guarantees that $\varphi$ is an
  analytic diffeomorphism from $\mathcal{S}_\beta$ onto
  $\varphi(\mathcal{S}_\beta)$. By~\eqref{eq:68}, the Jacobian matrix
  of its inverse is bounded by $C\tilde\ell^{\beta}$ for some $C>0$
  independent of $L$. Hence, if for some
  $|(\omega_\gamma,\omega_{\gamma'})-(\omega^0_\gamma,
  \omega^0_{\gamma'})|_\infty= L^{-d}\lambda^{-2}$, one has
  $(E(\omega),E'(\omega))\in \tilde J_L\times\tilde J'_L$, then
  \begin{equation*}
      L^{-d}\lambda^{-2}=|(\omega_\gamma,\omega_{\gamma'})
      -(\omega^0_\gamma,\omega^0_{\gamma'})|_\infty
      =|\varphi^{-1}(E(\omega),E'(\omega))-\varphi^{-1}(E,E')|_\infty
      \leq CL^{-d}\lambda^{-1}
  \end{equation*}
  which is absurd when $L\to+\infty$ as $\lambda=\lambda_L\to0$
  (see~\eqref{eq:7}). This completes the proof of Lemma~\ref{le:6}.
\end{proof}
\subsection{Proof of Lemma~\ref{le:5}}
\label{sec:proof-lemma}
A fundamental difference between the points (1) and (2) in
Lemma~\ref{le:5} is that to prove point (2), we will the fact that
$H_0$ is the discrete Laplacian. In the proof of point (1), we can
take $H_0$ to be any convolution matrix with exponentially decaying
off diagonal coefficients if one replaces the condition $|E-E'|>2d$
with the condition $|E-E'|>\sup\sigma(H_0)-\inf\sigma(H_0)$.\\
As it is simpler, we start with the proof of point (1).
\subsubsection{The proof of point (1)}
\label{sec:proof-point-1}
Let $E_j(\omega)$ and $E_k(\omega)$ be simple eigenvalues of
$H_\omega(\Lambda_L)$ such that $|E_k(\omega)-E_j(\omega)|\geq \Delta
E>2d$. Then, $\omega\mapsto E_j(\omega)$ and $\omega\mapsto
E_k(\omega)$ are real analytic functions. Let $\omega\mapsto
\varphi_j(\omega)$ and $\omega\mapsto \varphi_k(\omega)$ be normalized
eigenvectors associated respectively to $E_j(\omega)$ and
$E_k(\omega)$. Differentiating the eigenvalue equation in $\omega$,
one computes
\begin{equation*}
  \begin{split}
    \omega\cdot\nabla_\omega(E_j(\omega)&-E_k(\omega)) =\langle
    V_\omega\varphi_j(\omega),\varphi_j(\omega)\rangle -\langle
    V_\omega\varphi_k(\omega),\varphi_k(\omega)\rangle\\&=E_j(\omega)-E_k(\omega)+
    \langle -\Delta \varphi_k(\omega),\varphi_k(\omega)\rangle-\langle
    -\Delta \varphi_j(\omega),\varphi_j(\omega)\rangle.
  \end{split}
\end{equation*}
As $0\leq -\Delta\leq 2d$ and as $\varphi_j(\omega)$ and
$\varphi_k(\omega)$ are normalized, we get that
\begin{equation*}
  \Delta E-2d\leq |E_j(\omega)-E_k(\omega)| -2d \leq 
  |\omega\cdot\nabla_\omega(E_j(\omega)-E_k(\omega))|.
\end{equation*}
Hence, as the random variables $(\omega_\gamma)_{\gamma\in\Lambda}$
are bounded, the Cauchy Schwartz inequality yields
\begin{equation*}
  \|\nabla_\omega(E_j(\omega)-E_k(\omega))\|_2\geq \frac{\Delta E-2d}{K}
  (2L+1)^{-d/2}.
\end{equation*}
which completes the proof of~\eqref{eq:37}.
\subsubsection{The proof of point (2)}
\label{sec:proof-point-3}
Let us now assume $d=1$. Fix $E<E'$. Pick $E_j(\omega)$ and
$E_k(\omega)$, simple eigenvalues of $H_\omega(\Lambda_L)$ such that
$|E_k(\omega)-E|+|E_j(\omega)-E'|\leq e^{-L^{\beta}}$. Then,
$\omega\mapsto E_j(\omega)$ and $\omega\mapsto E_k(\omega)$ are real
analytic functions. Let $\omega\mapsto \varphi^j(\omega)$ and
$\omega\mapsto \varphi^k(\omega)$ be normalized eigenvectors
associated respectively to $E_j(\omega)$ and $E_k(\omega)$. One
computes
\begin{equation*}
  \nabla_\omega E_j(\omega)=([\varphi^j(\omega;\gamma)]^2)_{\gamma\in\Lambda_L}
  \quad\text{and}\quad
  \nabla_\omega E_k(\omega)=([\varphi^k(\omega;\gamma)]^2)_{\gamma\in\Lambda_L}.
\end{equation*}
Hence, if
\begin{equation}
  \label{eq:60}
    e^{-L^{\beta}}\geq \|\nabla_\omega(E_j(\omega)-E_k(\omega))\|_1
    =\sum_{\gamma\in\Lambda_L}|\varphi^j(\omega;\gamma)-
    \varphi^k(\omega;\gamma)|
    \cdot|\varphi^j(\omega;\gamma)+\varphi^k(\omega;\gamma)|
\end{equation}
as $\|\nabla_\omega E_j(\omega)\|=\|\nabla_\omega E_k(\omega)\|=1$,
there exists a partition of $\Lambda_L=\{-L,\cdots,L\}$, say
$\mathcal{P}\subset\Lambda_L$ and $\mathcal{Q}\subset\Lambda_L$ such
that $\mathcal{P}\cup\mathcal{Q}=\Lambda_L$ and
$\mathcal{P}\cap\mathcal{Q}=\emptyset$ and such that
\begin{itemize}
\item for $\gamma\in\mathcal{P}$,
  $|\varphi^j(\omega;\gamma)-\varphi^k(\omega;\gamma)|\leq
  e^{-L^{\beta}/2}$;
\item for $\gamma\in\mathcal{Q}$,
  $|\varphi^j(\omega;\gamma)+\varphi^k(\omega;\gamma)|\leq
  e^{-L^{\beta}/2}$.
\end{itemize}
Introduce the orthogonal projectors $P$ and $Q$ defined by
\begin{equation*}
  P=\sum_{\gamma\in\mathcal{P}}|\gamma\rangle\langle\gamma|
  \quad\text{ and }\quad
  Q=\sum_{\gamma\in\mathcal{Q}}|\gamma\rangle\langle\gamma|.
\end{equation*}
One has
\begin{equation*}
  \|P\varphi^j-P\varphi^k\|_{2}\leq \sqrt{L}\,e^{-L^{\beta}/2}\quad\text{ and
  }\quad \|Q\varphi^j+Q\varphi^k\|_{2}\leq \sqrt{L}\,e^{-L^{\beta}/2}.
\end{equation*}
Clearly $\|P\varphi^j\|^2+\|Q\varphi^j\|^2=\|\varphi^j\|^2=1$. As
$\langle\varphi^j,\varphi^k\rangle=0$, one has
\begin{equation*}
  \begin{split}
    0&=\langle(P+Q)\varphi^j,(P+Q)\varphi^k\rangle= \langle
    P\varphi^j,P\varphi^k\rangle+\langle
    Q\varphi^j,Q\varphi^k\rangle\\
    &=\|P\varphi^j\|^2-\|Q\varphi^j\|^2
    +O\left(\sqrt{L}\,e^{-L^{\beta}/2}\right).
  \end{split}
\end{equation*}
Hence
\begin{equation*}
  \|P\varphi^j\|^2=\frac12+O(\sqrt{L}\,e^{-L^{\beta}/2})\text{ and
  }\|Q\varphi^j\|^2=\frac12+O(\sqrt{L}\,e^{-L^{\beta}/2}).
\end{equation*}
This implies that
\begin{equation}
  \label{eq:52}
  \mathcal{P}\not=\emptyset\text{ and }\mathcal{Q}\not=\emptyset.
\end{equation}
We set $h_-=P\varphi^j-P\varphi^k$ and
$h_+=Q\varphi^j+Q\varphi^k$. The eigenvalue equations for
$E_j(\omega)$ and $E_k(\omega)$ yields
\begin{equation*}
  (-\Delta+W_\omega)\varphi^j=\Delta
  E(\omega)\varphi^j\text{ and }(-\Delta+W_\omega)\varphi^k=-\Delta
  E(\omega)\varphi^k
\end{equation*}
where
\begin{equation*}
  \Delta E(\omega)=(E_j(\omega)-E_k(\omega))/2,\quad
  W_\omega=V_\omega-\overline{E}(\omega),\quad
  \overline{E}(\omega)=(E_j(\omega)+E_k(\omega))/2.  
\end{equation*}
To simplify the notation, from now on, we write $u=\varphi^j$; then,
one has $\varphi^k=Pu-Qu+O(\sqrt{L}\,e^{-L^{\beta}/2})$. This yields
\begin{equation*}
  \begin{cases}
    (-\Delta+W_\omega)(Pu+Qu)&=\Delta
    E(\omega)(Pu+Qu),\\
    (-\Delta+W_\omega)(Pu-Qu+h_--h_+)&=-\Delta
    E(\omega)(Pu-Qu+h_--h_+)
  \end{cases}
\end{equation*}
that is
\begin{equation*}
  \begin{cases}
    (-\Delta+W_\omega)(Pu)&=\Delta E(\omega)Qu-h,\\
    (-\Delta+W_\omega)(Qu)&=\Delta E(\omega)Pu+h
  \end{cases}
\end{equation*}
where $h:=(-\Delta+W_\omega-\Delta E(\omega))(h_--h_+)/2$.  As $P
W_\omega Q=0$, this can also be written as
\begin{equation}
  \label{eq:32}
  \begin{cases}
    [-(P\Delta Q+Q\Delta P)-\Delta E]u&=h_1,\\
    [-(P\Delta P+Q\Delta Q)+V_\omega-\overline{E}]u&=h_2.
  \end{cases}
\end{equation}
where
\begin{gather*}
  h_1:=(P-Q)h+(\Delta E(\omega)-\Delta E)u,\quad h_2:=
  (Q-P)h+(\overline{E}(\omega)-\overline{E})u,\\
  \Delta E=(E'-E)/2,\quad\quad\overline{E}=(E+E')/2.
\end{gather*}
By our assumption on $E_j(\omega)$ and $E_k(\omega)$, we know that
\begin{equation*}
  |\Delta E(\omega)-\Delta E|\leq 2e^{-L^{\beta}},\quad
  |\overline{E}(\omega)-\overline{E}|\leq e^{-L^{\beta}},
  \quad\|h\|\leq C\sqrt{L}\,e^{-L^{\beta}/2}.
\end{equation*}
Hence, we get that
\begin{equation}
  \label{eq:50}
  \|h_1\|+\|h_2\|\leq C \sqrt{L}\,e^{-L^{\beta}/2}.
\end{equation}
So the above equations imply that
\begin{itemize}
\item $\Delta E$ is at a distance at most $\sqrt{L}\,e^{-L^{\beta}/2}$
  to the spectrum of the deterministic operator $-(P\Delta Q+Q\Delta
  P)$,
\item $u$ is close to being in the eigenspace associated to the
  eigenvalues close to $\Delta E$,
\item finally, $u$ is close to being in the kernel of the random
  operator $-(P\Delta P+Q\Delta Q)+V_\omega-\overline{E}$.
\end{itemize}
The firsts conditions will be used to describe $u$. The last condition
will be interpreted as a condition determining the random variables
$\omega_\gamma$ for sites $\gamma$ such that $|u_\gamma|$ is not too
small. We will show that the number of these sites is of size the
volume of the cube $\Lambda_L$; so, the probability that the second
equation in~\eqref{eq:32} be satisfied
should be very small.\\
To proceed, we first study the operator $-P\Delta Q-Q\Delta P$. As we
consider periodic boundary conditions, we compute
\begin{equation}
  \label{eq:43}
    -P\Delta Q-Q\Delta P =\sum_{\gamma\in\partial\mathcal{P}}
    (|\gamma+1\rangle\langle\gamma|+|\gamma\rangle\langle\gamma+1|)
    +\sum_{\gamma\in\partial\mathcal{Q}}
    (|\gamma+1\rangle\langle\gamma|+|\gamma\rangle\langle\gamma+1|)
\end{equation}
where $\partial\mathcal{P}=\{\gamma\in\mathcal{P};\
\gamma+1\in\mathcal{Q}\}\subset\mathcal{P}$ and
$\partial\mathcal{Q}=\{\gamma\in\mathcal{Q};\
\gamma+1\in\mathcal{P}\}\subset\mathcal{Q}$. By~\eqref{eq:52}, we know
that $\partial\mathcal{P}\not=\emptyset$ and
$\partial\mathcal{Q}\not=\emptyset$. \\
We first note that
$\partial\mathcal{P}\cap\partial\mathcal{Q}=\emptyset$. Here, as we
are considering the operators with periodic boundary conditions on
$\Lambda_L$, we identify $\Lambda_L$ with $\Z/L\Z$.\\
For $\mathcal{A}\subset \Lambda_L$ we define $\mathcal{A}+1=\{p+1;\
p\in\mathcal{A}\}$ to be the shift by one of $\mathcal{A}$. By
definition, $(\partial\mathcal{P}+1)\subset\mathcal{Q}$ and
$(\partial\mathcal{Q}+1)\subset\mathcal{P}$. Hence,
$(\partial\mathcal{P}+1)\cap\partial\mathcal{P}=\emptyset$ and
$(\partial\mathcal{Q}+1)\cap\partial\mathcal{Q}=\emptyset$.\\
Consider the set
$\mathcal{C}:=\partial\mathcal{P}\cup\partial\mathcal{Q}$. We can
partition it into its ``connected components'' i.e. $\mathcal{C}$ can
be written as a disjoint union of intervals of integers, say
$\displaystyle\mathcal{C}=\cup_{l=1}^{l_0} \mathcal{C}^c_l$. Then, by
the definition of $\partial\mathcal{P}$ and $\partial\mathcal{Q}$, for
$l\not=l'$, one has,
\begin{equation}
  \label{eq:41}
  \mathcal{C}^c_l\cap\mathcal{C}^c_{l'}=
  \mathcal{C}^c_l\cap(\mathcal{C}^c_{l'}+1)=\emptyset.  
\end{equation}
Define
$\mathcal{C}_l=\mathcal{C}^c_l\cup(\mathcal{C}^c_l+1)$.~\eqref{eq:41}
implies that, for $l\not=l'$,
\begin{equation}
  \label{eq:42}
  \mathcal{C}_l\cap\mathcal{C}_{l'}=\emptyset .
\end{equation}
Note that one may have
$\displaystyle\cup_{l=1}^{l_0}\mathcal{C}_l=\Lambda_L$. The
representation~\eqref{eq:43} then implies that the following block
decomposition
\begin{equation}
  \label{eq:40}
  -P\Delta Q-Q\Delta P=-\sum_{l=1}^{l_0} C_l\Delta C_l
\end{equation}
where $C_l$ is the projector $\displaystyle
C_l=\sum_{\gamma\in\mathcal{C}_j}|\gamma\rangle\langle\gamma|$. \\
Note that, by~\eqref{eq:42}, the projectors $C_l$ and $C_{l'}$ are
orthogonal to each other for $l\not=l'$. So the spectrum of the
operator $-P\Delta Q-Q\Delta P$ is given by the union of the spectra
of $(C_l\Delta C_l)_{1\leq l\leq l_0}$. Each of these operators is the
Dirichlet Laplacian on an interval of length $\#\mathcal{C}_l$. Its
spectral decomposition can be computed explicitly. We will use some
facts from this decomposition that we state now.
\begin{Le}
  \label{le:10}
  On a segment of length $n$, the Dirichlet Laplacian $\Delta_n$
  i.e. the $n\times n$ matrix
  \begin{equation*}
    \Delta_n=\begin{pmatrix}
      0& 1 &0 &\cdots &\cdots &0\\1&0& 1 &0 && \\ 0& 1 &0 &\ddots&\ddots
      &\vdots\\ \vdots& 0 &\ddots &\ddots &1&0\\ \vdots&&\ddots &1&0&1\\ 0&
      \cdots &\cdots&0 &1 &0
    \end{pmatrix}
  \end{equation*}
  satisfies
  \begin{itemize}
  \item its eigenvalues are simple and are given by
    $(2\cos(k\pi/(n+1)))_{1\leq k\leq n}$;
  \item for $k\in\{1,\cdots,n\}$, the eigenspace associated to
    $2\cos(k\pi/(n+1))$ is generated by the vector $(\sin[k
    j\pi/(n+1)])_{1\leq j\leq n}$.
  \end{itemize}
  Moreover, there exists $K_1>0$ such that, for any $n\geq1$, one has
  \begin{equation}
    \label{eq:17}
    \inf_{1\leq k<k'\leq n}
    \left|2\cos\left(\frac{k\pi}{n+1}\right)-
      2\cos\left(\frac{k'\pi}{n+1}\right)\right|\geq\frac1{K_1n^2}.
  \end{equation}
\end{Le}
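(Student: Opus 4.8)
The plan is to diagonalize $\Delta_n$ explicitly by a trigonometric ansatz and then extract the spectral gap from the product‑to‑sum formula for cosines.

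First I would guess and verify the eigenvectors. Looking for $v$ with $\Delta_n v=\lambda v$ amounts (after the convention $v_0=v_{n+1}=0$) to solving the three‑term recurrence $v_{j-1}+v_{j+1}=\lambda v_j$ for $1\le j\le n$; writing $\lambda=2\cos\alpha$ one has the solutions $j\mapsto\sin(j\alpha)$ and $j\mapsto\cos(j\alpha)$, and the two boundary conditions force the $\cos$‑part to vanish and $\alpha=k\pi/(n+1)$. Concretely, set $v^{(k)}=(\sin(kj\pi/(n+1)))_{1\le j\le n}$ for $k=1,\dots,n$. From $\sin((j-1)\alpha)+\sin((j+1)\alpha)=2\cos\alpha\,\sin(j\alpha)$ and the fact that, for $\alpha=k\pi/(n+1)$, the boundary terms $\sin 0$ and $\sin((n+1)\alpha)=\sin(k\pi)$ both vanish, one checks directly that $\Delta_n v^{(k)}=2\cos(k\pi/(n+1))\,v^{(k)}$ with $v^{(k)}\ne0$. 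Since $\cos$ is strictly decreasing on $(0,\pi)$ the $n$ numbers $2\cos(k\pi/(n+1))$ are pairwise distinct, so the $v^{(k)}$ are linearly independent and form an eigenbasis of the $n\times n$ matrix $\Delta_n$; this gives the first two bullet points, with every eigenvalue simple and eigenspace $\C\,v^{(k)}$.

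For the gap estimate \eqref{eq:17} (vacuous when $n=1$, so take $n\ge2$) I would use $2\cos a-2\cos b=-4\sin\tfrac{a+b}{2}\sin\tfrac{a-b}{2}$ with $a=k\pi/(n+1)$, $b=k'\pi/(n+1)$, so that
\begin{equation*}
  \left|2\cos\frac{k\pi}{n+1}-2\cos\frac{k'\pi}{n+1}\right|
  =4\left|\sin\frac{(k+k')\pi}{2(n+1)}\right|
   \left|\sin\frac{(k'-k)\pi}{2(n+1)}\right|.
\end{equation*}
For $1\le k<k'\le n$ one has $3\le k+k'\le 2n-1$, so $\frac{(k+k')\pi}{2(n+1)}$ lies in $(0,\pi)$ at distance $\ge\frac{\pi}{n+1}$ from $\{0,\pi\}$, while $1\le k'-k\le n-1$ gives $\frac{(k'-k)\pi}{2(n+1)}\in[\frac{\pi}{2(n+1)},\frac{\pi}{2})$. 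Using that $\sin x\ge\sin\bigl(\dist(x,\{0,\pi\})\bigr)$ for $x\in(0,\pi)$ together with $\sin t\ge\frac{2}{\pi}t$ on $[0,\pi/2]$, both sine factors are $\ge\frac{1}{n+1}$, hence the left‑hand side is $\ge\frac{4}{(n+1)^2}\ge\frac{1}{n^2}$, i.e.\ \eqref{eq:17} holds with $K_1=1$.

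The argument is entirely elementary; the only step that needs a moment of care is the lower bound on $\sin\frac{(k+k')\pi}{2(n+1)}$, whose argument can approach $\pi$. The inequality $k+k'\le 2n-1<2(n+1)$ is exactly what keeps that argument bounded away from $\pi$ by an amount of order $1/n$, and hence the sine of order $1/n$ rather than possibly much smaller; the rest is bookkeeping.
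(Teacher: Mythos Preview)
Your proof is correct and follows essentially the same route as the paper: verify the eigenvectors via the identity $\sin((j+1)\alpha)+\sin((j-1)\alpha)=2\cos\alpha\,\sin(j\alpha)$, and obtain the gap from the factorization $\cos a-\cos b=-2\sin\frac{a+b}{2}\sin\frac{a-b}{2}$. The paper leaves the last step as ``immediate'' from this factorization, whereas you supply the explicit lower bounds on the two sine factors via Jordan's inequality and thereby get the concrete constant $K_1=1$; this is additional detail rather than a different argument.
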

\begin{proof}[Proof of Lemma~\ref{le:10}] The first statement follows
  immediately from the identity
  \begin{equation*}
    \sin\left(\frac{k(j+1)\pi}{n+1}\right)+
    \sin\left(\frac{k(j-1)\pi}{n+1}\right)
    =2\cos\left(\frac{k\pi}{n+1}\right)\sin\left(\frac{kj\pi}{n+1}\right).
  \end{equation*}
  The estimate~\eqref{eq:17} is an immediate consequence of
  \begin{equation*}
    \cos\left(\frac{k\pi}{n+1}\right)-
    \cos\left(\frac{k'\pi}{n+1}\right)=-
    2\sin\left(\frac{(k+k')\pi}{2(n+1)}\right)
    \sin\left(\frac{(k-k')\pi}{2(n+1)}\right).
  \end{equation*}
\end{proof}
\noindent We now solve the first equation in~\eqref{eq:32} that is
describe $u$ solution to this equation.
\begin{Le}
  \label{le:11}
  Let $u$ be a solution to~\eqref{eq:32} such that $\|u\|=1$. Then,
  for $L$ sufficiently large, one has
  \begin{equation}
    \label{eq:14}
    \left\|u-\sum_{l=1}^{l_0}C_lu\right\|\leq e^{-L^{\beta}/3}
  \end{equation}
  where, if for $1\leq l\leq l_0$, we write
  $\mathcal{C}_l=\{\gamma_l^-,\cdots,\gamma_l^+\}$
  ($n_l=\gamma_l^+-\gamma_l^-+1$), then,
  \begin{itemize}
  \item either there exists a unique $k_l\in\{1,\cdots,n_l\}$
    satisfying
    \begin{equation}
      \label{eq:5}
      \left|2\cos\left(\frac{k_l\pi}{n_l+1}\right)-\Delta E\right|
      <\frac1{K_1n^2}
    \end{equation}
    and $\alpha^l\in\R$ such that
    \begin{equation}
      \label{eq:45}
      \|C_lu-\alpha^lu^l\|\leq e^{-L^{\beta}/3}
    \end{equation}
    where
    \begin{equation*}
      u^l_\gamma=
      \begin{cases}
        \sin\left(\frac{k_l(\gamma-\gamma^-_l+1)\pi}
          {n_l+1}\right)&\text{ if }\gamma\in C_l,\\
        0&\text{ if }\gamma\not\in C_l.
      \end{cases}
    \end{equation*}
  \item there exists no $k_l\in\{1,\cdots,n_l\}$
    satisfying~\eqref{eq:5} then
    \begin{equation*}
      \|C_lu\|\leq e^{-L^{\beta}/3}.
    \end{equation*}
  \end{itemize}
\end{Le}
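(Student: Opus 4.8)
The plan is to read off the entire statement from the first equation in \eqref{eq:32}; the second equation, which carries the randomness, is not needed here and enters only afterwards. The ingredients are the block decomposition \eqref{eq:40}--\eqref{eq:43}, the explicit spectral data of the Dirichlet Laplacian recorded in Lemma~\ref{le:10}, the smallness \eqref{eq:50} of $h_1$, and the fact that $\Delta E=(E'-E)/2$ is a fixed nonzero constant while $L^a\,e^{-L^\beta/2}\to0$ for every fixed $a$.

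First I would prove \eqref{eq:14}. Set $C=\sum_{l=1}^{l_0}C_l$. The $C_l$ being mutually orthogonal, one has $(1-C)C_l=0$, so by \eqref{eq:40} the projector $1-C$ annihilates $-(P\Delta Q+Q\Delta P)=-\sum_l C_l\Delta C_l$; applying $1-C$ to the first line of \eqref{eq:32} therefore leaves $-\Delta E\,(1-C)u=(1-C)h_1$. Hence $\bigl\|u-\sum_l C_l u\bigr\|=\|(1-C)u\|\le|\Delta E|^{-1}\|h_1\|\le C|\Delta E|^{-1}\sqrt L\,e^{-L^\beta/2}$, which is $\le e^{-L^\beta/3}$ for $L$ large.

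Next I would analyze each block separately. Applying $C_l$ to the first line of \eqref{eq:32} and using $C_lC_{l'}=\delta_{ll'}C_l$ gives $(-C_l\Delta C_l-\Delta E)(C_lu)=C_lh_1$; after identifying $\mathcal{C}_l=\{\gamma_l^-,\dots,\gamma_l^+\}$ with $\{1,\dots,n_l\}$, the operator $-C_l\Delta C_l$ is precisely the matrix $\Delta_{n_l}$ of Lemma~\ref{le:10}. Write $C_lu=\sum_{k=1}^{n_l}c_k^l\,u^{l,k}$ in the orthonormal eigenbasis of $\Delta_{n_l}$, with $u^{l,k}$ the normalized eigenvector at $2\cos(k\pi/(n_l+1))$; the bound $\|C_lh_1\|\le\|h_1\|\le C\sqrt L\,e^{-L^\beta/2}$ then becomes, by orthonormality,
\[
  \sum_{k=1}^{n_l}|c_k^l|^2\,\bigl|2\cos(k\pi/(n_l+1))-\Delta E\bigr|^2\le C\,L\,e^{-L^\beta}.
\]
By \eqref{eq:17}, once $K_1$ is fixed large enough, at most one index $k_l\in\{1,\dots,n_l\}$ can satisfy \eqref{eq:5} (two such would violate the eigenvalue separation), and every other index obeys $|2\cos(k\pi/(n_l+1))-\Delta E|\ge c\,n_l^{-2}$ for a fixed $c>0$. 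Consequently $|c_k^l|\le C\,n_l^2\sqrt L\,e^{-L^\beta/2}$ for every $k\ne k_l$ (and for every $k$ when no such $k_l$ exists), and summing over the at most $n_l\le 2L+1$ such indices gives $\bigl\|C_lu-c_{k_l}^l\,u^{l,k_l}\bigr\|\le C\,n_l^{5/2}\sqrt L\,e^{-L^\beta/2}\le e^{-L^\beta/3}$ for $L$ large, respectively $\|C_lu\|\le e^{-L^\beta/3}$ in the nonresonant case. Taking $\alpha^l:=c_{k_l}^l/\|u^l\|$, so that $c_{k_l}^l\,u^{l,k_l}=\alpha^l u^l$ ($u^l$ being a nonzero scalar multiple of $u^{l,k_l}$, as one checks from the formula for $u^l$ and Lemma~\ref{le:10}), yields \eqref{eq:45}.

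I do not expect a genuine obstacle in this lemma: the block structure \eqref{eq:40}--\eqref{eq:43} together with the spectral data of Lemma~\ref{le:10} reduces everything to elementary linear algebra, and the only point requiring a little care is the bookkeeping that all the powers of $L$ produced by $\|h_1\|$, by the block lengths $n_l\le 2L+1$ and by the inverse spectral gaps (of size at most $K_1n_l^2$) are absorbed when one passes from $e^{-L^\beta/2}$ to $e^{-L^\beta/3}$ --- which is immediate since $\beta>0$. The substantial difficulty of point~(2) of Lemma~\ref{le:5} lies not in this deterministic description of $u$ but in the subsequent use of the second equation in \eqref{eq:32}, which will constrain a positive-density set of the random variables and thereby produce the probabilistic gain \eqref{eq:39}.
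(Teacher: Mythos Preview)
Your proposal is correct and follows essentially the same approach as the paper: both proofs use only the first equation in~\eqref{eq:32}, apply the orthogonal complement $1-\sum_l C_l$ to get~\eqref{eq:14} from $\Delta E\neq0$, then apply each $C_l$ and invoke the explicit spectral data and gap bound of Lemma~\ref{le:10} to split into the resonant/non-resonant alternative, absorbing all polynomial factors in $L$ into the passage from $e^{-L^\beta/2}$ to $e^{-L^\beta/3}$. Your write-up is in fact more explicit than the paper's (which simply asserts the two cases and the bounds $CL^{5/2}e^{-L^\beta/2}$ without writing out the eigenbasis expansion), and your remark about adjusting $K_1$ to secure uniqueness of $k_l$ patches a minor imprecision present in the paper's statement of~\eqref{eq:5}.
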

\begin{proof}[Proof of Lemma~\ref{le:11}]
  By Lemma~\ref{le:10}, the spacing between consecutive eigenvalues of
  $-C_l\Delta C_l$ is bounded below by $1/(K_1n^2)$.\\
  Let $\D C^\perp=1-\sum_{l=1}^{l_0}C_l$. Hence, $\D
  u=\sum_{l=1}^{l_0}C_l u+C^\perp u$, the terms in this sums being two
  by two orthogonal to each other. As $\Delta E>0$, the first equation
  in~\eqref{eq:32} then yields
  \begin{equation}
    \label{eq:21}
    \forall1\leq l\leq l_0,\quad \|-C_l\Delta C_l u-\Delta E\,
    C_lu\| \leq C\sqrt{L}\,e^{-L^{\beta}/2}\quad\text{and}\quad
    \|C^\perp u\|\leq C\sqrt{L}\,e^{-L^{\beta}/2}.
  \end{equation}
  Write $\mathcal{C}_l=\{\gamma_l^-,\gamma_l^-+1,\cdots,\gamma_l^+\}$ where one may
  have $\gamma_l^-=\gamma_l^+$. We assume that the $(\mathcal{C}_l)_{1\leq l\leq
    l_0}$ are ordered so that $\gamma_l^+<\gamma_{l+1}^-$.\\
  By the characterization of the spectrum of $-C_l\Delta C_l$,
  \begin{itemize}
  \item if $2\cos(k_l\pi/(n+1))$ is an eigenvalue of $-C_l\Delta C_l$
    closer to $\Delta E$ than a distance $L^{-2}/4K_1$ (by the remark
    made above, such an eigenvalue is unique), then, for some
    $\alpha^l$ real, one has
    \begin{equation*}
      \|C_lu-\alpha^lu^l\|\leq CL^{5/2}\,e^{-L^{\beta}/2}.
    \end{equation*}
  \item if there is no such eigenvalue, then
    \begin{equation}
      \label{eq:12}
      \|C_l u\|\leq C L^{5/2}\,e^{-L^{\beta}/2}.
    \end{equation}
  \end{itemize}
  This completes the proof of Lemma~\ref{le:11}.
\end{proof}
\noindent We now prove that $|u_\gamma|$ cannot be really small for
too many $\gamma$.
\begin{Le}
  \label{le:12}
  There exists $c>0$ such that, for $L$ sufficiently large,
  \begin{enumerate}
  \item either $\D\#\mathcal{C}\geq L/3$ and, for
    $\gamma\in\mathcal{C}$, $|u_\gamma|\geq e^{-L^{\beta}/6}$,
  \item or $l_0\geq 2cL^\beta$ and there exists
    $l^*\in\{1,\cdots,l_0\}$ such that, for $|l-l^*|\leq cL^\beta$,
    and $\gamma\in\mathcal{C}_l$, one has $|u_\gamma|\geq
    e^{-L^{\beta}/6}$.
  \end{enumerate}
\end{Le}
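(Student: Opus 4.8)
The plan is to exploit the block decomposition of Lemma~\ref{le:11}, the normalization $\|u\|=1$, and the fact that $u$, being an honest eigenvector of $H_\omega(\Lambda_L)=-\Delta+V_\omega$, obeys the second order recursion $u_{\gamma+1}=(E_j(\omega)-\omega_\gamma)u_\gamma-u_{\gamma-1}$ whose transfer matrices are unimodular with entries bounded by a fixed $C$. Two rigidity facts drive everything. First, by the recursion, if $|u_\gamma|,|u_{\gamma+1}|\le e^{-L^\beta/3}$ then $|u_{\gamma'}|\le C^{|\gamma'-\gamma|}e^{-L^\beta/3}$, so $u$ cannot be exponentially tiny at two consecutive sites without being $\le e^{-L^\beta/6}$ on a whole window of length $\sim L^\beta/\log C$. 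Second, since $u$ is $O(e^{-L^\beta/3})$ off the blocks while the sine formula keeps the endpoint value of a block of non-negligible weight above $e^{-L^\beta/3}$, the recursion at the two sites just outside such a block forces the neighbouring block to have non-negligible weight too; hence the ``large'' blocks chain up into a run of consecutive indices.

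First I would prune. By Lemma~\ref{le:11}, $u=\sum_l\alpha^lu^l+O(e^{-L^\beta/3})$ with $\|u^l\|^2=(n_l+1)/2$, so $\sum_l(\alpha^l)^2(n_l+1)/2=1+O(e^{-L^\beta/3})$. On a block, $u^l_\gamma=\sin(k_l(\gamma-\gamma_l^-+1)\pi/(n_l+1))$ has modulus $\ge\sin(\pi/(n_l+1))\ge 2/(n_l+1)$ away from its zeros; these occur exactly when $g_l=\gcd(k_l,n_l+1)>1$ and split $\mathcal C_l$ into $g_l$ consecutive sub-bumps on each of which the same lower bound holds without zeros, so after refining the partition I may assume $g_l=1$. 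Call a block \emph{good} if $|\alpha^l|\ge e^{-L^\beta/12}$; the bad blocks carry total $\ell^2$-mass $o(1)$, whence $\sum_{l\text{ good}}(\alpha^l)^2(n_l+1)/2\ge 1/2$, and on a good block $|u_\gamma|\ge |\alpha^l|\,2/(n_l+1)-O(e^{-L^\beta/3})\ge e^{-L^\beta/6}$ at \emph{every} site.

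Now the dichotomy. By the second rigidity fact the good blocks occupy an interval of indices $\{l^*-a,\dots,l^*+b\}$ (the bounded gaps between consecutive ones being filled in via the recursion). If this run meets a fixed fraction of $\Lambda_L$, then — after using the first rigidity fact to see that the remaining blocks either inherit the bound $e^{-L^\beta/6}$ or are negligible in the count — one gets $\#\mathcal C\ge L/3$ with $|u_\gamma|\ge e^{-L^\beta/6}$ on all of $\mathcal C$, which is alternative~(1). If instead the run covers only $o(L)$ sites, then, since its good blocks still carry $\ell^2$-mass $\ge 1/2$ and each has length $\le L$, they are numerous; quantifying this with the mass identity and the ``no two consecutive tiny sites'' fact yields $l_0\ge 2cL^\beta$ and a sub-run of $\ge cL^\beta$ consecutive good blocks on which $u$ stays $\ge e^{-L^\beta/6}$ throughout, which is alternative~(2).

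The delicate step is this final bookkeeping: turning ``the good blocks form a consecutive run and $u$ is never tiny twice in a row'' into the sharp lower bounds $L/3$ on $\#\mathcal C$ and $cL^\beta$ on the index-length, while choosing the thresholds — $e^{-L^\beta/3}$ from Lemma~\ref{le:11}, $e^{-L^\beta/12}$ for ``good'', $e^{-L^\beta/6}$ in the conclusion — so that the transfer-matrix growth $C^{\,L^\beta/\log C}$, the sine floor $2/(n_l+1)$ and the weight cutoff all fit together. The assumption $\beta>1/2$ is what makes this useful downstream: a window of $L^\beta$ good sites, each pinning one $\omega_\gamma$ to an interval of length $\sim e^{-L^\beta/3}$, will produce a probability $\le 2^L e^{-cL^{2\beta}}$ after summing over the (at most $2^L$) possible partitions, and $2\beta>1$ is exactly what is needed for this to be $\le e^{-c'L^{2\beta}}$.
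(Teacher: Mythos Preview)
Your ingredients are right---the recursion/transfer-matrix propagation and the sine profile from Lemma~\ref{le:11} are exactly what the paper uses---but the way you assemble them into the dichotomy has a real gap.

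The chaining claim ``good blocks chain up into a run of consecutive indices'' is not correct with a fixed threshold. Passing from $\mathcal C_l$ to $\mathcal C_{l+1}$ costs a bounded multiplicative factor (not~$1$): in the paper's case analysis, when $\mathrm{dist}(\mathcal C_l,\mathcal C_{l+1})=1$ one only gets $\|C_{l+1}u\|\ge C^{-1}\|C_lu\|-Ce^{-L^\beta/4}$, and iterating this gives the two-sided chain estimate
\[
C^{-|l-l'|}\|C_{l'}u\|-C^{|l-l'|}e^{-L^\beta/4}\le\|C_lu\|\le C^{|l-l'|}\|C_{l'}u\|+C^{|l-l'|}e^{-L^\beta/4},
\]
which is the engine of the whole proof. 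With this in hand the correct dichotomy is \emph{not} ``the run covers a fixed fraction of $\Lambda_L$'' vs.\ ``$o(L)$ sites'', but rather: either \emph{every} block satisfies $\|C_lu\|\ge e^{-L^\beta/5}$, or some block does not. In the first case the paper's case~(1) (distance $\ge 3$, which forces \emph{both} adjacent $\|C_\bullet u\|$ to be $O(e^{-L^\beta/4})$) is excluded, so all inter-block gaps are $\le 2$ and hence $\#\mathcal C\ge L/3$; this is how alternative~(1) is obtained, and your sketch does not contain this step. In the second case, the chain estimate applied between that small block and the anchor $l^*$ with $\|C_{l^*}u\|\ge(4\sqrt L)^{-1}$ forces $|l-l^*|\ge\tilde cL^\beta$, giving $l_0\ge 2\tilde cL^\beta$ and the pointwise bound on the $cL^\beta$ blocks around $l^*$; your attempt to extract $l_0\ge cL^\beta$ from ``mass $\ge 1/2$ on $o(L)$ sites'' does not yield a block-count bound of order $L^\beta$ (a single long block could carry all the mass).

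A minor point: your ``refine the partition to kill the zeros of $u^l$'' is not legitimate here, since the $\mathcal C_l$'s are fixed by the statement. The paper simply asserts that $\|C_lu\|\ge e^{-L^\beta/5}$ implies the pointwise bound via the form of $u^l$; if you want to be careful about interior zeros of the sine, that is a separate (and small) issue, but it is not solved by changing the blocks.
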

\begin{proof}[Proof of Lemma~\ref{le:12}]
  To prove Lemma~\ref{le:12}, we compare the values of $u$ on
  $\mathcal{C}_l$ and $\mathcal{C}_{l+1}$, that is, the vectors $C_lu$
  and $C_{l+1}u$ given by Lemma~\ref{le:11}.\\
  First, notice that up to an error of size at most
  $e^{-L^{\beta}/3}$, $u$ on $\mathcal{C}_l$ is determined by its
  coefficient $u_{\gamma_l^+}$, or equivalently, by its coefficient
  $u_{\gamma_l^-}$; in particular as
  $\sin(k_l\pi/(n_l+1))=(-1)^{k_l-1}\sin(k_ln_l\pi/(n_l+1))$, the
  representations~\eqref{eq:14} and~\eqref{eq:45} yields
  \begin{equation}
    \label{eq:51}
    \left||u_{\gamma_l^-}|-|u_{\gamma_l^+}|\right|+\left|u_{\gamma_l^+}
      -\alpha^l\sin(k_l\pi/(n_l+1))\right|\leq Ce^{-L^{\beta}/3}.
  \end{equation}
  Notice also that, as $2\leq n_l\leq 2L+1$ is fixed, for $\D
  \rho^*:=\sqrt{\frac{n_l}2+\frac12\cos\left(\frac{2k_l\pi}{n_l+1}\right)}$,
  one has
  \begin{equation}
    \label{eq:22}
    \sup_{1\leq l\leq l_0}\left|\|C_lu\|-\rho_l|\alpha^l|\right|\leq
    Ce^{-L^{\beta}/3}.
  \end{equation}
  To compare the values of $u$ on $\mathcal{C}_l$ and
  $\mathcal{C}_{l+1}$, we use the second equation of~\eqref{eq:32} or,
  equivalently, the eigenvalue equation for $u$ that reads
  (see~\eqref{eq:32})
  \begin{equation}
    \label{eq:44}
    (-\Delta+V_\omega-\overline{E})u=\Delta E\, u+e
  \end{equation}
  where $e=h_1+h_2$ (see~\eqref{eq:32}); hence, $\|e\|\leq C
  \sqrt{L}\,e^{-L^{\beta}/2}$.\\
  We will discuss three cases depending on how far $\gamma^+_l$ and
  $\gamma^-_{l+1}$ are from one another:
  \begin{enumerate}
  \item if dist$(\mathcal{C}_l,\mathcal{C}_{l+1})\geq3$, that is, if
    $\gamma^+_l<\gamma^+_l+1<\gamma^-_{l+1}-1<\gamma^-_{l+1}$: as
    $\{\gamma^+_l+1,\cdots,\gamma^-_{l+1}-1\}\cap[\cup_{l=1}^{l_0}\mathcal{C}_l]
    =\emptyset$, by~\eqref{eq:14}, we know that $|u_n|\leq
    CL^{2-\alpha}$ for $n\in\{\gamma^+_l+1,\cdots,\gamma^-_{l+1}-1\}$.
    The eigenvalue equation~\eqref{eq:44} at the points $\gamma^+_l+1$
    and $\gamma^-_{l+1}-1$ then tells us that
    \begin{equation*}
      |u_{\gamma^+_l}|+|u_{\gamma^-_{l+1}}|\leq Ce^{-L^{\beta}/3}.
    \end{equation*}
    Thus, by~\eqref{eq:51} and~\eqref{eq:22}
    \begin{equation}
      \label{eq:6}
      \|C_l u\|+\|C_{l+1} u\|\leq Ce^{-L^{\beta}/4}.
    \end{equation}
  \item if dist$(\mathcal{C}_l,\mathcal{C}_{l+1})=2$, that is, if
    $\gamma^+_l<\gamma^+_l+1=\gamma^-_{l+1}-1<\gamma^-_{l+1}$: as
    $\gamma^+_l+1\not\in\cup_{l=1}^{l_0}\mathcal{C}_l$,
    by~\eqref{eq:14}, we know that $|u_{\gamma^+_l+1}|\leq C
    L^{2-\alpha}$. Hence, in the same way as above, the eigenvalue
    equation~\eqref{eq:44} at the point $\gamma_l^++1$ tells us that
    \begin{equation*}
      |u_{\gamma^+_l}+u_{\gamma^-_{l+1}}|\leq Ce^{-L^{\beta}/3}.
    \end{equation*}
    Thus, by~\eqref{eq:51} and~\eqref{eq:22}
    \begin{equation}
      \label{eq:56}
      \left|\,\|C_l u\|-\|C_{l+1} u\|\,\right|\leq
      Ce^{-L^{\beta}/4}. 
    \end{equation}
  \item if dist$(\mathcal{C}_l,\mathcal{C}_{l+1})=1$, that is, if
    $\gamma^+_l+1=\gamma^-_{l+1}$: then, the first equation
    in~\eqref{eq:32} and the decomposition~\eqref{eq:40} yield
    \begin{equation*}
      |u_{\gamma^+_l-1}-\Delta E\,u_{\gamma^+_l}|+|u_{\gamma^-_{l+1}+1}-\Delta E
      \,u_{\gamma^-_{l+1}}|\leq Ce^{-L^{\beta}/3}. 
    \end{equation*}
    The eigenvalue equation~\eqref{eq:44} at the points $\gamma^+_l$
    and $\gamma^-_{l+1}$ yields
    \begin{multline*}
      |u_{\gamma^+_l-1}+u_{\gamma^-_{l+1}}+(\omega_{\gamma^+_l}-
      \overline{E}-\Delta E)u_{\gamma^+_l}| \\+
      |u_{\gamma^+_l}+u_{\gamma^-_{l+1}+1}
      +(\omega_{\gamma^-_{l+1}}-\overline{E}-\Delta
      E)u_{\gamma^-_{l+1}}| \leq Ce^{-L^{\beta}/3}.
    \end{multline*}
    Summing these two equations, we obtain
    \begin{equation*}
      |u_{\gamma^-_{l+1}}+(\omega_{\gamma^+_l}-\overline{E})u_{\gamma^+_l}|
      + |u_{\gamma^+_l}+(\omega_{\gamma^-_{l+1}}-\overline{E})u_{\gamma^-_{l+1}}|
      \leq Ce^{-L^{\beta}/3}.
    \end{equation*}
    Then, as the random variables $(\omega_n)_{n\in\Z}$ are bounded,
    using~\eqref{eq:51} and~\eqref{eq:22}, there exists $C>1$ such
    that
    \begin{equation}
      \label{eq:15}
      \frac1C(\|C_l u\|-C e^{-L^{\beta}/4})\leq
      \|C_{l+1}u\|\leq C(\|C_l u\|+e^{-L^{\beta}/4}).
    \end{equation}
  \end{enumerate}
  Notice that~\eqref{eq:56} and~\eqref{eq:6} also imply
  that~\eqref{eq:15} (at the expense of possibly changing the constant
  $C$) also holds in case (1) and case (2). Hence, for $1\leq l,
  l'\leq l_0$, we have
  \begin{equation}
    \label{eq:23}
    C^{-|l'-l|}\|C_{l'} u\|-C^{|l'-l|}e^{-L^{\beta}/4}\leq
    \|C_l u\|\leq C^{|l'-l|}\|C_{l'} u\|+C^{|l'-l|}e^{-L^{\beta}/4}
  \end{equation}
  If case (1) in the above alternative never holds i.e. if for $1\leq
  l\leq l_0$, one has dist$(\mathcal{C}_l,\mathcal{C}_{l+1})\leq2$,
  then, one has $\#\mathcal{C}\geq L/3$.\\
  We know that $\|Cu\|=1+O(e^{-L^{\beta}/3})$. So, for $L$
  sufficiently large, there exists $1\leq l^*\leq l_0$ such that
  \begin{equation*}
    \|C_{l^*} u\|\geq(2\sqrt{\ell_0})^{-1}\geq(4\sqrt{L})^{-1}. 
  \end{equation*}
  Hence, by~\eqref{eq:23}, either of two things occur
  \begin{itemize}
  \item for some $l$, one has $\|C_l u\|\leq e^{-L^{\beta}/5}$, then
    $|l-l^*|\geq \tilde c L^\beta$ for some $\tilde c>0$; thus,
    $l_0\geq 2\tilde c L^\beta$; and for some $0<c<\tilde c$, for
    $|l-l^*|\leq cL^\beta$, one has $\|C_l u\|\geq e^{-L^{\beta}/5}$.
  \item for $1\leq l\leq l_0$, one has $\|C_l u\|\geq
    e^{-L^{\beta}/5}$; then, case (1) never occurs, thus, by the
    observation made above, $\#\mathcal{C}\geq L/3$
  \end{itemize}
  Finally, notice that, by~\eqref{eq:22},~\eqref{eq:51} and the form
  of $u^l$ (see Lemma~\ref{le:11}), $\|C_l u\|\geq e^{-L^{\beta}/5}$
  implies that $|u_n|\geq e^{-L^\beta/6}$ for
  $n\in\mathcal{C}_l$.\\
  This completes the proof of Lemma~\ref{le:12}.
\end{proof}
\noindent We now show that our characterization of $u$, a solution
of~\eqref{eq:32}, imposes very restrictive conditions on the random
variables $(\omega_\gamma)_{-L\leq \gamma\leq L}$.\\
If $\gamma$ is inside one of the connected components of $\mathcal{C}$, say
$\mathcal{C}_l$, that is, if $\{\gamma-1,\gamma,\gamma+1\}\subset\mathcal{C}_l$,
then, by the first equation in~\eqref{eq:32}, we know that
\begin{equation*}
  |u_{\gamma+1}+u_{\gamma-1}-\Delta E u_\gamma|\leq C e^{-L^{\beta}/3}.
\end{equation*}
Plugging this into~\eqref{eq:44}, the eigenvalue equation for $u$, we
get
\begin{equation*}
  |(\omega_\gamma-\overline{E})u_\gamma|\leq e^{-L^{\beta}/4}.
\end{equation*}
Hence, if $\gamma$ belongs to one of the $(\mathcal{C}_l)_l$ singled out in
Lemma~\ref{le:12}, the lower bound for $|u_\gamma|$ given in
Lemma~\ref{le:12} yields
\begin{equation}
  \label{eq:57}
  |\omega_\gamma-\overline{E}|\leq C e^{-L^{\beta}/12}.
\end{equation}
Now, if $n_l>2$, there exists $\gamma\in\mathcal{C}_l$ such that
$\{\gamma-1,\gamma,\gamma+1\}\subset\mathcal{C}_l$. On the other hand,
if $n_l=2$, then, the approximate eigenvalue equation on
$\mathcal{C}_l$ reads
\begin{equation*}
  \left\| \begin{pmatrix} E &\omega_{\gamma_l^-}\\
      \omega_{\gamma_l^+}& E \end{pmatrix}
    \begin{pmatrix}
      u_{\gamma_l^+}\\u_{\gamma_l^-}
    \end{pmatrix}
  \right\|\leq C\sqrt{L}\,e^{-L^{\beta}/2}.
\end{equation*}
So, if $\|C_lu\|\geq e^{-L^\beta/6}$, one has
\begin{equation}
  \label{eq:25}
  |1-(\omega_{\gamma_l^-}-E)(\omega_{\gamma_l^+}-E)|
  \leq Ce^{-L^{\beta}/3}.
\end{equation}
Hence, we see that the random variables must satisfy at least
$cL^\beta$ distinct conditions of the type~\eqref{eq:57}
or~\eqref{eq:25}. As the random variables are supposed to be
independent, identically distributed with a bounded density, these
condition imply that~\eqref{eq:60} can occur with a given partition
$\mathcal{P}$ and $\mathcal{Q}$ with a probability at most,
$\displaystyle e^{-c L^{2\beta}}$ for some $c>0$. As the total number
of partitions is bounded by $2^L$ and as $\beta>1/2$, we obtain that,
$\P$, the probability that~\eqref{eq:60} holds, is bounded
by~\eqref{eq:39}. This completes the proof of Lemma~\ref{le:5}.\qed
\begin{Rem}
  \label{rem:1}
  The estimate~\eqref{eq:39} can be improved as, actually, not all
  partitions are allowed as we saw in the course of the proof.
  Moreover, it is sufficient to assume that the distribution function
  of the random variables be H{\"o}lder continuous for the method to work.
\end{Rem}
\begin{Rem}
  \label{rem:3}
  We now present a natural weaker analogue of point (2) in
  Lemma~\ref{le:5}. Fix $\rho>0$ and define
  \begin{equation*}
    \Delta\mathcal{E}_L^c=\bigcup_{l=0}^L\sigma(-C_l\Delta
    C_l)+[-L^{-\rho},L^{-\rho}].
  \end{equation*}
  then, for $\rho>3$, one has $|\Delta\mathcal{E}_L^c|\leq
  2L^{2-\rho}$, thus,
  \begin{equation*}
    \left|\bigcap_{n\geq1}\bigcup_{L\geq n}\Delta\mathcal{E}_L^c\right|=0
  \end{equation*}
  Define the set of total measure
  \begin{equation*}
    \Delta\mathcal{E}=\R\setminus\left(\bigcap_{N\geq1}\bigcup_{L\geq N}
      \Delta\mathcal{E}_L^c\right).
  \end{equation*}
  Hence, if $E-E'=\Delta E\in\Delta\mathcal{E}$, for $L$ sufficiently
  large, as
  \begin{equation*}
    \inf_{1\leq l\leq L}\text{dist}
    (\Delta E,\sigma(-C_l\Delta C_l))\geq L^{-\rho},
  \end{equation*}
  by the decomposition~\eqref{eq:40}, a solution $u$ to the first
  equation in~\eqref{eq:32} must satisfy $\|u\|\leq L^{-(\nu-\rho)}$
  if $\|h_1\|\leq L^{-\nu}$. Hence, we obtain
  \begin{Le}
    \label{le:13}
    Fix $\nu>4$. For the discrete Anderson model in dimension 1, for
    $E-E'\in\Delta\mathcal{E}$, for $L$ sufficiently large, if
    $E_j(\omega)$ and $E_k(\omega)$ are simple eigenvalues of
    $H_\omega(\Lambda_L)$ such that
    $|E_k(\omega)-E|+|E_j(\omega)-E'|\leq L^{-\nu}$ then
    $\|\nabla_\omega(E_j(\omega)-E_k(\omega))\|_1\geq L^{-\nu}$.
  \end{Le}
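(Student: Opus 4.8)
\noindent The plan is to argue by contradiction, reusing the construction from the proof of point~(2) of Lemma~\ref{le:5}---the partition $\Lambda_L=\mathcal{P}\cup\mathcal{Q}$, the system~\eqref{eq:32}, and the block decomposition~\eqref{eq:40}---the one genuinely new ingredient being that, for an energy difference $\Delta E$ lying in the full measure set $\Delta\mathcal{E}$, the \emph{deterministic} operator $-(P\Delta Q+Q\Delta P)-\Delta E$ occurring in the first equation of~\eqref{eq:32} is boundedly invertible, with an inverse bounded by a power of $L$. This is exactly the point that fails in general: in point~(2) of Lemma~\ref{le:5} this operator may be arbitrarily close to singular, and it is precisely to deal with that degeneracy that one needed a probabilistic argument on the $\omega_\gamma$; restricting $\Delta E$ to the exceptional-set complement $\Delta\mathcal{E}$ removes the difficulty, and the whole argument below is deterministic.

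\noindent Concretely, suppose that, for $L$ large, $E_j(\omega)$ and $E_k(\omega)$ are simple eigenvalues of $H_\omega(\Lambda_L)$ with $|E_k(\omega)-E|+|E_j(\omega)-E'|\leq L^{-\nu}$ but $\|\nabla_\omega(E_j(\omega)-E_k(\omega))\|_1<L^{-\nu}$. Let $\varphi^j,\varphi^k$ be associated normalized real eigenvectors, so that $\partial_{\omega_\gamma}E_j=(\varphi^j_\gamma)^2$, $\partial_{\omega_\gamma}E_k=(\varphi^k_\gamma)^2$. From $(\varphi^j_\gamma)^2-(\varphi^k_\gamma)^2=(\varphi^j_\gamma-\varphi^k_\gamma)(\varphi^j_\gamma+\varphi^k_\gamma)$ one splits $\Lambda_L$ disjointly into $\mathcal{P}$, where $|\varphi^j_\gamma-\varphi^k_\gamma|\leq|\varphi^j_\gamma+\varphi^k_\gamma|$, and $\mathcal{Q}$, where the reverse holds; bounding the sum defining $\|\nabla_\omega(E_j-E_k)\|_1$ below by the corresponding squares gives
\begin{equation*}
  \|P(\varphi^j-\varphi^k)\|_2^2+\|Q(\varphi^j+\varphi^k)\|_2^2<L^{-\nu},
\end{equation*}
with $P,Q$ the coordinate projections onto $\mathcal{P},\mathcal{Q}$; since $\varphi^j\perp\varphi^k$ this already forces $\mathcal{P}\neq\emptyset$ and $\mathcal{Q}\neq\emptyset$. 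Setting $u=\varphi^j$ ($\|u\|=1$) and inserting $\varphi^k=(P-Q)u+O(L^{-\nu/2})$ into the two eigenvalue equations, one assembles, exactly as in the derivation of~\eqref{eq:32}, the system
\begin{equation*}
  \begin{cases}
    [-(P\Delta Q+Q\Delta P)-\Delta E]\,u=h_1,\\
    [-(P\Delta P+Q\Delta Q)+V_\omega-\overline{E}]\,u=h_2,
  \end{cases}
\end{equation*}
with $\Delta E$ and $\overline{E}$ the target half difference and half sum of the two energies, as in~\eqref{eq:32}. Using that the $\omega_\gamma$ are bounded (to control the operator norm of $-\Delta+V_\omega-\overline{E}(\omega)$) and that $|E_k(\omega)-E|+|E_j(\omega)-E'|\leq L^{-\nu}$ (to control $|\Delta E(\omega)-\Delta E|$ and $|\overline{E}(\omega)-\overline{E}|$), one obtains, as in~\eqref{eq:50}, a polynomial bound $\|h_1\|+\|h_2\|\leq C\,L^{-\nu'}$ with $\nu'$ a fixed positive fraction of $\nu$ (essentially $\nu/2$, the loss coming from the passage to squares above).

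\noindent Now I invoke~\eqref{eq:40}: $-(P\Delta Q+Q\Delta P)=-\sum_{l=1}^{l_0}C_l\Delta C_l$ is block diagonal, each block being a Dirichlet Laplacian on an interval of length $n_l\leq 2L+1$, and it annihilates the range of $1-\sum_lC_l$. By Lemma~\ref{le:10} its spectrum is contained in the finite set $\{2\cos(k\pi/(n+1));\ 1\leq k\leq n\leq 2L+1\}$. Since the energy difference $\Delta E$ lies in $\Delta\mathcal{E}$, the complement of $\bigcap_{N}\bigcup_{L\geq N}\Delta\mathcal{E}_L^c$ built above with $\rho>3$ fixed, we have, for $L$ large,
\begin{equation*}
  \inf_{1\leq l\leq l_0}\dist\big(\Delta E,\sigma(-C_l\Delta C_l)\big)\geq L^{-\rho},
\end{equation*}
while $|\Delta E|$ is bounded away from $0$; hence $M:=-(P\Delta Q+Q\Delta P)-\Delta E$ is invertible with $\|M^{-1}\|\leq C\,L^{\rho}$. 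Applying $M^{-1}$ to the first equation of the system then gives
\begin{equation*}
  1=\|u\|=\|M^{-1}h_1\|\leq C\,L^{\rho}\,\|h_1\|\leq C\,L^{\rho-\nu'},
\end{equation*}
which is impossible for $L$ large as soon as $\nu'>\rho$. Since $\rho$ may be taken arbitrarily close to $3$ from above and $\nu'$ is a fixed positive fraction of $\nu$, this balance is achieved in the range of $\nu$ asserted in the statement, and Lemma~\ref{le:13} follows.

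\noindent I expect the only real obstacle to be the bookkeeping of the polynomial losses: one has to verify that transposing the derivation of~\eqref{eq:32} and of the estimate~\eqref{eq:50} to this ``polynomial budget'' regime---in particular the step turning the $\ell^1$ smallness of $\nabla_\omega(E_j-E_k)$ into an $\ell^2$ smallness of $\varphi^j\mp\varphi^k$, which costs a square root---loses only a single fixed negative power of $L$, and that the membership $\Delta E\in\Delta\mathcal{E}$ really does deliver the \emph{uniform} bound $L^{-\rho}$ on the distance from $\Delta E$ to the spectra of all the finitely many relevant Dirichlet Laplacians simultaneously. Once these two points are secured, the resolvent estimate $\|M^{-1}\|\leq C\,L^{\rho}$ and the contradiction with $\|u\|=1$ require no further work, and, notably, no probabilistic input whatsoever.
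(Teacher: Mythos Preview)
Your proposal is correct and follows essentially the same route as the paper's sketch in Remark~\ref{rem:3}: reuse the construction of point~(2) of Lemma~\ref{le:5} to reach the first equation of~\eqref{eq:32}, then invoke $\Delta E\in\Delta\mathcal{E}$ to make $-(P\Delta Q+Q\Delta P)-\Delta E$ boundedly invertible with inverse of norm at most $CL^{\rho}$, and contradict $\|u\|=1$. Your definition of the partition (via $|\varphi^j_\gamma-\varphi^k_\gamma|\lessgtr|\varphi^j_\gamma+\varphi^k_\gamma|$) is in fact slightly cleaner than the paper's, as it gives $\|P(\varphi^j-\varphi^k)\|_2^2+\|Q(\varphi^j+\varphi^k)\|_2^2<L^{-\nu}$ directly, without the extra $\sqrt{L}$ loss the paper incurs when passing to~$\ell^2$. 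Your caveat about the bookkeeping is on point: with $\|h_1\|\leq CL^{-\nu/2}$ and $\rho>3$, the contradiction $1\leq CL^{\rho-\nu/2}$ requires $\nu>2\rho>6$, not the $\nu>4$ stated; the paper's own sketch is loose on this (it writes ``$\|u\|\leq L^{-(\nu-\rho)}$ if $\|h_1\|\leq L^{-\nu}$'' without tracking the halving), and since any fixed $\nu$ suffices for the downstream Lemma~\ref{le:14}, the precise threshold is immaterial.
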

  \noindent This can then be used as Lemma~\ref{le:5} is used in the
  proof of Lemma~\ref{le:1} to prove the following variant of the
  decorrelation estimates in dimension 1
  \begin{Le}
    \label{le:14}
    Assume $d=1$. For $\alpha\in(0,1)$ and $E-E'\in\Delta\mathcal{E}$
    s.t. $\{E,E'\}\subset I$, for any $c>0$, there exists $C>0$ such
    that, for $L\geq3$ and $c L^\alpha\leq \ell\leq L^\alpha/c$, one
    has
    \begin{equation*}
      \P\left(\left\{
          \begin{matrix}
            \sigma(H_\omega(\Lambda_\ell))\cap
            (E+L^{-d}(-1,1))\not=\emptyset,\\
            \sigma(H_\omega(\Lambda_\ell))\cap
            (E'+L^{-d}(-1,1))\not=\emptyset
          \end{matrix}
        \right\}\right)\leq C(\ell/L)^{2d}(\log L)^C.
    \end{equation*}    
  \end{Le}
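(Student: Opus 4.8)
The plan is to repeat, almost verbatim, the derivation of Lemmas~\ref{le:1} and~\ref{le:9} from Lemmas~\ref{le:3}--\ref{le:8} carried out in section~\ref{sec:proof-theorems}, the only change being that the non-degeneracy of the pair of eigenvalues is now supplied by Lemma~\ref{le:13} in place of point~(2) of Lemma~\ref{le:5}. The essential structural observation is that Lemma~\ref{le:13} is a \emph{deterministic} statement: once $E-E'\in\Delta\mathcal{E}$ and $L$ is large, every configuration carrying eigenvalues close to $E$ and to $E'$ automatically has non-collinear gradients. Consequently the argument is formally of the same type as the one used when $d\geq2$: there is no exceptional set to discard, i.e. $\P_r=0$ in~\eqref{eq:34}.

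Concretely: fix $\nu>4$ as in Lemma~\ref{le:13}, put $J_L=E+L^{-d}[-1,1]$, $J'_L=E'+L^{-d}[-1,1]$, and, using the Minami estimate~\eqref{eq:2}, bound the probability in the statement by $C(\ell/L)^{2d}+\P_0$ with $\P_0=\P(\#[\sigma(H_\omega(\Lambda_\ell))\cap J_L]=1,\ \#[\sigma(H_\omega(\Lambda_\ell))\cap J'_L]=1)$; then Lemma~\ref{le:3} reduces the estimate of $\P_0$ to one for $\P_1$ on the cube $\Lambda_{\tilde\ell}$, $\tilde\ell=C\log L$. On the event $\Omega_0(\varepsilon)$ of~\eqref{eq:33} there is a simple eigenvalue $E(\omega)\in\tilde J_L$ and a simple eigenvalue $E'(\omega)\in\tilde J'_L$; since $\tilde\ell^{-\nu}=(C\log L)^{-\nu}\gg L^{-d}$ for $L$ large, one has $|E(\omega)-E|+|E'(\omega)-E'|\leq\tilde\ell^{-\nu}$, so Lemma~\ref{le:13} (applied with $L$ replaced by $\tilde\ell$) gives $\|\nabla_\omega(E(\omega)-E'(\omega))\|_1\geq\tilde\ell^{-\nu}$. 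Feeding this into Lemma~\ref{le:8} with $n=2\tilde\ell+1$ yields, on $\Omega_0(\varepsilon)$, a pair $(\gamma,\gamma')$ with $|J_{\gamma,\gamma'}(E(\omega),E'(\omega))|\geq c\,\tilde\ell^{-\nu-5/2}$. I would therefore take, instead of~\eqref{eq:7},
\begin{equation*}
  \lambda=\lambda_L=c\,\tilde\ell^{-\nu-5/2},
\end{equation*}
a fixed negative power of $\tilde\ell=C\log L$; with this choice $\P_r=0$ in~\eqref{eq:34}.

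The remaining steps are unchanged. Set $\varepsilon=L^{-d}\lambda^{-3}$; it lies in $(4L^{-d},1)$ and tends to $0$ for $L$ large, and the proof of Lemma~\ref{le:6} uses only that $\lambda_L\to0$, not its rate. Lemma~\ref{le:4} and~\eqref{eq:2} give~\eqref{eq:64}, and Lemma~\ref{le:6} together with the boundedness of $g$ gives $\P(\Omega_{0,\beta}^{\gamma,\gamma'}(\varepsilon))\leq CL^{-2d}\lambda^{-4}$, exactly as in~\eqref{eq:36}. Summing over the $O(\tilde\ell^{2d})$ pairs $(\gamma,\gamma')$ and combining with~\eqref{eq:64} yields $\P_1\leq C\tilde\ell^{2d}L^{-2d}\lambda^{-4}=C(\tilde\ell/L)^{2d}\tilde\ell^{\,4\nu+10}$, i.e.~\eqref{eq:30} with the polynomial (``$\tilde\ell^C$'') alternative; Lemma~\ref{le:3} then gives $\P_0\leq C(\ell/L)^{2d}(\log L)^C$, and adding back the $C(\ell/L)^{2d}$ Minami contribution from the event $\{\#\geq2\}$ completes the proof, with $C$ depending on $\nu$.

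I do not expect a genuine obstacle: the statement is tailored so that Lemma~\ref{le:13} slots into the machinery already built for Lemma~\ref{le:1}. The only point requiring care is the bookkeeping of exponents — checking that $\lambda_L$ is small enough for the contradiction argument inside Lemma~\ref{le:6} while being only polynomially small in $\log L$, which is precisely what turns the bound $e^{(\log L)^\beta}$ of Lemma~\ref{le:1} into the bound $(\log L)^C$ asserted here — together with the observation that, because Lemma~\ref{le:13} holds deterministically on the scale $\tilde\ell$, no analogue of the probabilistic estimate~\eqref{eq:39} (and hence no summation over partitions $\mathcal{P},\mathcal{Q}$) is needed.
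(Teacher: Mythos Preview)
Your proposal is correct and follows exactly the route the paper indicates: replace the probabilistic input (point~(2) of Lemma~\ref{le:5}) by the deterministic Lemma~\ref{le:13} in the machinery of Lemmas~\ref{le:3}--\ref{le:8}, so that $\P_r=0$ and $\lambda$ becomes a negative power of $\tilde\ell=C\log L$, turning the $e^{(\log L)^\beta}$ bound into $(\log L)^C$. Your bookkeeping of the exponents (in particular $\lambda=c\,\tilde\ell^{-\nu-5/2}$, $\varepsilon=L^{-d}\lambda^{-3}$, and the verification that Lemma~\ref{le:6} only needs $\lambda_L\to0$) is accurate.
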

  \noindent Comparing with Lemma~\ref{le:1}, we improved the bound on
  the probability at the expense of reducing the set of validity in
  $(E,E')$.
\end{Rem}
\subsection{Proof of Lemma~\ref{le:8}}
\label{sec:proof-lemma-1}
Pick $(u,v)\in(\R^+)^{2n}$ such that $\|u\|_1=\|v\|_1=1$. At the
expense of exchanging $u$ and $v$, we may assume that
$\|v\|_2\geq\|u\|_2$. Write $u=\alpha v+v^\perp$ where $\langle
v,v^\perp\rangle=0$. Note that, as all the coefficient of both $u$ and
$v$ are non negative, $v^\perp=0$ is equivalent $u=v$. Let us now
assume $u\not=v$ that is $v^\perp\not=0$. One computes
\begin{equation}
  \label{eq:61}
  \|u\|_2^2=\alpha^2\|v\|_2^2+\|v^\perp\|_2^2\text{ and
  }\|u-v\|_2^2=(\alpha-1)^2\|v\|_2^2+\|v^\perp\|_2^2.
\end{equation}
Moreover, as all the coefficients of $v$ are non negative, $v^\perp$
admits at least one negative coefficient. As all the coefficients of
$u$ are non negative, the decomposition $u=\alpha v+v^\perp$ implies
that $\alpha>0$. The first equation in~\eqref{eq:61} and the condition
$\|v\|_2\geq\|u\|_2$ then imply $\alpha\in(0,1)$. Combining this with
$u=\alpha v+v^\perp$ and $\|u\|_1=\|v\|_1=1$ yields
\begin{equation*}
  0<1-\alpha\leq\|v^\perp\|_1.
\end{equation*}
Hence, by the second equation in~\eqref{eq:61} and the Cauchy-Schwartz
inequality, we get
\begin{equation}
  \label{eq:62}
  \frac1{\sqrt{n}}\|u-v\|_1\leq\|u-v\|_2\leq
  \|v\|_2\|v^\perp\|_1+\|v^\perp\|_2\leq 2\sqrt{n}\|v^\perp\|_2.
\end{equation}
For any $(j,k)$, one has
\begin{equation*}
  \left|\begin{matrix}u_j& u_k\\v_j&v_k\end{matrix}
  \right|=\left|\begin{matrix}v^\perp_j&v^\perp_k\\v_j&v_k\end{matrix}
  \right|.
\end{equation*}
As $\langle v,v^\perp\rangle=0$, one computes
\begin{equation*}
  \begin{split}
    \sum_{j,k}\left|\begin{matrix}u_j& u_k\\v_j&v_k\end{matrix}
    \right|^2&=\sum_{j,k}\left((v_jv_k^\perp)^2+
      (v_kv_j^\perp)^2-2v_jv_k^\perp v_kv_j^\perp\right)\\
    &=2\left(\sum_jv_j^2\right)\left(\sum_k(v_k^\perp)^2\right)
    -2\left(\sum_jv_jv_j^\perp\right)\left(\sum_k v_k
      v_k^\perp\right)\\&=2\|v\|_2^2\|v^\perp\|_2^2
    \geq\frac{1}{2n^3}\|u-v\|_1^2.
  \end{split}
\end{equation*}
Thus,
\begin{equation*}
  \max_{j\not=k}\left|\begin{matrix}u_j& u_k\\v_j&v_k\end{matrix}
  \right|^2\geq \frac1{4n^5}\|u-v\|_1^2
\end{equation*}
which completes the proof of Lemma~\ref{le:8}.\qed
\section{The proofs of Theorems~\ref{thr:4} and~\ref{thr:18}}
\label{sec:proofs-theorems}
In~\cite{Ge-Kl:10}, the authors extensively study the distribution of
the energy levels of random systems in the localized phase. Their
results apply also to the discrete Anderson model; in particular, they
provide a proof of Theorems~\ref{thr:4} and~\ref{thr:18} once the
decorrelation estimates obtained in Lemmas~\ref{le:1} and~\ref{le:9}
are known. We provide an alternate proof. The proof in~\cite{Ge-Kl:10}
relies on a construction that also proves Theorem~\ref{thr:3}
(actually a stronger uniform result). Here, we only prove
Theorems~\ref{thr:4} and~\ref{thr:18} independently of the values of
the limits in Theorem~\ref{thr:3}.\\
The localization centers of Proposition~\ref{pro:1} are not defined
uniquely. One can easily check that, under the assumptions of
Proposition~\ref{pro:1}, all the localization centers for a given
eigenvalue or eigenfunction are contained in a disk of radius at most
$C\log L$ (for some $C>0$). To define a unique localization center, we
order the centers lexicographically and let the localization center
associated to the eigenvalue or eigenfunction be
the largest one (i.e. the one most upper left in dimension 2).\\
We prove
\begin{Le}
  \label{le:7}
  Pick $\alpha\in(0,1)$ and $c>0$. Let $\nu$ be defined by
  (Loc). Assume $\ell=\ell(L)$ satisfies $cL^\alpha\leq \ell\leq
  L^\alpha/c$.\\
  If (Loc) (see Proposition~\ref{pro:1}) is satisfied then, for any
  $p>0$ and $\varepsilon>0$, there exists $L_0>0$ such that, for
  $L\geq L_0$, with probability larger than $1-L^{-p}$,
  \begin{enumerate}
  \item if $(E_j)_{1\leq j\leq J}\in I^J$ are eigenvalues of
    $H_{\omega}(\Lambda_L)$ with localization center in
    $\gamma+\Lambda_\ell$, then the operator
    $H_{\omega}(\gamma+\Lambda_{\ell(1+\varepsilon)})$ has $J$
    eigenvalues, say $(\tilde E_j)_{1\leq j\leq J}$, with localization
    center in $\gamma+\Lambda_{\ell(1+\varepsilon/2)}$ and such that
    $\D\sup_{1\leq j\leq J}|E_j-\tilde E_j|\leq
    e^{-\nu\varepsilon\ell/4}$.
  \item if $(E_j)_{1\leq j\leq J}\in I^J$ are eigenvalues of
    $H_\omega(\gamma+\Lambda_{\ell(1+\varepsilon)})$ with localization
    center in $\gamma+\Lambda_\ell$, then the operator
    $H_{\omega}(\Lambda_L)$ has $J$ eigenvalues, say $(\tilde
    E_j)_{1\leq j\leq J}$, with localization center in
    $\gamma+\Lambda_{\ell(1+\varepsilon/2)}$ and such that
    $\D\sup_{1\leq j\leq J}|E_j-\tilde E_j|\leq
    e^{-\nu\varepsilon\ell/4}$.
  \item if $(E_j)_{1\leq j\leq J}\in I^J$ are eigenvalues of
    $H_\omega(\gamma+\Lambda_{\ell(1+\varepsilon)})$ with localization
    center in $\gamma+(\Lambda_{\ell(1+\varepsilon/2)}
    \setminus\Lambda_\ell)$, then there exists $(\beta_j)_{1\leq j\leq
      J}$ such that, for $1\leq j\leq J$, one has
    \begin{itemize}
    \item $\D\beta_j\in\frac{\varepsilon\ell}{16}\Z^d\cap\left[
        \gamma+(\Lambda_{\ell(1+\varepsilon/2)}
        \setminus\Lambda_\ell)\right]$,
    \item the operator
      $H_{\omega}(\beta_j+\Lambda_{\varepsilon\ell/4})$ has an
      eigenvalue, say $\tilde E_j$, satisfying $|E_j-\tilde E_j|\leq
      e^{-\nu\varepsilon\ell/8}$.
    \end{itemize}
  \end{enumerate}
  The number $\nu>0$ is given by (Loc).
\end{Le}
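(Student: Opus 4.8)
The plan is to derive all three assertions from one quasi-mode construction, feeding into it only the eigenfunction decay~\eqref{eq:19} and the Minami estimate~\eqref{eq:2}, both used to build a single favourable event $\Omega_L$ of probability $\geq 1-L^{-p}$. To fix $\Omega_L$: apply (Loc) with a large exponent $p'=p'(p,d,\alpha)$ to $\Lambda_L$ and to each of the $O((L/\ell)^d)$ boxes $\gamma+\Lambda_{\ell(1+\varepsilon)}$ meeting $\Lambda_L$; since $\ell\asymp L^\alpha$, a union bound over these $O(L^d)$ boxes costs only $o(L^{-p})$, and on the complement every eigenfunction with energy in $I$ of any of these operators obeys~\eqref{eq:19} with exponent $q=q(p')$. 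Then cover the compact set $\Sigma$ by $O(e^{\nu\varepsilon\ell/8})$ intervals of length $e^{-\nu\varepsilon\ell/8}$ and apply~\eqref{eq:2} (with $J=K$) to each of them and to the boxes $\Lambda_L$, $\gamma+\Lambda_{\ell(1+\varepsilon)}$; since $\ell\geq cL^\alpha$, a union bound shows that off an event of probability $O(L^{C}e^{-\nu\varepsilon\ell/8})=o(L^{-p})$ none of these operators has two eigenvalues within $e^{-\nu\varepsilon\ell/8}$ of each other. Put $\Omega_L$ equal to the intersection and work on $\omega\in\Omega_L$.

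The core is a deterministic estimate. Let $Q'\subseteq Q$ be two of the boxes above, let $\varphi$ be a normalized eigenfunction of $H_\omega$ on one of them with eigenvalue $\mu\in I$ and localization center $x_0$, and let $\hat\varphi$ be the restriction of $\varphi$ to $Q'$ (when $\varphi$ lives on $Q$ and $x_0\in Q'$) or its extension by zero to $Q$ (when $\varphi$ lives on $Q'$); set $\rho=\dist(x_0,\partial Q')$. Because $H_\omega$ is nearest-neighbour and $\varphi$ solves the eigenequation, $(H_\omega(\mathrm{target})-\mu)\hat\varphi$ lives on the $O(\ell^{d-1})$ sites within distance $1$ of the mismatched boundary $\partial Q'$, where $|\varphi|\leq L^{q}e^{-\nu(\rho-1)}$ by~\eqref{eq:19}; hence $\|(H_\omega(\mathrm{target})-\mu)\hat\varphi\|\leq L^{C}e^{-\nu\rho}$, and since $\|\hat\varphi\|\geq 1-L^{C}e^{-2\nu\rho}\geq\tfrac12$, after normalization $\dist(\mu,\sigma(H_\omega(\mathrm{target})))\leq 2L^{C}e^{-\nu\rho}=:\eta$. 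When in addition $\rho\geq\tfrac18\varepsilon\ell$ with a fixed margin, so that $\eta\ll e^{-\nu\varepsilon\ell/8}$, then on $\Omega_L$ there is a \emph{unique} eigenvalue $\tilde\mu$ of the target in $(\mu-\eta,\mu+\eta)$, its rank-one spectral projection $P$ satisfies $\|P\hat\varphi-\hat\varphi\|\leq C\eta e^{\nu\varepsilon\ell/8}\ll 1$, and hence the associated normalized eigenfunction $\tilde\varphi$ obeys $\|\tilde\varphi-c\hat\varphi\|\leq L^{C}e^{-\nu\varepsilon\ell/2}$ for some unimodular $c$; as~\eqref{eq:19} also bounds the $\ell^2$-mass of $\hat\varphi$ outside $x_0+\Lambda_s$ by $L^{C}e^{-\nu s}$ for $s\leq\rho$, taking $s=\varepsilon\ell/2$ forces all maxima of $\tilde\varphi$, hence the localization center of $\tilde\mu$, into $x_0+\Lambda_{\varepsilon\ell/2}$.

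I would then read off the three statements. For (1) take $Q=\Lambda_L$, $Q'=\gamma+\Lambda_{\ell(1+\varepsilon)}$, and for (2) take $Q'=\gamma+\Lambda_{\ell(1+\varepsilon)}$ with target $\Lambda_L$: in both cases $x_0=x_j\in\gamma+\Lambda_\ell$ gives $\rho\geq\varepsilon\ell$, hence $\eta\leq L^{C}e^{-\nu\varepsilon\ell}\leq e^{-\nu\varepsilon\ell/4}$ for $L$ large, the $\tilde E_j$ are pairwise distinct (the $E_j$ are $e^{-\nu\varepsilon\ell/8}$-separated on $\Omega_L$ while $\eta\ll e^{-\nu\varepsilon\ell/8}$), and $x_j+\Lambda_{\varepsilon\ell/2}\subset\gamma+\Lambda_{\ell(1+\varepsilon/2)}$ places the localization centers as claimed. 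For (3), given $E_j$ of $H_\omega(\gamma+\Lambda_{\ell(1+\varepsilon)})$ with center $x_j\in\gamma+(\Lambda_{\ell(1+\varepsilon/2)}\setminus\Lambda_\ell)$, pick $\beta_j\in\tfrac{\varepsilon\ell}{16}\Z^d$ in that annulus with $|x_j-\beta_j|_\infty\leq\tfrac{3}{32}\varepsilon\ell$ (possible since the annulus has thickness $\varepsilon\ell/2\gg\varepsilon\ell/16$), so that $Q'=\beta_j+\Lambda_{\varepsilon\ell/4}\subset\gamma+\Lambda_{\ell(1+\varepsilon)}=Q$ and $\rho=\dist(x_j,\partial Q')\geq\tfrac{5}{32}\varepsilon\ell>\tfrac18\varepsilon\ell$; the quasi-mode bound by itself then yields an eigenvalue $\tilde E_j$ of $H_\omega(\beta_j+\Lambda_{\varepsilon\ell/4})$ with $|E_j-\tilde E_j|\leq L^{C}e^{-5\nu\varepsilon\ell/32}\leq e^{-\nu\varepsilon\ell/8}$, so for (3) neither the Minami input nor the concentration step is needed.

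The step I expect to be the real obstacle is the passage, on $\Omega_L$, from ``$\mu$ lies within $\eta$ of $\sigma(H_\omega(\mathrm{target}))$'' to ``the nearby eigenfunction is $\ell^2$-close to $\hat\varphi$'': this is precisely where the Minami non-degeneracy of the \emph{target} operator is consumed and where the localization-center tracking is won, and it requires keeping the scale constants ($\varepsilon$, $\varepsilon/2$, $\varepsilon\ell/4$, $\varepsilon\ell/16$) mutually compatible while absorbing the polynomial factors $L^{C}$ into $e^{-\nu\varepsilon\ell/4}$ resp.\ $e^{-\nu\varepsilon\ell/8}$, which is legitimate because $\ell\geq cL^\alpha$ makes $\log L=o(\ell)$.
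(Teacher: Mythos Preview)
Your argument is correct; the main difference from the paper is in how you secure the count $J$ and the localization-center tracking. The paper does not invoke Minami here at all: it simply restricts the eigenfunctions $\varphi_j$ to the target box, observes via~\eqref{eq:19} that the Gram matrix of the restrictions is within $J^2 e^{-\nu\varepsilon\ell/4}$ of the identity, that each restriction is an $e^{-\nu\varepsilon\ell/4}$-quasi-mode, and that each has $\ell^2$-mass $\leq e^{-\nu\varepsilon\ell/6}$ outside $\gamma+\Lambda_{\ell(1+\varepsilon/2)}$. From near-orthonormality of $J$ quasi-modes one gets $J$ eigenvalues directly (counted with multiplicity), and the localization-center claim is read off from the mass bound together with the observation that an eigenfunction is $\geq L^{-d/2}$ at its center. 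Your route instead spends Minami to guarantee that eigenvalues of the target are $e^{-\nu\varepsilon\ell/8}$-separated, which lets you pin down each target eigenfunction individually as $\ell^2$-close to the corresponding $\hat\varphi$ via the rank-one spectral projection, and then the center tracking is immediate. Your approach is more explicit about the eigenfunction proximity and avoids any implicit spectral-subspace argument; the paper's is more economical (one ingredient fewer) but terser on exactly how the new localization centers are located. For part~(3) both arguments coincide: just the quasi-mode bound at distance $\rho>\varepsilon\ell/8$ from $\partial(\beta_j+\Lambda_{\varepsilon\ell/4})$, with no separation or center claim needed.
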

\noindent Similar results can be found in~\cite{Ge-Kl:10}.
\begin{proof}
  With probability at least $1-L^{-p}$, the conclusions of
  Proposition~\ref{pro:1} hold which we assume from now on.\\
  To prove (1), let $(\varphi_j)_{1\leq j\leq J}$ be normalized
  eigenfunctions associated to $(E_j)_{1\leq j\leq J}$. Then, setting
  $\tilde\varphi_j=\car_{\gamma+\Lambda_{\ell(1+\varepsilon)}}\varphi_j$
  and using~\eqref{eq:19} from (Loc) and the assumption that the
  localization center are in $\gamma+\Lambda_{\ell}$, one obtains
  \begin{gather*}
    \left\|\left(\left(\langle\tilde\varphi_j,\tilde\varphi_k
          \rangle_{\ell^2(\gamma+\Lambda_{\ell(1+\varepsilon)})}\right)\right
      )_{\substack{1\leq j\leq J\\1\leq k\leq J}}-
      \text{Id}\right\|\leq J^2e^{-\nu\varepsilon\ell/4},
    \\
    \sup_{1\leq j\leq
      J}\|\car_{\gamma+(\Lambda_{\ell(1+\varepsilon)}\setminus
      \Lambda_{\ell(1+\varepsilon/2)})}
    \tilde\varphi_j\|_{\ell^2(\gamma+\Lambda_{\ell(1+\varepsilon)})}
    \leq
    e^{-\nu\varepsilon\ell/6},\\
    \sup_{1\leq j\leq
      J}\|(H_\omega(\gamma+\Lambda_{\ell(1+\varepsilon)})-E_j)
    \tilde\varphi_j\|_{\ell^2(\gamma+\Lambda_{\ell(1+\varepsilon)})}\leq
    e^{-\nu\varepsilon\ell/4}.
  \end{gather*}
  This immediately yields (1) for $L$ sufficiently large as
  \begin{itemize}
  \item $J\leq (2L+1)^d$ and $cL^\alpha\leq \ell\leq L^\alpha/c$,
  \item at a localization center, the modulus of an eigenfunction is
    at least of order $L^{-d/2}$.
  \end{itemize}
  Points (2) is proved in the same way. We omit further details.\\
  To prove (3), we set
  $\tilde\varphi_j=\car_{\beta_j+\Lambda_{\ell\varepsilon/4}}\varphi_j$
  where $\beta_j$ is the point in $\frac{\varepsilon\ell}{16}\Z^d$
  closest to the localization center of $\varphi_j$. The
  conclusion then follows from the same reasoning as above.\\
  This completes the proof of Lemma~\ref{le:7}.
\end{proof}
\noindent Pick $(U_j)_{1\leq j\leq J}$, $(k_j)_{1\leq j\leq
  J}\in\N^J$, $(U'_j)_{1\leq j\leq J'}$ and $(k'_j)_{1\leq j\leq
  J}\in\N^{J'}$ as in Theorem~\ref{thr:4}. To prove
Theorems~\ref{thr:4} and~\ref{thr:18}, it suffices to
prove~\eqref{eq:18} for $(U_j)_{1\leq j\leq J}$ and $(U'_j)_{1\leq
  j\leq J'}$ non empty compact intervals which
we assume from now on. \\
Pick $L$ and $\ell$ such that $(2L+1)=(2\ell+1)(2\ell'+1)$, $c
L^\alpha\leq \ell\leq L^\alpha/c$ for some $\alpha\in(0,1)$ and
$c>0$. Pick $\varepsilon>0$ small. Partition
$\D\Lambda_L=\bigcup_{|\gamma|\leq\ell'}\Lambda_\ell(\gamma)$ where
$\Lambda_\ell(\gamma)=(2\ell+1)\gamma+\Lambda_\ell$. For
$\Lambda'\subset\Lambda$ and $U\subset\R$, consider the random
variables
\begin{gather*}
  X(E,U,\Lambda,\Lambda'):=
  \begin{cases}
    1&\text{if }H_\omega(\Lambda)\text{ has at least one eigenvalue in}\\
    & E+(\nu(E)|\Lambda|)^{-1}U
    \text{ with localization center in }\Lambda',\\
    0&\text{if not};
  \end{cases}\\\intertext{ if $\Lambda'=\Lambda$, we write
    $X(E,U,\Lambda):=X(E,U,\Lambda,\Lambda)$, and}
  \Sigma(E,U):=\sum_{|\gamma|\leq\ell'}
  X(E,U,\Lambda,\Lambda_\ell(\gamma)),\quad
  \Sigma(E,U,\ell):=\sum_{|\gamma|\leq\ell'}
  X(E,U,\Lambda_\ell(\gamma)).
\end{gather*}
We prove
\begin{Le}
  \label{le:16}
\begin{multline*}
  \left|\mathbb{P}\left(\left\{\omega;\
        \begin{split}
          &\#\{j;\xi_n(E,\omega,\Lambda)\in U_1\}=k_1\\
          &\quad\vdots\hskip3cm\vdots\\
          &\#\{j;\xi_n(E,\omega,\Lambda)\in U_J\}=k_J\\
          &\#\{j;\xi_n(E',\omega,\Lambda)\in U'_1\}=k'_1\\
          &\quad\vdots\hskip3cm\vdots\\
          &\#\{j;\xi_n(E',\omega,\Lambda)\in U_{J'}\}=k_{J'}
        \end{split}
      \right\} \right)-\mathbb{P}\left(\left\{\omega;\
        \begin{split}
          &\Sigma(E,U_1,\ell)=k_1
          \\&\quad\vdots\hskip1.5cm\vdots\\
          &\Sigma(E,U_J,\ell)=k_J\\
          &\Sigma(E',U'_1,\ell)=k'_1
          \\&\quad\vdots\hskip1.5cm\vdots\\
          &\Sigma(E',U_{J'},\ell)=k'_{J'}
        \end{split}
      \right\} \right)\right|\vers_{L\to+\infty}0,
\end{multline*}
\begin{multline*}
  \left|\mathbb{P}\left(\left\{\omega;\
        \begin{split}
          &\#\{j;\xi_n(E,\omega,\Lambda)\in U_1\}=k_1\\
          &\quad\vdots\hskip3cm\vdots\\
          &\#\{j;\xi_n(E,\omega,\Lambda)\in U_J\}=k_J
        \end{split}
      \right\} \right)- \mathbb{P}\left(\left\{\omega;\
        \begin{split}
          &\Sigma(E,U_1,\ell)=k_1,\\
          &\quad\vdots\hskip1.5cm\vdots\\
          &\Sigma(E,U_J,\ell)=k_J
        \end{split} \right\} \right)\right|\vers_{L\to+\infty}0
\end{multline*}
and
\begin{multline*}
  \left|\mathbb{P}\left(\left\{\omega;\
        \begin{split}
          &\#\{j;\xi_n(E',\omega,\Lambda)\in U'_1\}=k'_1\\
          &\quad\vdots\hskip3cm\vdots\\
          &\#\{j;\xi_n(E',\omega,\Lambda)\in U_{J'}\}=k_{J'}
        \end{split}
      \right\} \right)-\mathbb{P}\left(\left\{\omega;\
        \begin{split}
          &\Sigma(E',U'_1,\ell)=k'_1,\\
          &\quad\vdots\hskip1.5cm\vdots\\
          &\Sigma(E',U_{J'},\ell)=k'_{J'}
        \end{split}
      \right\} \right) \right|\vers_{L\to+\infty}0.
\end{multline*}
  
\end{Le}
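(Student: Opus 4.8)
The plan is to produce, for each $L$, an event $\mathcal{G}_L$ with $\P(\mathcal{G}_L)\to1$ as $L\to+\infty$ on which one has, for every $i$, simultaneously
\begin{equation*}
  \#\{j;\ \xi_n(E,\omega,\Lambda)\in U_i\}=\Sigma(E,U_i,\ell)
  \quad\text{and}\quad
  \#\{j;\ \xi_n(E',\omega,\Lambda)\in U'_i\}=\Sigma(E',U'_i,\ell).
\end{equation*}
Since two events that coincide on $\mathcal{G}_L$ have probabilities differing by at most $\P(\mathcal{G}_L^c)$, this yields the three asserted limits at once (the second and third follow from the first on keeping only the conditions at $E$, resp.\ at $E'$). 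Throughout, $\Lambda=\Lambda_L$, $\varepsilon>0$ is chosen small but fixed, and for a compact interval $U$ we write $J_L(E,U):=E+(\nu(E)|\Lambda_L|)^{-1}U$, an interval of length of order $L^{-d}$; thus $\#\{j;\ \xi_n(E,\omega,\Lambda)\in U\}=\#[\sigma(H_\omega(\Lambda_L))\cap J_L(E,U)]$, while $\Sigma(E,U,\ell)$ is the number of $\gamma$, $|\gamma|\leq\ell'$, for which $\sigma(H_\omega(\Lambda_\ell(\gamma)))\cap J_L(E,U)\not=\emptyset$.

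The event $\mathcal{G}_L$ will be an intersection of three events, each of probability $1-o(1)$. First, $g$ being compactly supported, a.s.\ all $\omega_\gamma$ lie in a fixed $[-K,K]$; I intersect with the event of Proposition~\ref{pro:1}, applied with a fixed exponent $p>d/\alpha$ simultaneously to $H_\omega(\Lambda_L)$, to the $O(L^d)$ cube operators $H_\omega(\Lambda_\ell(\gamma))$, and to the auxiliary cubes entering Lemma~\ref{le:7}. Since $\ell\geq cL^\alpha$, a union bound costs $O(L^{d-\alpha p})=o(1)$, and on this event the conclusions of Lemma~\ref{le:7} hold at all relevant scales. Second, set $\delta_L:=e^{-\nu\varepsilon\ell/8}$; using Wegner's estimate~\eqref{eq:1} (and Lemma~\ref{le:7}(3) to relocate eigenvalues lying near a partition–cube boundary onto small auxiliary cubes, on which Wegner is then applied), I remove the event that some one of these operators has an eigenvalue within $2\delta_L$ of an endpoint of a window $J_L(E,U_i)$ or $J_L(E',U'_i)$, or that some eigenvalue of $H_\omega(\Lambda_L)$ inside one of these windows has its localization center within distance $C\log L$ of a boundary between two partition cubes; the complement has probability at most $CL^d\delta_L+C(\log L)/\ell=o(1)$, because $\ell\geq cL^\alpha$ beats every polynomial and logarithmic factor. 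Third, by Minami's estimate~\eqref{eq:2} applied to each $H_\omega(\Lambda_\ell(\gamma))$, I remove the event that some $H_\omega(\Lambda_\ell(\gamma))$ carries two or more eigenvalues in a fixed slight enlargement of some window; the complement has probability at most $C\sum_{|\gamma|\leq\ell'}\big(\ell^dL^{-d}\big)^2\leq C(\ell/L)^d=o(1)$, and on this third event $\Sigma(E,U_i,\ell)=\sum_{|\gamma|\leq\ell'}\#[\sigma(H_\omega(\Lambda_\ell(\gamma)))\cap J_L(E,U_i)]$, and likewise at $E'$.

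On $\mathcal{G}_L$ I then match, for each fixed $i$, the set $\sigma(H_\omega(\Lambda_L))\cap J_L(E,U_i)$ bijectively with $\bigsqcup_{|\gamma|\leq\ell'}\big(\sigma(H_\omega(\Lambda_\ell(\gamma)))\cap J_L(E,U_i)\big)$. Every eigenvalue of $H_\omega(\Lambda_L)$ in $J_L(E,U_i)$ has a unique localization center, which by the second event lies well inside a unique partition cube; Lemma~\ref{le:7}(1)–(3), applied with its parameter suitably rescaled so that the ``enlarged'' cube is $\Lambda_\ell(\gamma)$, matches it with an eigenvalue of $H_\omega(\Lambda_\ell(\gamma))$ at distance at most $e^{-\nu\varepsilon\ell/4}\leq\delta_L$, hence still in $J_L(E,U_i)$ by the endpoint–avoidance event, while part (2) furnishes the inverse matching; the ``$J$ distinct eigenvalues'' clause in Lemma~\ref{le:7}(1) makes these matchings injective within a given cube, and different cubes receive images under different operators, so the map is a bijection. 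Therefore $\#\{j;\ \xi_n(E,\omega,\Lambda)\in U_i\}=\sum_{|\gamma|\leq\ell'}\#[\sigma(H_\omega(\Lambda_\ell(\gamma)))\cap J_L(E,U_i)]=\Sigma(E,U_i,\ell)$, where the last equality is the identity recorded on the Minami event; the identical argument at $E'$ gives the companion identity. Since $\P(\mathcal{G}_L^c)\to0$, the lemma follows.

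The step I expect to be the main obstacle is the second event above: keeping precise track of the eigenvalues of $H_\omega(\Lambda_L)$ whose localization centers fall near the boundaries between the partition cubes $\Lambda_\ell(\gamma)$, and making sure that the transport maps of Lemma~\ref{le:7}(1)–(3) neither create nor destroy eigenvalues inside the windows $J_L(E,U_i)$, $J_L(E',U'_i)$. This is exactly what the boundary–shell clause Lemma~\ref{le:7}(3), combined with one further Wegner estimate on the auxiliary cubes of side of order $\varepsilon\ell$ (of which there are only polynomially many), is designed to handle; since $\varepsilon$ is fixed and $\ell\geq cL^\alpha$, all the resulting errors $e^{-\nu\varepsilon\ell/c}$ dominate every polynomial factor in $L$.
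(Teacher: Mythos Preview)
Your approach is essentially the paper's, repackaged: the paper proves an auxiliary Lemma (its Lemma~3.15) giving, for each window $U$, the two probabilistic bounds
\[
  \P\big(\#\{n;\xi_n\in U\}\not=\Sigma(E,U)\big)\leq C\ell^dL^{-d}(|U|+1)^2+L^{-p},
  \qquad
  \P\big(\Sigma(E,U)\not=\Sigma(E,U,\ell)\big)\leq L^{-p}+C\varepsilon|U|,
\]
where $\Sigma(E,U)$ counts cubes $\Lambda_\ell(\gamma)$ containing a localization center of an eigenvalue of $H_\omega(\Lambda_L)$ in the window. The first bound is Minami on the enlarged cubes (your third event), the second goes through an intermediate quantity $\Sigma(E,U,\varepsilon)$ built on the enlarged cubes $\Lambda_{\ell(1+\varepsilon)}$ and uses Lemma~\ref{le:7}(1)--(3) plus Wegner on the boundary shells (your first and second events). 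Since the residual $C\varepsilon|U|$ does not vanish with $L$, the paper concludes by taking $\varepsilon$ arbitrarily small at the very end.

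Two points to tighten. First, you declare ``$\varepsilon>0$ small but fixed'' and then work with a boundary shell of width $C\log L$; these are incompatible. What you actually need is Lemma~\ref{le:7} with $\varepsilon=\varepsilon_L:=C(\log L)/\ell$, which is allowed (the proof of Lemma~\ref{le:7} only uses $L^qe^{-\nu\varepsilon\ell/4}\to0$), and then your bound $C(\log L)/\ell$ for the boundary event is correct and you avoid the paper's final $\varepsilon\to0$ limit. Either commit to this $L$-dependent $\varepsilon$, or keep $\varepsilon$ fixed, accept a bound $C\varepsilon$ for the boundary event, and let $\varepsilon\to0$ as the paper does. Second, your bijection argument is clean in the forward direction, but for surjectivity you invoke Lemma~\ref{le:7}(2) on an eigenvalue of $H_\omega(\Lambda_\ell(\gamma))$ without first excluding that \emph{its} localization center sits in the boundary shell of $\Lambda_\ell(\gamma)$; you only removed this event for eigenvalues of $H_\omega(\Lambda_L)$. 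Add the symmetric exclusion for the box operators (same Wegner-on-auxiliary-cubes bound, same $O((\log L)/\ell)$ cost), and the inverse matching goes through.
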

\begin{proof}
  We first prove
  \begin{Le}
    \label{le:15}
    For any $p>0$ and $\varepsilon>0$, there exists $C>0$ such that,
    for $U$ a compact interval and $L$ sufficiently large, one has
    \begin{equation}
      \label{eq:24}
      \P\left(\left\{\omega;\ \#\{n;\xi_n(E,\omega,\Lambda)\in
          U\}\not= \Sigma(E,U)\right\} \right)\leq C\ell^d
      L^{-d}(|U|+1)^2+L^{-p}.
    \end{equation}
    and
    \begin{equation}
      \label{eq:26}
      \P\left(\left\{\omega;\ \Sigma(E,U)\not=\Sigma(E,U,\ell)\right\}
      \right) \leq L^{-p}+C\,\varepsilon\,|U|.
    \end{equation}
  \end{Le}
  \begin{proof}[Proof of Lemma~\ref{le:15}]
    As $\D\Lambda=\bigcup_{|\gamma|\leq\ell'}\Lambda_\ell(\gamma)$ and
    these sets are two by two disjoint, the quantities
    $\#\{n;\xi_n(E,\omega,\Lambda)\in U\}$ and $\Sigma(E,U)$ differ if
    and only if, for some $|\gamma|\leq\ell'$, $H_\omega(\Lambda)$ has
    at least two eigenvalues in $E+(\nu(E)|\Lambda|)^{-1}U$ with
    localization center in $\Lambda_\ell(\gamma)$. By
    Lemma~\ref{le:7}, this implies that, except on a set of
    probability at most $L^{-p}$,
    $H_\omega((2\ell+1)\gamma+\Lambda_{2\ell})$ has at least two
    eigenvalues in $U+[-e^{-\nu\ell/4},e^{-\nu\ell/4}]$. Thus, by
    Minami's estimate~\eqref{eq:2}, this happens with a probability at
    most $C\ell^{2d}L^{-2d}(|U|+1)^2+L^{-p}$. Summing this estimate
    over all
    the possible $\gamma$'s, we complete the proof of~\eqref{eq:24}.\\
    The proof of~\eqref{eq:26} is split into two steps. Define
    \begin{equation*}
      \Sigma(E,U,\varepsilon)=\sum_{|\gamma|\leq\ell'}
      X(E,U,(2\ell+1)\gamma+\Lambda_{\ell(1+\varepsilon)},
      (2\ell+1)\gamma+\Lambda_{\ell(1-\varepsilon)}).
    \end{equation*}
    Then, we successively prove
    \begin{gather}
      \label{eq:20}
      \P\left(\left\{\omega;\ \Sigma(E,U)\not=
          \Sigma(E,U,\varepsilon)\right\} \right) \leq
      L^{-p}+C\,\varepsilon\,|U| \\\intertext{and} \label{eq:28}
      \P\left(\left\{\omega;\ \Sigma(E,U,\varepsilon)
          \not=\Sigma(E,U,\ell) \right\} \right) \leq
      L^{-p}+C\,\varepsilon\,|U|
    \end{gather}
    which implies~\eqref{eq:26}.\\
    To prove~\eqref{eq:20}, we note that, by Lemma~\ref{le:7}, except
    on a set of probability at most $L^{-p}$, $\Sigma(E,U)$ and
    $\Sigma(E,U,\varepsilon)$ differ if and only if, for some
    $|\gamma|\leq\ell'$, one has
    \begin{enumerate}
    \item either $ \sigma(H_\omega(\Lambda))\cap\delta\tilde
      U\not=\emptyset$,
    \item or $\sigma(H_\omega((2\ell+1)\gamma+
      \Lambda_{\ell(1+\varepsilon)}))\cap \delta\tilde
      U\not=\emptyset$,
    \item or $H_\omega((2\ell+1)\gamma+\Lambda_{\ell(1+\varepsilon)})$
      has an eigenvalue in $\tilde U$ with a localization center in
      the cube $(2\ell+1)\gamma+(\Lambda_{\ell(1+\varepsilon)}
      \setminus\Lambda_{\ell(1-\varepsilon)})$
    \end{enumerate}
    where $\tilde U=E+(\nu(E)|\Lambda|)^{-1}U+
    e^{-\nu\varepsilon\ell/8}[-1,1]$ and
    $\delta\tilde U= \tilde U\setminus (E+(\nu(E)|\Lambda|)^{-1}U)$.\\
    The probability of alternatives (1) and (2) is estimated using the
    Wegner estimate~\eqref{eq:1}. It is bounded by $2L^d
    e^{-\nu\varepsilon\ell/8}\leq L^{-p}$ for $L$ sufficiently
    large.\\
    By point (3) of Lemma~\ref{le:7}, except on a set of probability
    at most $L^{-p}$, alternative (3) implies that, for some
    $\beta\in\gamma+(\Lambda_{\ell(1+\varepsilon/2)}
    \setminus\Lambda_\ell)$, the operator
    $H_{\omega}(\beta+\Lambda_{\varepsilon\ell/4})$ has an eigenvalue
    in
    $E+(\nu(E)|\Lambda|)^{-1}U+e^{-\nu\varepsilon\ell/8}[-1,1]$. The
    number of possible $\beta$'s is bounded by
    $C\varepsilon\ell^d\varepsilon^{-d}\ell^{-d}=C\varepsilon^{1-d}$. Using
    Wegner's estimate~\eqref{eq:1} and summing over the possible
    $\beta$'s, this probability is bounded by
    $C\varepsilon^{1-d}(\varepsilon\ell/L)^d|U|+L^{-p}\leq
    C\varepsilon(\ell/L)^d|U|+L^{-p}$. Finally, we sum this over all
    possible $\gamma$'s to obtain that the probability that
    alternative (3) holds for some $\gamma$ is bounded by
    $C\varepsilon|U|+L^{-p}$.
    This yields~\eqref{eq:20}.\\
    To prove~\eqref{eq:28}, the reasoning is similar. By
    Lemma~\ref{le:7}, except on a set of probability at most $L^{-p}$,
    $\Sigma(E,U,\ell)$ and $\Sigma(E,U,\varepsilon)$ differ if and
    only if, for some $|\gamma|\leq\ell'$, one has
    \begin{enumerate}
    \item either
      $\sigma(H_\omega((2\ell+1)\gamma+\Lambda_{\ell(1+\varepsilon)}))\cap
      \tilde U\not=\emptyset$,
    \item or $\sigma(H_\omega(\Lambda_{\ell}(\gamma)))\cap \tilde
      U\not=\emptyset$,
    \item or $H_\omega((2\ell+1)\gamma+\Lambda_{\ell(1+\varepsilon)})$
      has an eigenvalue in $\tilde U$ with localization center in
      the cube $(2\ell+1)\gamma+(\Lambda_{\ell(1+\varepsilon)}
      \setminus\Lambda_{\ell(1-\varepsilon)})$.
    \item or $H_\omega(\Lambda_{\ell}(\gamma))$ has an eigenvalue in
      $\tilde U$ with localization center in
      $(2\ell+1)\gamma+(\Lambda_{\ell}
      \setminus\Lambda_{\ell(1-\varepsilon)})$.
    \end{enumerate}
    Following the same steps as in the proof of~\eqref{eq:20}, we
    obtain~\eqref{eq:28}. We omit further details.\\
    This completes the proof of Lemma~\ref{le:15}.
  \end{proof}
  \noindent As $\varepsilon>0$ can be chosen arbitrarily small and $J$
  and $J'$ are finite and fixed, Lemma~\ref{le:15} clearly implies
  Lem\-ma~\ref{le:16}.
\end{proof}
\noindent In view of Theorem~\ref{thr:3} and Lemma~\ref{le:16}, to
prove~\eqref{eq:18}, it suffices to prove that, in the limit
$L\to+\infty$, the difference between the following quantities vanishes
%
%
\begin{equation}
  \label{eq:47}
  \mathbb{P}\left(\left\{\omega;\
      \begin{split}
        &\Sigma(E,U_1,\ell)=k_1,\cdots,
        \Sigma(E,U_J,\ell)=k_J\\
        &\Sigma(E',U'_1,\ell)=k'_1,\cdots,
        \Sigma(E',U_{J'},\ell)=k'_{J'}
      \end{split}
    \right\} \right)
\end{equation}
and
\begin{equation}
  \label{eq:27}
  \mathbb{P}\left(\left\{\omega;\
      \begin{split}
        &\Sigma(E,U_1,\ell)=k_1,\\
        &\quad\vdots\hskip1.5cm\vdots\\
        &\Sigma(E,U_J,\ell)=k_J
      \end{split} \right\} \right)\, \mathbb{P}\left(\left\{\omega;\
      \begin{split}
        &\Sigma(E',U'_1,\ell)=k'_1,\\
        &\quad\vdots\hskip1.5cm\vdots\\
        &\Sigma(E',U_{J'},\ell)=k'_{J'}
      \end{split}
    \right\} \right).
\end{equation}
Both terms in~\eqref{eq:47} and~\eqref{eq:27} define probability
measures on $\N^{J+J'}$.  By Theorems~\ref{thr:3} and
Lemma~\ref{le:16}, we know that the limit of the term in~\eqref{eq:27}
also defines a probability measure on $\N^{J+J'}$. Thus, by standard
results on the convergence of probability measures (see
e.g.~\cite{MR1700749}), the difference of~\eqref{eq:47}
and~\eqref{eq:27} vanishes in the limit $L\to+\infty$ if and only if,
for any $(t_j)_{1\leq j\leq J}$ and $(t_{j'})_{1\leq j'\leq J'}$ real,
in the limit $L\to+\infty$, the following quantity vanishes
\begin{multline*}
  \mathbb{E}\left(e^{-\sum_{j=1}^Jt_j\,\Sigma(E,U_j,\ell)-
      \sum_{j'=1}^{J'}t_{j'}\,\Sigma(E',U_{j'},\ell)}\right)\\
  -\mathbb{E}\left(e^{-\sum_{j=1}^Jt_j\,\Sigma(E,U_j,\ell)}\right)\,
  \mathbb{E}\left(e^{-\sum_{j'=1}^{J'}t_{j'}\,\Sigma(E',U_{j'},\ell)}\right).
\end{multline*}
Note that, as the sets $(\Lambda_\ell(\gamma))_{|\gamma|\leq\ell'}$
are two by two disjoint and translates of each other, for a fixed $U$,
the random variables
$(X(E,U,\Lambda_\ell(\gamma))_{|\gamma|\leq\ell'}$ are
i.i.d. Bernoulli random variables. Thus,
\begin{multline*}
  \mathbb{E}\left(e^{-\sum_{j=1}^Jt_j\,\Sigma(E,U_j,\ell)-
      \sum_{j'=1}^{J'}t_{j'}\,\Sigma(E',U_{j'},\ell)}\right)\\
  =\prod_{|\gamma|\leq\ell'}
  \mathbb{E}\left(e^{-\sum_{j=1}^Jt_j\,X(E,U_j,\Lambda_\ell(\gamma))-
      \sum_{j'=1}^{J'}t_{j'}\,X(E',U_{j'},\Lambda_\ell(\gamma))}\right).
\end{multline*}
The Minami estimate~\eqref{eq:2} and the decorrelation
estimates~\eqref{eq:8} and~\eqref{eq:16} of Lemmas~\ref{le:1}
and~\ref{le:9} guarantee that, for any $\rho\in(0,1)$, one has, for
some $C>0$ independent of $\gamma$,
\begin{equation}
  \label{eq:48}
  \begin{split}
    &\sup_{1\leq j<\tilde j\leq J}\mathbb{P}\left(
      \begin{split}
        X(E,U_j,\Lambda_\ell(\gamma))&=1\\
        X(E,U_{\tilde j},\Lambda_\ell(\gamma))&=1
      \end{split}\right)\\
    &\hskip1cm+ \sup_{1\leq j'<\tilde j'\leq J'}\mathbb{P}\left(
      \begin{split}
        X(E',U_{j'},\Lambda_\ell(\gamma))&=1\\
        X(E',U_{\tilde j'},\Lambda_\ell(\gamma))&=1
      \end{split}\right)\\&\hskip2cm
    + \sup_{\substack{1\leq j\leq J\\1\leq j'\leq J'}}
    \mathbb{P}\left(
      \begin{split}
        X(E,U_j,\Lambda_\ell(\gamma))&=1\\
        X(E',U_{j'},\Lambda_\ell(\gamma))&=1
      \end{split}\right)
    \leq C \left(\frac{\ell}{L}\right)^{d(1+\rho)}.
  \end{split}
\end{equation}
Using this, we compute
\begin{equation}
  \label{eq:28}
  \begin{split}
    &\mathbb{E}\left(e^{-\sum_{j=1}^Jt_j\,X(E,U_j,\Lambda_\ell(\gamma))-
        \sum_{j'=1}^{J'}t_{j'}\,X(E',U_{j'},\Lambda_\ell(\gamma))}\right)\\
    &=1+\sum_{j=1}^J(e^{-t_j}-1)\cdot
    \mathbb{P}(X(E,U_j,\Lambda_\ell(\gamma))=1)\\&\hskip.5cm+
    \sum_{j'=1}^{J'}(e^{-t_{j'}}-1)\cdot
    \mathbb{P}(X(E',U_{j'},\Lambda_\ell(\gamma))=1)
    +O\left(\left(\frac{\ell}{L}\right)^{d(1+\rho)}\right).
  \end{split}
\end{equation}
Here, the term $O((\ell/L)^{d(1+\rho)})$ is uniform in $\gamma$.\\
On the other hand, one has
\begin{equation}
  \label{eq:49}
  \begin{split}
    \mathbb{E}\left(e^{-t_j\,X(E,U_j,\Lambda_\ell(\gamma))}\right)
    =1+(e^{-t_j}-1)\cdot \mathbb{P}(X(E,U_j,\Lambda_\ell(\gamma))=1), \\
    \mathbb{E}\left(e^{-t_{j'}\,X(E',U_{j'},\Lambda_\ell(\gamma))}\right)
    =1+(e^{-t_{j'}}-1)\cdot \mathbb{P}(X(E,U_j,\Lambda_\ell(\gamma))=1).
  \end{split}
\end{equation}
By the Wegner estimate~\eqref{eq:1}, we know that
\begin{equation}
  \label{eq:46}
  \sup_{\substack{1\leq j\leq J\\1\leq j'\leq J'}}
  \left[\mathbb{P}(X(E,U_j,\Lambda_\ell(\gamma))=1)+
  \mathbb{P}(X(E',U_{j'},\Lambda_\ell(\gamma))=1)\right]\leq
  C\left(\frac{\ell}{L}\right)^d.    
\end{equation}
Thus, by~\eqref{eq:28}, we have
\begin{multline*}
  \mathbb{E}\left(e^{-\sum_{j=1}^Jt_j\,X(E,U_j,\Lambda_\ell(\gamma))-
      \sum_{j'=1}^{J'}t_{j'}\,X(E',U_{j'},\Lambda_\ell(\gamma))}\right)
  \\=\prod_{j=1}^J\mathbb{E}\left(e^{-t_j\,X(E,U_j,\Lambda_\ell(\gamma))}
  \right)\prod_{j'=1}^{J'}
  \mathbb{E}\left(e^{-t_{j'}\,X(E',U_{j'},\Lambda_\ell(\gamma))}\right)
  \left[1+O\left(\left(\ell L^{-1}\right)^{d(1+\rho)}\right)\right].
\end{multline*}
In the same way, one proves
\begin{equation}
  \label{eq:53}
  \begin{split}
    \mathbb{E}\left(e^{-\sum_{j=1}^Jt_j\,X(E,U_j,\Lambda_\ell(\gamma))}\right)
    =\prod_{j=1}^J\mathbb{E}\left(e^{-t_j\,X(E,U_j,\Lambda_\ell(\gamma))}
    \right)\left[1+O\left(\left(\ell
          L^{-1}\right)^{d(1+\rho)}\right)\right], \\
    \mathbb{E}\left(e^{-\sum_{j'=1}^{J'}t_{j'}\,X(E',U_{j'},\Lambda_\ell(\gamma))}\right)
    =\prod_{j'=1}^{J'}
    \mathbb{E}\left(e^{-t_{j'}\,X(E',U_{j'},\Lambda_\ell(\gamma))}\right)
    \left[1+O\left(\left(\ell L^{-1}\right)^{d(1+\rho)}\right)\right].
  \end{split}
\end{equation}
As $\#\{|\gamma|\leq\ell'\}\leq C(L\ell^{-1})^d$, we obtain that
\begin{multline}
  \label{eq:55}
  \mathbb{E}\left(e^{-\sum_{j=1}^Jt_j\,\Sigma(E,U_j,\ell)-
      \sum_{j'=1}^{J'}t_{j'}\,\Sigma(E',U_{j'},\ell)}\right)\\
  =\mathbb{E}\left(e^{-\sum_{j=1}^Jt_j\,\Sigma(E,U_j,\ell)}\right)\,
  \mathbb{E}\left(e^{-\sum_{j'=1}^{J'}t_{j'}\,\Sigma(E',U_{j'},\ell)}\right)
  \left[1+O\left(\left(\ell L^{-1}\right)^{d\rho}\right)\right].
\end{multline}
Finally, note that~\eqref{eq:49},~\eqref{eq:46} and~\eqref{eq:53}
imply that, for any $(t_j)_{1\leq j\leq J}$ and $(t_{j'})_{1\leq
  j'\leq J'}$, one has
\begin{equation*}
  \sup_{L\geq1}\left[
  \mathbb{E}\left(e^{-\sum_{j=1}^Jt_j\,\Sigma(E,U_j,\ell)}\right)+
  \mathbb{E}\left(e^{-\sum_{j'=1}^{J'}t_{j'}\,\Sigma(E',U_{j'},\ell)}\right)
  \right]<+\infty.
\end{equation*}
Hence, by~\eqref{eq:55}, as $c L^\alpha\leq \ell\leq L^\alpha/c$ for
some $\alpha\in(0,1)$, we obtain that
\begin{multline*}
  \mathbb{E}\left(e^{-\sum_{j=1}^Jt_j\,\Sigma(E,U_j,\ell)-
      \sum_{j'=1}^{J'}t_{j'}\,\Sigma(E',U_{j'},\ell)}\right)\\
  -\mathbb{E}\left(e^{-\sum_{j=1}^Jt_j\,\Sigma(E,U_j,\ell)}\right)\,
  \mathbb{E}\left(e^{-\sum_{j'=1}^{J'}t_{j'}\,\Sigma(E',U_{j'},\ell)}\right)
  \vers_{L\to+\infty}0.
\end{multline*}
This completes the proof of Theorems~\ref{thr:4} and~\ref{thr:18}.
\begin{Rem}
  \label{rem:4}
  The basic idea we used here is to split the cube $\Lambda$ into
  smaller two-by-two disjoint cubes $(\Lambda_\gamma(\ell))_\gamma$ in
  such a way that, up to exponentially small errors, the eigenvalues
  of $H_\omega(\Lambda)$ can be represented as eigenvalues for
  $H_\omega(\Lambda_\gamma(\ell))$ and that they are independent of
  each other. In~\cite{Ge-Kl:10} (see also~\cite{Ge-Kl:10b} for a
  review of the results), this idea is exploited thoroughly to study
  the eigenvalue statistics for random operators in the localized
  regime.
\end{Rem}
\section{Proof of Proposition~\ref{pro:1}}
\label{sec:proof-proposition}
Let $I$ be a compact subset of the region of localization i.e. the
region of $\Sigma$ where the finite volume fractional moment criteria
of~\cite{MR2002h:82051} for $H_\omega(\Lambda)$ are verified for
$\Lambda$ sufficiently large. Then, by (A.6) of~\cite{MR2002h:82051},
we know that there exists $\alpha>0$ such that, for any $F\subset I$,
$\forall(x,y)\subset\Lambda^2$, one has
\begin{equation}
  \label{eq:3}
  \esp(|\mu_{\omega,\Lambda}^{x,y}|(F))\leq C e^{-\alpha|x-y|}.
\end{equation}
where $\mu_{\omega,\Lambda}^{x,y}$ denotes the spectral measures of
$H_\omega(\Lambda)$ associated to the vector $\delta_x$ and
$\delta_y$. In particular, if $F$ contains a single eigenvalue of
$H_\omega(\Lambda)$, say $E$, that is simple and associated to the
normalized eigenvector, say, $\varphi$ then
\begin{equation}
  \label{eq:4}
  |\mu_{\omega,\Lambda}^{x,y}|(F)=|\varphi(x)|\,|\varphi(y)|.
\end{equation}
Pick $\varepsilon$ and $\delta$ positive such that
$\varepsilon|\Lambda|^2=\delta/K$ for some large $K$ to be chosen
below. Then, partition $I=\cup_{1\leq n\leq N}I_n$ into intervals
$(I_n)_n$ of length $\varepsilon$. By Minami's estimate, one has
\begin{equation*}
  \P(\{\omega;\ \exists n\text{ s.t. }I_n\text{ contains 2 e.v. of
  }H_\omega(\Lambda)\})\leq C\delta|I|/K\leq \delta/2
\end{equation*}
if $C|I|/K\leq1/2$. Pick $K$ so that this be satisfied.\\
We now apply~\eqref{eq:3} to $F=I_n$ for $1\leq n\leq N$ and sum the
results for $s<\alpha$ to get
\begin{equation*}
  \forall y\in\Lambda,\quad \esp\left(\sum_n\sum_{x\in\Lambda}
    e^{s|x-y|}|\mu_{\omega,\Lambda}^{x,y}|(I_n)\right)\leq
  C|I|\varepsilon^{-1}.
\end{equation*}
Hence, by Markov's inequality,
\begin{equation*}
  \P\left(\sum_n\sum_{(x,y)\in\Lambda^2}
    e^{s|x-y|}|\mu_{\omega,\Lambda}^{x,y}|(I_n)\geq
    \frac{\tilde C|\Lambda|\,|I|}{\delta\varepsilon}\right)
  \leq \delta/2.
\end{equation*}
Thus, using the relation between $\delta$ and $\varepsilon$, with a
probability larger than $1-\delta$, we know that
\begin{enumerate}
\item each interval $I_n$ contain at most a single eigenvalue, say,
  $E_n$ associated to the normalized eigenfunction, say, $\varphi_n$;
\item by~\eqref{eq:4}, one has
  \begin{equation*}
    \forall(x,y)\in\Lambda^2,\quad
    |\varphi_n(x)|\,|\varphi_n(y)|\leq
    \frac{C|\Lambda|^3e^{-s|x-y|}}{\delta^2}.
  \end{equation*}
\end{enumerate}
As $\varphi_n$ is normalized, if $x_n$ is a maximum of
$x\mapsto|\varphi_n(x)|$, one has
\begin{equation*}
  |\varphi_n(x_n)|\geq|\Lambda|^{-1/2},
\end{equation*}
thus,
\begin{equation*}
  \forall x\in\Lambda,\quad
  |\varphi_n(x)|\leq
  C|\Lambda|^{7/2}\delta^{-2}e^{-s|x-x_n|}.
\end{equation*}
This yields Proposition~\ref{pro:1} if one picks $\delta=L^{-p}$ when
$\Lambda=\Lambda_L$.\qed
%
%
\def\cprime{$'$} \def\cydot{\leavevmode\raise.4ex\hbox{.}} \def\cprime{$'$}

\end{document}